\newtheorem{thm}{Theorem}[section]
\newtheorem{lma}[thm]{Lemma}
\newtheorem{cor}[thm]{Corollary}
\newtheorem{conj}{Conjecture}
\newtheorem{prop}[thm]{Proposition}
\newtheorem{lblclaim}{Claim}
\newtheorem*{adef}{Definition}
\DeclareMathOperator{\aut}{aut}
\DeclareMathOperator{\Clique}{\#Clique}
\DeclareMathOperator{\ColSubInd}{\#ColSubInd}
\DeclareMathOperator{\SubInd}{\#SubInd}
\DeclareMathOperator{\StrEmb}{\#StrEmb}
\DeclareMathOperator{\ColStrEmb}{\#ColStrEmb}
\begin{document}

\newcommand{\leqfptT}{ $\leq^{\textup{fpt}}_{\textup{T}}$ }
\newcommand{\leqfptP}{ $\leq^{\textup{fpt}}_{\textup{pars}}$ }
\newcommand{\paramcount}[1]{\textup{\textbf{p-\#}}\textsc{#1}}
\newcommand{\paramdec}[1]{\textup{\textbf{p-}}\textsc{#1}}
\newcommand{\genproblong}{Induced Subgraph With Property}
\newcommand{\genprob}{ISWP}
\newcommand{\genprobcollong}{Multicolour Induced Subgraph with Property}
\newcommand{\genprobcol}{MISWP}

\title{The challenges of unbounded treewidth in parameterised subgraph counting problems \thanks{Research supported by EPSRC grant ``Computational Counting''}}
\author{Kitty Meeks \\
\small{School of Mathematical Sciences, Queen Mary University of London \thanks{Present address: School of Mathematics and Statistics, University of Glasgow, 15 University Gardens, Glasgow G12 8QW, UK; \texttt{\small{kitty.meeks@glasgow.ac.uk}}}} \\ 
}
\date{}
\maketitle

\begin{abstract}
Parameterised subgraph counting problems are the most thoroughly studied topic in the theory of parameterised counting, and there has been significant recent progress in this area.  Many of the existing tractability results for parameterised problems which involve finding or counting subgraphs with particular properties rely on bounding the treewidth of these subgraphs in some sense; here, we prove a number of hardness results for the situation in which this bounded treewidth condition does \emph{not} hold, resulting in dichotomies for some special cases of the general subgraph counting problem.  The paper also gives a thorough survey of known results on this subject and the methods used, as well as discussing the relationships both between multicolour and uncoloured versions of subgraph counting problems, and between exact counting, approximate counting and the corresponding decision problems.
\end{abstract}

\section{Introduction}

The field of parameterised counting complexity, first introduced by Flum and Grohe in \cite{flum04}, has received considerable attention from the theoretical computer science community in recent years; by far the most thoroughly studied family of parameterised counting problems are so-called subgraph counting problems, informally those problems which can be phrased as follows:
\\

\hangindent=0.6cm
\textit{Input:} An $n$-vertex graph $G = (V,E)$, and $k \in \mathbb{N}$. \\
\textit{Parameter:} $k$. \\
\textit{Question:} How many (labelled) $k$-vertex subsets of $V$ induce graphs with a given property?
\\

It should be noted that, although this description refers to induced subgraphs, problems in which the goal is to count copies of a given graph $H$ that are not necessarily induced can just as well be formulated in this manner: it suffices to define the property we are looking for to be ``contains $H$ as a subgraph.''

The aims of this paper are two-fold.  Firstly, we give a thorough survey of existing results relating to these problems, the methods used, and the relationships between the complexities of different problem-variants.  Secondly, we extend a method based on treewidth (previously used in the literature to demonstrate the intractability of various problems involving homomorphisms and coloured subgraphs) to give some general hardness results for exact and approximate subgraph counting problems, and also the related decision versions of these problems.

A formal model for subgraph counting problems was introduced in \cite{connected}, and this is the model considered here.  The full definition of this model, and its multicolour variant (in which the input graph is equipped with a vertex-colouring and we wish to count only those subgraphs in which each vertex has a distinct colour, as defined in \cite{bddlayers}) is given in Section \ref{model}, along with a discussion of the relationship between this model and other specific problems.

Many problems falling within the scope of this model have previously been studied in the literature, and there are a number of results concerning the intractability of exact counting for specific problems (a detailed discussion of existing results is given in Section \ref{existing}). Some general criteria which are sufficient to imply the intractability of a problem of this kind are also known (see, for example, \cite{bddlayers}), and very recently a dichotomy result for an important subcase of the model was proved by Curticapean and Marx \cite{radu14}, in which nearly all problems are shown to be intractable from the point of view of parameterised complexity.  The only tractable problems in this particular case are those for which, in order to determine whether a given $k$-vertex subset induces a graph with the desired property, it suffices to examine the edges incident with at most $c$ vertices in this subset, where $c$ is some fixed constant that does not depend on $k$; it is clear that problems of this kind can be solved in polynomial time (moreover, the simple algorithm given in \cite{radu14} can easily be extended to the more general family of problems considered in this paper).

The many intractability results for solving these parameterised counting problems exactly motivate the study of approximate counting in this setting.  The first approximation algorithm for a parameterised counting problem was given by Arvind and Raman in \cite{arvind02}, and their method has recently been extended to solve a larger family of subgraph counting problems \cite{connected}.  Both of these positive approximation results apply to problems in which the minimal graphs that satisfy the property in question have bounded treewidth; a number of positive results for the decidability of subgraph problems also exploit treewidth in a similar manner.  

In Section \ref{new} we derive a number of new intractability results which apply in the case that minimal graphs which satisfy the property under consideration do not have treewidth bounded by any constant.  Using a construction that exploits Robertson and Seymour's Excluded Grid Theorem \cite{robertson86},\footnote{This result has previously been used along the same lines in different settings by a number of authors; a discussion of the relationship of our methods to previous work is given in Section \ref{excluded-grid}.} we show that, in this particular setting:
\begin{enumerate}
\item exact counting is intractable,\footnote{This generalises a result of Curticapean and Marx \cite{radu14}, published since the writing of an initial version of this paper, which proves a special case of this result.}
\item decision is intractable in the multicolour setting, and
\item again in the multicolour setting, approximate counting is intractable.
\end{enumerate}
The last two intractability results, when combined with the existing algorithms, give rise to dichotomy results for restricted versions of the problems.

The remainder of the paper is organised as follows.  In Section \ref{prelim} we provide the necessary technical background, before giving a thorough analysis of existing results and methods in Section \ref{landscape}, as well as discussing the relationships between the various problem variants that have previously been studied.  Section \ref{new} then contains the proof of our new complexity results, and finally in Section \ref{future} we review the limitations of the existing methods and suggest directions for future research.\footnote{For a stand-alone survey, the reader may omit Section \ref{new}.}

\section{Preliminaries}
\label{prelim}

In this section, we cover the key background material which will be required throughout the rest of the paper.  We begin in Section \ref{notation} by summarising the main definitions used, and then review the key concepts we need from the theory of parameterised counting complexity in Section \ref{param-cplxty}.  In Section \ref{model} we then give the formal definition of the general model we use for subgraph counting problems, and also discuss some variants and special cases of this general problem and their relationship to specific problems studied elsewhere.

\subsection{Notation and Definitions}
\label{notation}

In this section we introduce the key terms that will be used in the remainder of the paper.

\subsubsection*{Graphs, isomorphisms, subgraphs and minors}

Throughout, all graphs are assumed to be simple, that is they do not have multiple edges or self-loops.  The order of a graph $G$, written $|G|$, is the number of vertices in $G$.  Given a graph $G = (V,E)$, and a subset $U \subseteq V$, we write $G[U]$ for the subgraph of $G$ induced by the vertices of $U$.  For any $k \in \mathbb{N}$, we write $[k]$ as shorthand for $\{1,\ldots,k\}$, and denote by $S_k$ the set of all permutations on $[k]$, that is, injective functions from $[k]$ to $[k]$.  We write $V^{(k)}$ for the set of all subsets of $V$ of size exactly $k$, and $V^{\underline{k}}$ for the set of $k$-tuples $(v_1,\ldots,v_k) \in V^k$ such that $v_1,\ldots,v_k$ are all distinct. 

Two graphs $G$ and $H$ are \emph{isomorphic}, written $G \cong H$, if there exists a bijection $\theta: V(G) \rightarrow V(H)$ so that, for all $u,v \in V(G)$, we have $\theta(u)\theta(v) \in E(H)$ if and only if $uv \in E(G)$; $\theta$ is said to be an \emph{isomorphism} from $G$ to $H$.  An \emph{automorphism} on $G$ is an isomorphism from $G$ to itself.  We write $\aut(G)$ for the number of automorphisms of $G$.  A \emph{subgraph} of $G$ is a graph $H$ which is isomorphic to a graph obtained from $G$ by possibly deleting vertices and/or edges; a \emph{spanning} subgraph is obtained by deleting edges only.  $H$ is a \emph{proper subgraph} of $G$ if $H$ is a subgraph of $G$ but is not isomorphic to $G$.

The notion of a subgraph is generalised by the definition of a \emph{graph minor}.  The graph $H$ is said to be a \emph{minor} of $G$ if there is a mapping $m: V(H) \rightarrow \mathcal{P}(V(G))$, mapping vertices of $H$ to sets of vertices from $G$, satisfying:
\begin{enumerate}
\item for all $u \in V(H)$, $G[m(u)]$ is connected,
\item for $u,v \in V(H)$ with $u \neq v$, $m(u) \cap m(v) = \emptyset$, and
\item for all $uv \in E(H)$, there exist $u' \in m(u)$ and $v' \in m(v)$ such that $u'v' \in E(G)$.
\end{enumerate}
Such a mapping $m$ is known as a \emph{minor map} from $H$ to $G$.

\subsubsection*{Coloured and labelled graphs}

If $G$ is coloured by some colouring $\omega: V \rightarrow [k]$ (not necessarily a proper colouring), we say that a subset $U \subseteq V$ is \emph{colourful} (under $\omega$) if, for every $i \in [k]$, there exists exactly one vertex $u \in U$ such that $\omega(u) = i$; note that this can only be achieved if $U \in V^{(k)}$.

We will be considering labelled graphs, where a labelled graph is a pair $(H, \pi)$ such that $H$ is a graph and $\pi : [|V(H)|] \rightarrow V(H)$ is a bijection.  We denote by $\mathcal{L}(k)$ the set of all labelled subgraphs on $k$ vertices.  

For labelled graphs $(H,\pi),(H',\pi') \in \mathcal{L}(k)$, we say that $(H,\pi)$ is a \emph{labelled subgraph} of $(H',\pi')$ if and only if, for every $uv \in E(H)$, we have $u'v' \in E(H')$, where $u' = \pi'(\pi^{-1}(u))$ and $v' = \pi'(\pi^{-1}(v))$.

Given a graph $G = (V,E)$ and a $k$-tuple of vertices $(v_1,\ldots,v_k)$, we write $G[v_1,\ldots,v_k]$ for the labelled graph $(H,\pi)$ where $H = G[\{v_1,\ldots,v_k\}]$ and $\pi(i) = v_i$ for each $i \in [k]$.  For any set $\mathcal{H}$ of labelled graphs with $(H,\pi) \in \mathcal{H}$, we set $\mathcal{H}^H = \{(H',\pi') \in \mathcal{H}: H' \cong H\}$.

\subsubsection*{Treewidth}

A key concept in this paper is the \emph{treewidth} of a graph.  We say that $(T,\mathcal{D})$ is a \emph{tree decomposition} of $G$ if $T$ is a tree and $\mathcal{D} = \{\mathcal{D}(t): t \in V(T)\}$ is a collection of non-empty subsets of $V(G)$ (or \emph{bags}), indexed by the nodes of $T$, satisfying:
\begin{enumerate}
\item $V(G) = \bigcup_{t \in V(T)} \mathcal{D}(t)$,
\item for every $e=uv \in E(G)$, there exists $t \in V(T)$ such that $u,v \in \mathcal{D}(t)$,
\item for every $v \in V(G)$, if $T(v)$ is defined to be the subgraph of $T$ induced by nodes $t$ with $v \in \mathcal{D}(t)$, then $T(v)$ is connected.
\end{enumerate}
The \emph{width} of the tree decomposition $(T,\mathcal{D})$ is defined to be $\max_{t \in V(T)} |\mathcal{D}(t)| - 1$, and the \emph{treewidth} of $G$ is the minimum width over all tree decompositions of $G$.

\subsubsection*{Embeddings}

Given two graphs $G$ and $H$, a \emph{strong embedding} of $H$ in $G$ is an injective mapping $\theta: V(H) \rightarrow V(G)$ such that, for any $u,v \in V(H)$, $\theta(u)\theta(v) \in E(G)$ if and only if $uv \in E(H)$.  We denote by $\StrEmb(H,G)$ the number of strong embeddings of $H$ in $G$.  If $\mathcal{H}$ is a class of labelled graphs on $k$ vertices, we set 
\begin{align*}
\StrEmb(\mathcal{H},G) =  \qquad \qquad  & \\
|\{\theta: [k] \rightarrow V(G) \quad : \quad & \theta \text{ is injective and } \exists (H,\pi) \in \mathcal{H}  \text{ such that } \\ &  \theta(i)\theta(j) \in E(G) \iff \pi(i)\pi(j) \in E(H)\}|.
\end{align*}
If $G$ is also equipped with a $k$-colouring $\omega$, where $|V(H)| = k$, we write $\ColStrEmb(H,G,\omega)$ for the number of strong embeddings of $H$ in $G$ such that the image of $V(H)$ is colourful under $\omega$.  Similarly, we set 
\begin{align*}
\ColStrEmb(\mathcal{H},G,\omega) = \qquad \qquad & \\
 |\{\theta:[k] \rightarrow V(G) \quad : \quad & \theta \text{ is injective, } \exists (H,\pi) \in \mathcal{H} \text{ such that } \\ 
 						& \theta(i)\theta(j) \in E(G) \iff \pi(i)\pi(j) \in E(H), \\
 						& \text{and $\theta([k])$} \text{ is colourful under } \omega\}|.
\end{align*}
We can alternatively consider unlabelled embeddings of $H$ in $G$.  In this context we write $\SubInd(H,G)$ for the number of subsets $U \in V(G)^{(|H|)}$ such that $G[U] \cong H$.  Note that $\SubInd(H,G) = \StrEmb(H,G) / \aut(H)$.  If $\mathcal{H}$ is a class of labelled graphs, we set 
\begin{align*}
\SubInd(\mathcal{H},G) = |\{U \subseteq V(G) : & \quad \exists (H,\pi) \in \mathcal{H} \text{ such that } G[U] \cong H\}|. 
\end{align*} 
Once again, we can also consider the case in which $G$ is equipped with a $k$-colouring $\omega$.  In this case $\ColSubInd(H,G,\omega)$ is the number of colourful subsets $U$ such that $G[U] \cong H$, and 
\begin{align*}
\ColSubInd(\mathcal{H},G,\omega) = |\{U \subseteq V(G) \quad : \quad & \exists (H,\pi) \in \mathcal{H} \text{ such that } G[U] \cong H,  \\
                         & \text{ and $U$ is colourful under $\omega$}\}|.
\end{align*} 
Finally, we write $\Clique_k(G)$ as shorthand for $\SubInd(K_k,G)$, where $K_k$ denotes a clique on $k$ vertices.

\subsection{Parameterised Counting Complexity}
\label{param-cplxty}

In this section, we introduce key notions from parameterised counting complexity, which we will use in the rest of the paper.  Let $Sigma$ be a finite alphabet.  A parameterised counting problem is a pair $(\Pi,\kappa)$ such that $\Pi: \Sigma^* \rightarrow \mathbb{N}_0$ is a function and $\kappa: \Sigma^* \rightarrow \mathbb{N}$ is a parameterisation (a polynomial-time computable mapping).

Just as when considering the complexity of parameterised decision problems, an algorithm to solve a parameterised counting problem is considered to be efficient if, on input of size $n$ with parameter $k$, its running time is bounded by $f(k)n^{O(1)}$, where $f$ is any computable function of $k$.  Problems (whether decision or counting problems) admitting such an \emph{fpt-algorithm} belong to the class FPT and are said to be \emph{fixed parameter tractable}.

To understand the intractability of parameterised counting problems, Flum and Grohe \cite{flum04} introduce two kinds of reductions between such problems.

\begin{adef}
Let $(\Pi,\kappa)$ and $(\Pi',\kappa')$ be parameterised counting problems.
\begin{enumerate}
\item An fpt parsimonious reduction from $(\Pi,\kappa)$ to $(\Pi',\kappa')$ is an algorithm that computes, for every instance $I$ of $\Pi$, an instance $I'$ of $\Pi'$ in time $f(\kappa(I))\cdot |I|^c$ such that $\kappa'(I') \leq g(\kappa(I))$ and 
$$\Pi(I) = \Pi'(I')$$ 
(for computable functions $f,g: \mathbb{N} \rightarrow \mathbb{N}$ and a constant $c \in \mathbb{N}$).  In this case we write $(\Pi,\kappa)$ \emph{\leqfptP} $(\Pi',\kappa')$.

\item An fpt Turing reduction from $(\Pi,\kappa)$ to $(\Pi',\kappa')$ is an algorithm $A$ with an oracle to $\Pi'$ such that
\begin{enumerate}
\item $A$ computes $\Pi$,
\item $A$ is an fpt-algorithm with respect to $\kappa$, and
\item there is a computable function $g:\mathbb{N} \rightarrow \mathbb{N}$ such that for all oracle queries ``$\Pi'(y) = ?$'' posed by $A$ on input $x$ we have $\kappa'(I') \leq g(\kappa(I))$.
\end{enumerate}
In this case we write $(\Pi,\kappa)$ \emph{\leqfptT} $(\Pi',\kappa')$.
\end{enumerate}

\end{adef}

Using these notions, Flum and Grohe introduce a hierarchy of parameterised counting complexity classes, \#W[$t$], for $t \geq 1$; this is the analogue of the W-hierarchy for parameterised decision problems.  In order to define this hierarchy, we need some more notions related to satisfiability problems.  

The definition of levels of the hierarchy uses the following class of problems, where every first order formula $\psi$ (with a free relation variable of arity $s$) gives rise to a specific problem in the class.
\\

\hangindent=1cm
\paramcount{WD}$_{\psi}$ \\
\textit{Input:} A relational structure\footnote{The relational structure $\mathcal{A}$ of vocabulary $\tau$ of relations consists of the \emph{universe} $A$ together with an interpretation $R^{\mathcal{A}}$ of every relation $R$ in $\tau$.  For further terminology we refer the reader to \cite{flumgrohe}.} $\mathcal{A}$ and $k \in \mathbb{N}$. \\
\textit{Parameter:} $k$. \\
\textit{Question:} How many relations $S \subseteq A^s$ of cardinality $|S|=k$ are such that $\mathcal{A} \models \psi(S)$ (where $A$ is the universe of $\mathcal{A}$)? \\

If $\Psi$ is a class of first-order formulas, then \paramcount{WD}-$\Psi$ is the class of all problems \paramcount{WD}$_{\psi}$ where $\psi \in \Psi$.  The classes of first-order formulas $\Sigma_t$ and $\Pi_t$, for $t \geq 0$, are defined inductively.  Both $\Sigma_0$ and $\Pi_0$ denote the class of quantifier-free formulas, while, for $t \geq 1$, $\Sigma_t$ is the class of formulas
$$\exists x_1 \ldots \exists x_i \psi,$$
where $\psi \in \Pi_{t-1}$, and $\Pi_t$ is the class of formulas
$$\forall x_1 \ldots \forall x_i \psi,$$
where $\psi \in \Sigma_{t-1}$.  We are now ready to define the classes \#W[$t$], for $t \geq 1$.

\begin{adef}[\cite{flum04,flumgrohe}]
For $t \geq 1$, $\#W[t]$ is the class of all parameterised counting problems that are fpt parsimonious reducible to a problem in \paramcount{WD}-$\Pi_t$.
\end{adef} 

Unless FPT=W[1], there does not exist an algorithm running in time $f(k)n^{O(1)}$ for any problem that is hard for the class \#W[1] under either fpt parsimonious reductions or fpt Turing reductions.  In the setting of this paper, a parameterised counting problem will be considered to be intractable if it is \#W[1]-hard with respect to either form of reduction.  A useful \#W[1]-complete problem (shown to be hard in \cite{flum04}) which we will use for reductions is \paramcount{Clique}, defined formally as follows.
\\

\hangindent=1cm
\paramcount{Clique} \\
\textit{Input:} A graph $G = (V,E)$, and $k \in \mathbb{N}$. \\
\textit{Parameter:} $k$. \\
\textit{Question:} How many $k$-vertex cliques are there in $G$? \\

When considering approximation algorithms for parameterised counting problems, an ``efficient'' approximation scheme is an FPTRAS (fixed parameter tractable randomised approximation scheme), as introduced by Arvind and Raman \cite{arvind02}; this is the analogue in the parameterised setting of an FPRAS (fully polynomial randomised approximation scheme).
\begin{adef}
An FPTRAS for a parameterised counting problem $\Pi$ with parameter $k$ is a randomised approximation scheme that takes an instance $I$ of $\Pi$ (with $|I| = n$), and rational numbers $\epsilon > 0$ and $0 < \delta < 1$, and in time $f(k) \cdot g(n,1/\epsilon,\log(1/\delta))$ (where $f$ is any function, and $g$ is a polynomial in $n$, $1/\epsilon$ and $\log(1 / \delta)$) outputs a rational number $z$ such that
$$\mathbb{P}[(1-\epsilon)\Pi(I) \leq z \leq (1 + \epsilon)\Pi(I)] \geq 1 - \delta.$$
\end{adef}

Later in the paper, we will also consider the parameterised complexity of decision problems.  This will involve the parameterised complexity class W[1], the first level of the W-hierarchy (originally introduced by Downey and Fellows \cite{downey95}), which is the decision equivalent of the \#W-hierarchy.  The levels of the W-hierarchy can be defined in a similar way to those of the \#W-hierarchy, using the following problem.
\\

\hangindent=1cm
\textsc{WD}$_{\psi}$ \\
\textit{Input:} A structure $\mathcal{A}$ and $k \in \mathbb{N}$. \\
\textit{Parameter:} $k$. \\
\textit{Question:} Is there $S \subseteq A^s$ of cardinality $|S|=k$ such that $\mathcal{A} \models \psi(S)$ (where $A$ is the universe of $\mathcal{A}$)? \\

The classes W[$t$], for $t \geq 1$, are then defined as follows.

\begin{adef}[\cite{flumgrohe}]
For $t \geq 1$, $W[t]$ is the class of all parameterised problems that are fpt-reducible to \textsc{WD}-$\Pi_t$.
\end{adef} 

The definition of fpt-reductions for parameterised decision problems is analogous to that of fpt-parsimonious for counting problems given above; for the formal definition, and other notions from the theory of paramterised complexity, we refer the reader to \cite{flumgrohe}.  When proving W[1]-hardness of various problems later in the paper, we will use reductions from the problem \paramdec{Clique}, shown to be W[1]-complete in \cite{downey95}, and formally defined as follows:
\\

\hangindent=1cm
\paramdec{Clique} \\
\textit{Input:} A graph $G = (V,E)$, and $k \in \mathbb{N}$. \\
\textit{Parameter:} $k$. \\
\textit{Question:} Does $G$ contain a $k$-vertex clique? \\

We will also be considering the relationship between randomised and deterministic parameterised algorithms.  Downey, Fellows and Regan \cite{downey98} introduced a notion of randomised fpt reductions (rephrased here for consistency of terminology):
\begin{adef}
Let $(\Pi,\kappa)$ and $(\Pi',\kappa')$ be parameterised decision problems.  A randomised (fpt, many-one) reduction from $(\Pi,\kappa)$ to $(\Pi',\kappa')$ is an algorithm that computes, for every instance $I$ of $\Pi$, an instance $I'$ of $\Pi'$ in time $f(\kappa(I))\cdot |I|^c$ such that $\kappa'(I') \leq g(\kappa(I))$ and 
\begin{enumerate}
\item if $I$ is a yes-instance for $(\Pi,\kappa)$ then the probability that $I$ is a yes-instance for $(\Pi',\kappa')$ is at least $\frac{1}{h(\kappa(I))\cdot |I|^{c'}}$, and
\item if $I$ is a no-instance for $(\Pi,\kappa)$ then $I'$ is a no-instance for $(\Pi',\kappa')$
\end{enumerate}
(for computable functions $f,g,h: \mathbb{N} \rightarrow \mathbb{N}$ and constants $c,c' \in \mathbb{N}$). 
\end{adef}

We shall use this kind of reduction to analyse the relationship between algorithms for decision and approximate counting in Section \ref{relationships} below.

\subsection{The Model}
\label{model}

This paper is concerned with the formal model for subgraph counting problems introduced in \cite{connected}.  We begin this section with the definition of the general problem and a discussion of some important special cases, as well as defining the multicolour version introduced in \cite{bddlayers}.  In Section \ref{other-probs} we then give a more detailed discussion of the relationship of this model to other problems that have previously been studied in the literature.

Let $\Phi$ be a family $(\phi_1,\phi_2,\ldots)$ of functions $\phi_k: \mathcal{L}(k) \rightarrow \{0,1\}$, such that the function mapping $k \mapsto \phi_k$ is computable.  For any $k$, we write $\mathcal{H}_{\phi_k}$ for the set $\{(H,\pi) \in \mathcal{L}(k): \phi_k(H,\pi) = 1\}$, and set $\mathcal{H}_{\Phi} = \bigcup_{k \in \mathbb{N}} \mathcal{H}_{\phi_k}$.  The general problem is then defined as follows.
\\

\hangindent=1cm
\paramcount{\genproblong}($\Phi$) (\paramcount{\genprob}$(\Phi)$) \\
\textit{Input:} A graph $G = (V,E)$ and an integer $k$.\\
\textit{Parameter:} $k$. \\
\textit{Question:} What is $\StrEmb(\mathcal{H}_{\phi_k},G)$, that is, the cardinality of the set $\{(v_1,\ldots,v_k) \in V^{\underline{k}}: \phi_k(G[v_1,\ldots,v_k]) = 1 \}$? \\

Observe that this problem can equivalently regard this problem as that of counting induced labelled $k$-vertex subgraphs that belong to $\mathcal{H}_{\Phi}$.

It was argued in \cite{connected} that this problem lies in \#W[1]:
\begin{prop}
For any $\Phi$, the problem \paramcount{\genprob}($\Phi$) belongs to \#W[1].
\label{in-W}
\end{prop}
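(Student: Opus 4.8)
The plan is to exhibit, for each fixed $\Phi$, an fpt parsimonious reduction from \paramcount{\genprob}($\Phi$) to a suitable problem \paramcount{WD}$_{\psi}$ with $\psi \in \Pi_1$, which places the problem in $\#W[1]$ by the definition of that class. The natural route is to encode the input graph $G$ together with enough combinatorial data about the (finite, since $k$ is the parameter) set $\mathcal{H}_{\phi_k}$ into a single relational structure $\mathcal{A}$, and to design a first-order formula $\psi(S)$ whose models of cardinality $k$ are in bijection with the strong embeddings we wish to count.

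First I would set up the vocabulary: take a binary relation $E$ interpreted as the edge relation of $G$ on the universe $V$, a unary relation to mark vertices, and then — crucially — auxiliary relations that encode, for the particular value of $k$ supplied in the instance, the membership pattern of $\mathcal{H}_{\phi_k} \subseteq \mathcal{L}(k)$. Since the map $k \mapsto \phi_k$ is computable, on input $(G,k)$ the reduction can spend time $f(k)$ to compute the entire truth table of $\phi_k$ on all of $\mathcal{L}(k)$ (there are at most $2^{\binom{k}{2}} k!$ labelled graphs, a function of $k$ alone), and write this information into $\mathcal{A}$ — for instance by adding, for each of the finitely many isomorphism-and-labelling types that satisfy $\phi_k$, a relation recording which $k$-tuples of $G$ realise it. The relation variable $S$ of arity $s$ (here $s=1$ suffices, with $|S|=k$) is then meant to pick out the image of an embedding as a $k$-element set, and $\psi(S)$ asserts: ``there is an ordering of $S$ as $v_1,\dots,v_k$ such that for all $i,j$, $v_iv_j \in E$ iff $\pi(i)\pi(j)\in E(H)$, for some $(H,\pi)\in\mathcal{H}_{\phi_k}$.'' Because $k$ is fixed within a problem instance but the formula $\psi$ must be a single fixed formula independent of the instance, the ordering and the disjunction over labelled graphs cannot literally be spelled out in $\psi$; instead they are pushed into the structure via the auxiliary relations, so that $\psi$ only needs a bounded (in fact, with care, purely universal) quantifier prefix to verify consistency against those relations, giving $\psi \in \Pi_1$.

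The parsimony requirement — that $\StrEmb(\mathcal{H}_{\phi_k},G)$ equals the number of witnessing $S$ — needs attention because $\StrEmb$ counts \emph{ordered} tuples (injective maps $\theta:[k]\to V$) whereas the satisfying assignments $S$ are \emph{sets}. I would handle this either by taking $S$ to be of arity $s$ large enough to encode an ordered tuple (e.g. letting $S$ single out the graph of the bijection $i \mapsto v_i$, a set of $k$ pairs), or by a small post-processing: $\paramcount{WD}$-type problems count subsets, and one can reduce counting embeddings to counting sets-with-recoverable-order by a standard padding/colouring trick that multiplies counts by a controlled factor and then divides — but since we need a \emph{parsimonious} reduction, the cleaner option is the first one, making $S$ directly encode the tuple so the bijection is exact. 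This is the step I expect to be the main obstacle: getting the arity and the formula exactly right so that (i) $\psi$ stays in $\Pi_1$, (ii) every embedding gives exactly one satisfying $S$ and vice versa, and (iii) all the problem-specific data about $\mathcal{H}_{\phi_k}$ genuinely fits into the relational structure rather than sneaking into the formula. The running time is immediate: computing $\phi_k$'s truth table and scanning $V^{\underline{k}}$ costs $f(k)\cdot n^{O(1)}$, and the new parameter (the number $k$ passed to \paramcount{WD}$_\psi$) is bounded by a computable function of $k$, as required.

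Finally, I would remark — as is standard, cf. \cite{connected} — that an alternative and perhaps more transparent argument interprets \paramcount{\genprob}($\Phi$) directly as an instance of a generic ``model-checking counting'' problem for $\Pi_1$ formulas over the input graph augmented with the precomputed table for $\phi_k$, so that membership in $\#W[1]$ follows from the characterisation of $\#W[1]$ via \paramcount{WD}-$\Pi_1$ without constructing the reduction by hand; the hand-built reduction above simply makes that correspondence explicit. Either way, the essential ingredients are the computability of $k\mapsto\phi_k$ (so the finite local data can be precomputed within the fpt budget) and the fact that checking a fixed labelled-subgraph pattern is a quantifier-free, hence $\Pi_1$, condition once the data is in the structure.
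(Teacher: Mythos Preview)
The paper does not prove this proposition; it records that membership in $\#W[1]$ was argued in \cite{connected}, so there is no in-paper proof to compare against. On your sketch itself: the target (an fpt parsimonious reduction to some \paramcount{WD}$_{\psi}$ with $\psi\in\Pi_1$) is right, and you correctly see that the data of $\phi_k$ must be pushed into the structure rather than the fixed formula. But two concrete points fail as written. Your ``for instance'' of adding a relation recording which $k$-tuples of $G$ realise each satisfying type needs arity-$k$ relations --- incompatible with the fixed vocabulary determined by $\psi$ --- and populating such a relation already requires enumerating $V^{\underline{k}}$. Relatedly, the running-time claim that ``scanning $V^{\underline{k}}$ costs $f(k)\cdot n^{O(1)}$'' is false: it costs $\Theta(n^{k})$, and were it true the problem would lie in FPT outright, making the $\#W[1]$ question moot.

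The clean fix, which your final paragraph already gestures at, is to reduce instead to the parameterised counting model-checking problem in which the formula is part of the input: given $(G,k)$, compute in time depending only on $k$ a quantifier-free formula $\psi_k(x_1,\ldots,x_k)$ over the language of graphs --- a disjunction, over all $(H,\pi)\in\mathcal{H}_{\phi_k}$, of conjunctions specifying the exact adjacency pattern on $x_1,\ldots,x_k$ --- and pass to the instance $(G,\psi_k)$ with parameter $|\psi_k|$. This sidesteps the fixed-formula obstacle entirely, is parsimonious by construction, and the target problem is known to lie in $\#W[1]$ \cite{flum04,flumgrohe}; this is presumably close to the argument in \cite{connected}. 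If you prefer to stay with a single fixed $\psi$, the structure should carry $k$ index elements and one ``pattern'' element for each member of $\mathcal{H}_{\phi_k}$ (with a fixed-arity relation recording each pattern on the indices), and $S$ should be ternary of the form (index, vertex, pattern); all the required constraints are then expressible universally and the reduction builds a structure of size $O(n)+f(k)$ without ever touching $V^{\underline{k}}$.
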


Much of this paper will be concerned with the multicolour version of the general problem, introduced in \cite{bddlayers} (and also known to belong to \#W[1]).
\\

\hangindent=1cm
\paramcount{\genprobcollong}$(\Phi)$ (\paramcount{\genprobcol}$(\Phi)$) \\
\textit{Input:} A graph $G = (V,E)$, an integer $k$ and colouring $f: V \rightarrow [k]$.\\
\textit{Parameter:} $k$. \\
\textit{Question:} What is $\ColStrEmb(\mathcal{H}_{\phi_k},G,f)$, that is, the cardinality of the set $\{(v_1,\ldots,v_k) \in V^{\underline{k}}: \phi_k(G[v_1,\ldots,v_k]) = 1$ and $\{f(v_1),\ldots,f(v_k)\} = [k] \}$? \\

Multicolour versions of specific subgraph counting problems were initially considered as a technical tool for deriving results about the uncoloured version (as in, for example, \cite{arvind02,bddlayers}); however, as we will see in Section \ref{relationships} the multicolour versions turn out to have interesting properties when considered as separate problems in their own right.

In this paper we will also consider the decision versions of both the multicolour and uncoloured variants of the problem, which are defined as follows.
\\

\hangindent=1cm
\paramdec{\genproblong}($\Phi$) (\paramdec{\genprob}$(\Phi)$) \\
\textit{Input:} A graph $G = (V,E)$, and an integer $k$.\\
\textit{Parameter:} $k$. \\
\textit{Question:} Is there any tuple $(v_1,\ldots,v_k) \in V^{\underline{k}}$ such that \\
$\phi_k(G[v_1,\ldots,v_k]) = 1$? \\
\\

\hangindent=1cm
\paramdec{\genprobcollong}($\Phi$) (\paramdec{\genprobcol}$(\Phi)$)\\
\textit{Input:} A graph $G = (V,E)$, an integer $k$ and colouring $f: V \rightarrow [k]$.\\
\textit{Parameter:} $k$. \\
\textit{Question:} Is there any tuple $(v_1,\ldots,v_k) \in V^{\underline{k}}$ such that $\phi_k(G[v_1,\ldots,v_k]) = 1$ and $\{f(v_1),\ldots,f(v_k)\} = [k]$? \\

It is straightforward to verify that both these problems belong to the parameterised complexity class W[1].

We now consider a number of special families of properties $\Phi$ which are of particular interest.

\subsubsection*{Symmetric properties}

In Section \ref{other-probs} below, we will discuss the necessity of considering labelled graphs in order to represent a number of well-studied problems from the literature in this form.  However, an interesting special case arises when we only consider ``unlabelled'' graph problems (such as \paramcount{Clique}, or \paramcount{Connected Induced Subgraph}).  

We say that the property $\Phi$ is \emph{symmetric} if the value of $\phi_k(H,\pi)$ depends only on the graph $H$ and not on the labelling $\pi$ of the vertices.  Thus, unlabelled graph problems correspond precisely to problems \paramcount{\genprob}$(\Phi)$ with $\Phi$ symmetric.  The special case of the general problem for symmetric properties was also introduced in \cite{connected}:
\\

\hangindent=1cm
\paramcount{Induced Unlabelled Subgraph With Property}($\Phi$) \\
\textit{Input:} A graph $G = (V,E)$ and $k \in \mathbb{N}$.\\
\textit{Parameter:} $k$. \\
\textit{Question:} What is $\SubInd(\mathcal{H}_{\phi_k})$, that is, the cardinality of the set $\{\{v_1,\ldots,v_k\} \in V^{(k)}: \phi_k(G[v_1,\ldots,v_k]) = 1 \}$? \\

Observe that, for any symmetric property $\Phi$, the output of \paramcount{\genprob}$(\Phi)$ is exactly $k!$ times the output of \paramcount{Induced Unlabelled Subgraph With Property}$(\Phi)$, for any graph $G$ and $k \in \mathbb{N}$.

\subsubsection*{Monotone and uniformly monotone properties}

Much of the rest of the paper will be concerned with \emph{monotone} properties.  We say that $\Phi$ is monotone if, whenever $\phi_k(H,\pi) = 1$ and $(H,\pi)$ is a labelled spanning subgraph of $(H',\pi')$, we also have $\phi_k(H',\pi') = 1$.  It follows immediately from this definition that, if $\Phi$ is a monotone property and $G$ is a subgraph of $G'$, then 
$$\StrEmb(\mathcal{H}_{\Phi},G) \leq \StrEmb(\mathcal{H}_{\Phi},G').$$
Note that any monotone property is uniquely defined by the set of minimal labelled graphs (with respect to inclusion as labelled subgraphs) which satisfy the property, for each $k$.  

For any property $\Phi$ (not necessarily monotone), we denote by $\min(\phi_k)$ the set of minimal elements satisfying $\phi_k$ for each $k \in \mathbb{N}$, and set $\min(\Phi) = \bigcup_{k \in \mathbb{N}} \min(\phi_k)$; $\min(\Phi)$ then uniquely specifies any \emph{monotone} property, but there can be many distinct pairs of non-monotone properties $\Phi \neq \Phi'$ such that $\min(\Phi) = \min(\Phi')$.

The best studied example of a monotone property is \paramcount{Sub}$(\mathcal{C})$: this corresponds to the special case in which $|\min(\phi_k)| \leq 1$ for each $k$.  The problem \paramcount{Sub}$(\mathcal{C})$ is defined for any class of graphs $\mathcal{C} = \{H_k: k \in I_{\mathcal{C}} \subseteq \mathbb{N}\}$, where each $H_k$ has $k$ vertices (so, for each $k \in \mathbb{N}$, there is at most one element of $\mathcal{C}$ which has $k$ vertices, and the indexing set $I_{\mathcal{C}} \subseteq \mathbb{N}$ contains every $k$ such that some element of $\mathcal{C}$ has exactly $k$ vertices):
\\

\hangindent=1cm
\paramcount{Sub}($\mathcal{C}$) \\
\textit{Input:} A graph $G = (V,E)$ and $H \in \mathcal{C}$.\\
\textit{Parameter:} $k = |H|$. \\
\textit{Question:} How many copies (not necessarily induced) of $H$ are there in $G$? \\

This paper will in fact focus primarily on a sub-family of monotone properties, which satisfy an additional restriction.  For any property $\Phi$, let us write $\mathcal{H}_{\phi_k}^*$ for the set of unlabelled graphs which satisfy $\phi_k$ when equipped with an appropriate labelling.  We then set $\min^*(\phi_k)$ to be the set of minimal elements of $\mathcal{H}_{\phi_k}^*$ (with respect to inclusion as unlabelled subgraphs), and write $\min^*(\Phi) = \bigcup_{k \in \mathbb{N}} \min^*(\phi_k)$.  Note that, as illustrated in Figure \ref{non-uniform}, it is possible to define a property $\Phi$ such that, for some $(H,\pi) \in \min(\phi_k)$, $H \notin \min^*(\phi_k)$.  However, all the examples of monotone properties which have previously been studied in the literature (including \paramcount{Sub}$(\mathcal{C})$ and \paramcount{Connected Induced Subgraph}) have the property that all minimal labelled subgraphs satisfying $\Phi$ are also minimal when considered as unlabelled subgraphs (that is, $\min^*(\phi_k) = \{H: \exists \pi \text{ such that } (H,\pi) \in \min(\phi_k)\}$ for each $k$): indeed, it is hard to think of a natural subgraph counting problem that does not satisfy this condition.  We call monotone properties that behave in this way, with minimal labelled and unlabelled subgraphs satisfying the property coinciding, \emph{uniformly monotone} properties.  Our dichotomy results in Section \ref{new} apply to uniformly monotone properties; the key observation we exploit here is that, for a uniformly monotone property $\Phi$, the collection of labelled graphs $\min(\Phi)$ has bounded treewidth if and only if the collection of unlabelled graphs $\min^*(\Phi)$ has bounded treewidth.

\begin{figure}[h]
\centering
\includegraphics[width = 0.8 \textwidth]{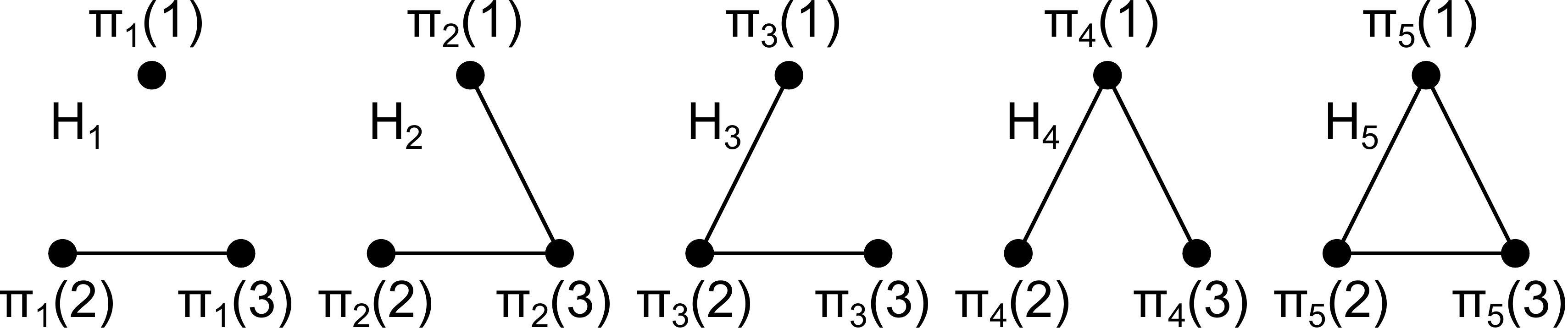}
\caption{Labelled subgraphs satisfying $\phi_3$, where $\Phi$ is not a uniformly monotone property: note that $(H_4,\pi_4)$ belongs to $\min(\phi_k)$ (it is minimal as a labelled subgraph) but not $min^*(\phi_k)$ (it is \emph{not} minimal as an \emph{unlabelled} graph).}
\label{non-uniform}
\end{figure}

\subsubsection{Relationship to other problems, and applications}
\label{other-probs}

In Section \ref{existing} below, we will see many examples of specific subgraph counting problems that have previously been studied in the literature.  All of these specific problems can be formulated in terms of the model described above (up to, possibly, division by a constant factor, easily computable in time depending only on the parameter); in this section we describe how to do this for two specific examples (as previously discussed in \cite{bddlayers}).

Our first and simplest example is \paramcount{Induced Subgraph Isomorphism}$(\mathcal{C})$, the problem of counting induced subgraphs of a graph $G$ that are isomorphic to some graph $C \in \mathcal{C}$ (where $\mathcal{C}$ is any recursively enumerable class of unlabelled graphs; to simplify the formulation, we shall assume that $\mathcal{C}$ contains at most one graph on $k$ vertices for each $k \in \mathbb{N}$).  We then set
\begin{equation*}
\phi_k(H,\pi) = \begin{cases}
					1	& \text{if $H \in \mathcal{C}$} \\
					0	& \text{otherwise.}
				\end{cases}
\end{equation*}
Then, for any $\{v_1,\ldots,v_k\} \in V^{(k)}$ such that $G[\{v_1,\ldots,v_k\}] \cong C \in \mathcal{C}$, we will have $\phi_k(G[v_{\sigma(1)},\ldots,v_{\sigma(k)}] = 1$ for every permutation $\sigma \in S_k$.  Thus, the output of \paramcount{\genprob}$(\Phi)$ on the input $(G,k)$ is equal to $k!$ times the desired value (the number of $k$-vertex induced subgraphs in $G$ that belong to $\mathcal{C}$).

Our second example is \paramcount{Sub}$(\mathcal{C})$; unlike the previous example, in this case we need to exploit the fact that the model allows us to count \emph{labelled} graphs.  We begin with a concrete special case of \paramcount{Sub}$(\mathcal{C})$, namely \paramcount{Matching} (where $\mathcal{C}$ is the set of all graphs consisting of a collection of disjoint edges).  Here, $I_{\mathcal{C}}$ is the set of all even natural numbers and, for each $k \in I_{\mathcal{C}}$, $H_k$ is the graph consisting of $k/2$ disjoint edges.  We can then set
\begin{equation*}
\phi_k(H,\pi) = \begin{cases}
					1	& \text{if $k$ is even and, for $1 \leq i \leq k/2$, $\pi(2i-1)\pi(2i) \in E(H)$} \\
					0	& \text{otherwise,}
				\end{cases}
\end{equation*}
and the output of \paramcount{\genprob}$(\Phi)$ on the input $(G,k)$ will be equal to $(k/2)! \cdot 2^{k/2}$ times the output of \paramcount{Matching} (since $(k/2)! \cdot 2^{k/2}$ is the number of automorphisms of a $k/2$-edge matching: there are $(k/2)!$ ways to map a set of $k/2$ edges to itself, and each edge can be mapped to any given edge in two different ways).

More generally, to count copies of graphs from $\mathcal{C}$ we fix, for each $k \in I_{\mathcal{C}}$, a labelling $\pi_k: [k] \rightarrow V(H_k)$, and set 
\begin{equation*}
\phi_k(H,\pi) = \begin{cases}
					1	& \text{if $k \in I_{\mathcal{C}}$ and, for every $uv \in E(H_k)$,}\\
						& \qquad \qquad \text{we have } (\pi \circ \pi_k^{-1}(u)) (\pi \circ \pi_k^{-1}(v))  \in E(H)\\
					0	& \text{otherwise.}
				\end{cases}
\end{equation*}
The output of \paramcount{\genprob}$(\Phi)$ on the input $(G,k)$ is then equal to $\aut(H_k)$ times the output of \paramcount{Sub}$(\mathcal{C})$.

Note that, if we were only able to count unlabelled subgraphs, we could not give different weight to induced unlabelled subgraphs containing different numbers of distinct copies of graphs from $\mathcal{H}$, and could only define properties corresponding to induced $k$-vertex subgraphs that contain at least $r$ copies of $H_k$.  For example, we could express the problem of counting the number of induced $k$-vertex subgraphs that contain at least one perfect matching using only unlabelled subgraphs, but to translate \paramcount{Matching} into a general framework of this kind we need to use labels.

In \cite{radu14}, the authors considered as a technical tool a related problem, \paramcount{PartitionedSub}$(\mathcal{C})$, in which $G$ and $H_k$ are equipped with vertex-colourings $f_G: V(G) \rightarrow \{1,\ldots,k\}$ and $f_H: V(H_k) \rightarrow \{1,\ldots,k\}$ respectively, and the goal is to count colourful tuples $(v_1,\ldots,v_k) \in V(G)^{\underline{k}}$ which are \emph{colour-preserving isomorphic} to $H$ with colouring $f_H$ (that is, $(f_G|_{\{v_1,\ldots,v_k\}})^{-1} \circ f_H$ defines an isomorphism from $H_k$ to $G[\{v_1,\ldots,v_k\}]$).  This problem is superficially similar to the multicolour problem \paramcount{\genprobcol}$(\Phi)$ defined above, and indeed for the special case of \paramcount{PartitionedSub}$(\mathcal{C})$ they are equivalent with respect to fpt-Turing reductions; however, this relationship does not extend to the general problem.

Results concerning this very general model can be used in any application that involves counting small substructures in a large graph.  Strategies that involve counting small substructures in large networks have previously been used for network security tools \cite{gelbord01,sekar04,staniford-chen96}, and the specific example of counting the number of triangles for the analysis of social networks has been studied extensively \cite{becchetti08,kolountzakis12,schank05,tsourakakis08}.  The counting of more general ``network motifs'' has been used to analyse biological networks \cite{milo02}, and counting versions of the \textsc{Graph Motif} problem (which relates to metabolic networks) have recently been studied in the framework of parameterised subgraph counting \cite{guillemot13,connected}.

\section{The Complexity Landscape}
\label{landscape}

We are concerned with determining the complexity of problems that can be expressed in the framework described in Section \ref{model} above, and there are three related questions we can ask about the complexities of both the uncoloured and multicolour versions (giving a total of six problem variants to consider):
\begin{enumerate}
\item Is the decision version in FPT?
\item Is there an FPTRAS for the counting version?
\item Is the exact counting problem in FPT?
\end{enumerate}
For a concrete example, consider the problem \paramcount{Connected Induced Subgraph}, studied in \cite{connected}.  In this case:
\begin{itemize}
\item exact counting is \#W[1]-complete, in both the uncoloured and multicolour versions \cite[Theorem 2.1]{connected},
\item there is an FPTRAS for the problem, in both the uncoloured and multicolour versions \cite[Corollary 3.2]{connected}, and 
\item the decision problem, in both the uncoloured and multicoloured case, is in FPT (see Corollary \ref{bddtw-mondec} below); in fact the uncoloured version can be solved in polynomial time by a component exploration method.
\end{itemize}
In Section \ref{existing}, we summarise existing results about the complexity of all six problem variants, for specific properties.  In Section \ref{methods} we then discuss the different methods that have commonly been used to address these questions in the literature, before considering in Section \ref{relationships} the relationships between the complexities of the different problem variants.

\subsection{Existing results}
\label{existing}

The study of the complexity of counting subgraphs has a long history.  Previous approaches have included the development of algorithms to solve the problem when certain restrictions are placed on either the input graph $G$ (such as its arboricity \cite{chiba85}) or the graph $H$ we wish to find (such as its flower number \cite{sundaram06}).  Exact algorithms for the problems of counting paths \cite{bjorklund09} and cycles \cite{alon97} have been studied, and recently Williams and Williams \cite{williams13} considered exact algorithms for the more general problem \paramcount{Sub}$(\mathcal{C})$.  In \cite{amini12}, Amini, Fomin and Saurabh explore the relationship between counting homomorphisms and isomorphisms to give exact algorithms for various cases of \paramcount{Sub}$(\mathcal{C})$; in Section \ref{ex-grid} below we shall discuss in more detail the relationship between our results and previous methods used to determine the complexity of homomorphism problems.

In the seminal paper on parameterised counting complexity \cite{flum04}, Flum and Grohe demonstrated that the problems \paramcount{Path} and \paramcount{Cycle} (the problems of counting $k$-vertex paths and cycles respectively, both special cases of \paramcount{Sub}$(\mathcal{C})$) are \#W[1]-complete, in contrast to the fact that the decision versions of both of these problems belong to FPT (see Theorem \ref{bddtw-subdec} below); \#W[1]-completeness of the problem of counting $k$-cliques (\paramcount{Clique}) was also proved in \cite{flum04}.  Further progress on the complexity of special cases of \paramcount{Sub}$(\mathcal{C})$ did not come for nearly a decade, until Curticapean \cite{radu13} resolved an open question from \cite{flum04}, proving the \#W[1]-completeness of \paramcount{Matching} (based on earlier joint work with Bl\"aser \cite{blaser12} on the complexity of the weighted version of this problem), another problem whose decision version belongs to FPT (again by Theorem \ref{bddtw-subdec}).  Even more recently, Curticapean and Marx gave a simpler proof of this result, showing that it remains true even if input graph $G$ is bipartite, as part of their more general dichotomy result for \paramcount{Sub}$(\mathcal{C})$: their main result shows that \paramcount{Sub}$(\mathcal{C})$ is in FPT (and is indeed solvable in polynomial time) if all graphs in $\mathcal{C}$ admit a vertex cover of size at most $c$ (where $c$ is some fixed constant, independent of the number of vertices in the graph), and otherwise this problem is \#W[1]-complete.

All of the results in the previous paragraph relate to monotone (indeed, uniformly monotone) properties.  Previously, Arvind and Raman \cite{arvind02} had demonstrated the existence of an FPTRAS for the problem \paramcount{Sub}$(\mathcal{C})$ whenever $\mathcal{C}$ is a class of graphs of bounded treewidth (implying that, among others, \paramcount{Path}, \paramcount{Cycle} and \paramcount{Matching} are efficiently approximable).  This result was recently extended by Jerrum and Meeks \cite{connected} to give an FPTRAS for \paramcount{\genprob}$(\Phi)$ whenever $\Phi$ is a monotone property and $\min(\Phi)$ is a class of graphs having bounded treewidth.  This latter result includes problems such as that of counting the number of connected $k$-vertex induced subgraphs, \paramcount{Connected Induced Subgraph}, although solving this problem exactly was also shown in \cite{connected} to be \#W[1]-complete.

Properties that are not monotone have also been studied in the literature.  Chen, Thurley and Weyer demonstrated that \paramcount{Induced Subgraph Isomorphism}$(\mathcal{C})$ is \#W[1]-complete whenever $\mathcal{C}$ contains arbitrarily large graphs; if this condition does not hold, then the problem is clearly solvable in polynomial time.  They further demonstrated that the same criterion distinguishes between the polynomial-time solvable and W[1]-complete cases of the corresponding decision problem.  Some general \#W[1]-hardness results for \paramcount{\genprob}$(\Phi)$ in the case that $\Phi$ is not monotone were also given by Jerrum and Meeks in \cite{bddlayers}, and very recently the same authors showed that \paramcount{Even Subgraph} (the problem of counting induced $k$-vertex subgraphs with an even number of edges) is \#W[1]-complete, although the problem does admit an FPTRAS, and the decision version is polynomial-time solvable.

\subsection{Methods}
\label{methods}

In this section we discuss some of the methods that have previously been used in the literature to determine the complexity of subgraph counting problems (and the related decision problems), focusing on those that have been used to address a number of different problems.

\subsubsection{Interpolation and matrix inversion}

When giving reductions to exact counting problems, certain additional techniques become available that are not applicable when considering decision or approximate counting problems; probably the most thoroughly exploited technique of this kind is polynomial interpolation, which has been widely used to prove the \#P-completeness of counting problems in the setting of classical counting complexity (see, for example, \cite{dyer00,vadhan01,valiant79,xia07}).  In essence, the reduction method involves defining a polynomial whose coefficients are in some way related to the solution of a known computationally hard problem, and which we can evaluate at different points by making suitable calls to an oracle for the problem under consideration; this makes it possible to recover the coefficients of the polynomial in question, and thus to solve the known hard problem.  The final step in such a reduction typically involves demonstrating that the resulting system of linear equations has a unique solution, or equivalently proving that a matrix is non-singular.

Techniques along these lines have also been used to obtain a number of results about the \#W[1]-completeness of parameterised counting problems, including \paramcount{WeightedMatching} \cite{blaser12} and \paramcount{Matching} \cite{radu14}.  In other parameterised subgraph counting situations, reductions have involved demonstrating the invertability of a matrix that does not arise from polynomial interpolation, as for example in proving the hardness of \paramcount{Cycle} \cite{flum04}, \paramcount{Connected Induced Subgraph} \cite{connected} and \paramcount{Even Subgraph} \cite{even}.  The strategies used to demonstrate that the resulting matrix is non-singular vary greatly in style and difficulty, and have drawn on diverse areas of mathematics including non-commutative algebra \cite{radu14} and lattice theory \cite{connected,even}.

\subsubsection{Ramsey Theory}

Ideas from Ramsey Theory have played a starring role in many results related to the parameterised complexity of counting subgraphs; here we discuss their use in proving \#W[1]-completeness, but we will see in Section \ref{sampling} below that these ideas have also been used to give positive results for decision and approximate counting.

Many of the hardness results exploit the simplest form of Ramsey's theorem; in these applications it is typically only required that the Ramsey number $R(k)$ is a computable function of $k$, and so the following bound, which follows immediately from a result of Erd\H{o}s and Szekeres \cite{erdos-szekeres}, suffices:
\begin{thm}
Let $k \in \mathbb{N}$.  Then there exists $R(k) < 2^{2k}$ such that any graph on $n \geq R(k)$ vertices contains either a clique or independent set on $k$ vertices.
\label{ramsey}
\end{thm}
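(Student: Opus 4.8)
The plan is to prove the familiar quantitative form of Ramsey's theorem due to Erd\H{o}s and Szekeres, working with the off-diagonal Ramsey numbers and then specialising to the diagonal. For $s,t \ge 1$, let $R(s,t)$ denote the least integer $N$ such that every graph on $N$ vertices contains a clique on $s$ vertices or an independent set on $t$ vertices; the quantity we want is $R(k) := R(k,k)$, and it suffices to establish the bound $R(s,t) \le \binom{s+t-2}{s-1}$, since then $R(k) \le \binom{2k-2}{k-1}$.

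First I would dispose of the base cases. When $s = 1$ or $t = 1$ a single vertex already forms the required clique or independent set, so $R(1,t) = R(s,1) = 1$, and this agrees with $\binom{s+t-2}{s-1}$ (which equals $1$ in either case); this anchors the induction.

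The key step is the recurrence $R(s,t) \le R(s-1,t) + R(s,t-1)$ for $s,t \ge 2$. Given a graph $G$ on $N = R(s-1,t) + R(s,t-1)$ vertices, I would pick an arbitrary vertex $v$ and split the remaining $N-1$ vertices into its neighbourhood $A$ and non-neighbourhood $B$. Since $|A| + |B| = N - 1$, a pigeonhole argument forces $|A| \ge R(s-1,t)$ or $|B| \ge R(s,t-1)$. In the former case $G[A]$ contains either an independent $t$-set (and we are done) or a clique on $s-1$ vertices, which extends by $v$ to a clique on $s$ vertices; the latter case is symmetric, yielding an independent set on $t$ vertices. Feeding this recurrence together with the base cases into Pascal's identity $\binom{s+t-2}{s-1} = \binom{s+t-3}{s-2} + \binom{s+t-3}{s-1}$ gives $R(s,t) \le \binom{s+t-2}{s-1}$ by induction on $s+t$.

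It then remains only to estimate $\binom{2k-2}{k-1}$. Since this is a single term of the binomial expansion of $2^{2k-2} = \sum_{j=0}^{2k-2}\binom{2k-2}{j}$, we obtain $R(k) \le \binom{2k-2}{k-1} \le 2^{2k-2} < 2^{2k}$ (and for $k = 1$ we have $R(1) = 1 < 4$ directly), which is the claimed bound. I do not expect any genuine obstacle: this is a textbook result, and the only point needing mild care is making the degenerate cases $s = 1$ or $t = 1$ of the double induction line up correctly with the binomial-coefficient bound.
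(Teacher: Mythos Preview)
Your proof is correct and is precisely the classical Erd\H{o}s--Szekeres argument that the paper cites; the paper itself does not supply a proof but simply invokes \cite{erdos-szekeres} for this bound, so you have faithfully filled in the details of the referenced result.
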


The strategy in most of the reductions using Ramsey theory is to show that graphs satisfying the property $\Phi$, which we can presumably count using our oracle, must (for a large enough value of the parameter) contain arbitrarily large copies of substructures from some class (such as cliques) that we know is hard to count; with the construction of appropriate gadgets, the oracle can then be used to count (for example) cliques in the original graph.  This strategy was first exploited in the parameterised counting setting by Chen, Thurley and Weyer \cite{chen08} to address the problem \paramcount{Induced Subgraph Isomorphism}$(\mathcal{C})$.  For more complicated properties, which are satisfied by more than one $k$-vertex graph for any $k$ (as, for example, in \cite{bddlayers}), much of the work in such reductions involves constructing gadgets that ensure there is no ``confusion'' between different subgraphs that satisfy our property, so that there is indeed a correspondence between subgraphs satisfying $\Phi$ (perhaps also with certain additional properties) and the structures we wish to count in the original graph; this seems to be more easily achievable when the properties in question are in some sense ``sparse'' (as in \cite{chen08,bddlayers}).  

Such strategies sometimes give a more natural reduction to the multicolour version of the problem, which implies the hardness of the uncoloured version in the exact counting setting (as explained in Section \ref{relationships} below); reductions relying on the multicolour version as an intermediate step do not then transfer to the settings of decision or approximate counting.

Ramsey-based methods have been applied with particular success to subgraph counting problems involving hereditary classes of graphs (that is, classes that are closed under taking induced subgraphs).  Khot and Raman \cite{khot02} investigated the parameterised complexity of the decision problem \paramdec{\genprob}$(\Phi_{\mathcal{C}})$, where $\mathcal{C}$ is any hereditary graph class and $\phi_k(H,\pi) = 1$ if and only if $H \in \mathcal{C}$ with $|H| = k$.  They proved the following result:
\begin{thm}[\cite{khot02}]
If the hereditary class $\Pi$ contains all empty graphs and all complete graphs, or excludes some empty graphs and some complete graphs, then \paramdec{\genprob}$(\Phi_{\Pi})$ is in FPT; otherwise, the problem is W[1]-complete.
\label{hereditary}
\end{thm}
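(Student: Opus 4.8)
The plan is to get $\mathrm{W}[1]$-membership for free, dispatch the two FPT cases with a short Ramsey-theoretic argument, and prove $\mathrm{W}[1]$-hardness of the remaining case by reduction from \paramdec{Clique}. Since $\Phi_{\Pi}$ is symmetric, the question is simply whether $G$ has \emph{some} $k$-vertex induced subgraph lying in $\Pi$, and membership of this problem in $\mathrm{W}[1]$ was already noted in Section~\ref{model}. I would also exploit complementation symmetry: running \paramdec{\genprob}$(\Phi_{\Pi})$ on $(\overline{G},k)$ is the same as running \paramdec{\genprob}$(\Phi_{\overline{\Pi}})$ on $(G,k)$, where $\overline{\Pi}=\{\overline{H}:H\in\Pi\}$ is again hereditary, contains all complete graphs iff $\Pi$ contains all empty graphs, and excludes some empty graph iff $\Pi$ excludes some complete graph. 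As $G\mapsto\overline{G}$ is polynomial-time computable, it suffices to treat the hardness case in which $\Pi$ contains all complete graphs but excludes some empty graph $\overline{K_a}$ (the edgeless graph on $a$ vertices).

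For the FPT cases I would invoke Theorem~\ref{ramsey}. If $\Pi$ excludes some $\overline{K_a}$ and some $K_b$, put $m=\max\{a,b\}$; since $\Pi$ is hereditary it excludes every graph with an induced $\overline{K_a}$ or an induced $K_b$, so by Theorem~\ref{ramsey} every graph in $\Pi$ has fewer than $R(m)<2^{2m}$ vertices. Hence $\mathcal{H}_{\phi_k}=\emptyset$ once $k\geq 2^{2m}$ (answer ``no''), while for the finitely many smaller $k$ one tests all $O(n^{2^{2m}})$ $k$-subsets — a polynomial-time algorithm. If instead $\Pi$ contains all empty graphs and all complete graphs, then for every $k$ both $K_k$ and $\overline{K_k}$ satisfy $\phi_k$, so by Theorem~\ref{ramsey} the answer is ``yes'' whenever $n=|V(G)|\geq R(k)$; and when $n<R(k)<2^{2k}$ one checks all $\binom{n}{k}\leq 2^{n}<2^{2^{2k}}$ subsets of size $k$ in time bounded by a function of $k$ alone, which is an fpt algorithm.

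For the hardness case ($\Pi$ containing all complete graphs but not $\overline{K_a}$) I would reduce from \paramdec{Clique}, which is $\mathrm{W}[1]$-complete by \cite{downey95}. The first observation is that $\overline{K_a}\notin\Pi$ forces every graph in $\Pi$ to have independence number at most $a-1$, so (by Theorem~\ref{ramsey} applied to the complement) a graph in $\Pi$ on $m$ vertices always contains a clique on roughly $m^{1/(a-1)}$ vertices; thus $\Pi$ sits between the class of complete graphs and the class of all graphs of independence number at most $a-1$. The trivial base case is when $\Pi$ is exactly the complete graphs (equivalently $\overline{K_2}\notin\Pi$), where \paramdec{\genprob}$(\Phi_{\Pi})$ \emph{is} \paramdec{Clique}. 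Otherwise $\Pi\ni\overline{K_2}$; letting $q\geq 2$ be maximal with $\overline{K_q}\in\Pi$, the reduction takes an instance $(G,k)$ of \paramdec{Clique} and assembles a graph $G'$ together with a parameter $k'=g(k)$ (with $g$ depending only on $k$ and on the finitely many small graphs that witness the relevant features of $\Pi$), built from copies of $G$ and fixed gadget graphs via disjoint unions, joins, and the addition of independent sets of size at most $q$, so that $G'$ has a $k'$-vertex induced subgraph in $\Pi$ precisely when $G$ has a $k$-clique. One direction is immediate once the gadgets are chosen well.

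The main obstacle is the converse direction: $G'$ must be engineered so that \emph{no} $k'$-vertex induced subgraph of $G'$ can lie in $\Pi$ unless it comes from a $k$-clique of $G$. Because the bounded-independence-number property does not determine $\Pi$, ruling out spurious induced subgraphs requires a finite case analysis on the fine structure of $\Pi$ — essentially on which minimal non-complete configurations of independence number at most $q$ it contains — in order to pick gadgets whose interfaces block every unwanted induced subgraph. This structural case analysis, rather than any single trick, is where the real work lies.
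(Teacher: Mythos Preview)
This theorem is not proved in the paper at all: it is quoted from Khot and Raman \cite{khot02}, accompanied only by the remark that ``both parts of this result were proved using ideas from Ramsey theory.'' So there is no argument in the present paper against which to compare your proposal in detail.

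Your treatment of the two FPT cases is correct and is exactly the kind of Ramsey argument the paper alludes to; the complementation symmetry you use to reduce the hardness case to ``$\Pi$ contains all cliques but omits some $\overline{K_a}$'' is also standard and correct.

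For the $\mathrm{W}[1]$-hardness direction, however, what you have written is a plan rather than a proof, and you acknowledge as much. The construction of $G'$ and the choice of $k'$ are not actually specified --- ``copies of $G$ and fixed gadget graphs via disjoint unions, joins, and the addition of independent sets of size at most $q$'' describes a toolkit, not a reduction --- and you explicitly defer the analysis needed to rule out spurious $k'$-vertex induced subgraphs of $G'$ lying in $\Pi$ without coming from a $k$-clique. Since that deferred analysis is the entire content of the hardness proof, this is a genuine gap. I would also flag that your plan anticipates a ``finite case analysis on the fine structure of $\Pi$ --- essentially on which minimal non-complete configurations of independence number at most $q$ it contains''; the paper's description of the Khot--Raman proof as Ramsey-theoretic suggests that the original argument uses only the coarse hypotheses (all cliques lie in $\Pi$, some $\overline{K_a}\notin\Pi$) rather than any finer structural information about $\Pi$, so if you find yourself needing the latter you may be on a harder path than necessary.
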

Both parts of this result were proved using ideas from Ramsey theory.  Very recently, Curticapean and Marx \cite{radu14} also used Ramsey-theoretic ideas to prove their dichotomy for \paramcount{Sub}$(\mathcal{C})$ in the special case that $\mathcal{C}$ is a hereditary class, proving in the process the Ramsey-style result that any graph containing a sufficiently large matching must contain either a clique on $k$ vertices, a complete bipartite graph on $k+k$ vertices, or an induced matching with $k$ edges.

In some cases, an extension of the basic Ramsey theorem is required.  Indeed, the argument in \cite{bddlayers} makes use of a straightforward corollary of Theorem \ref{ramsey} above, which gives a lower bound on the number of $k$-vertex subsets in a large graph which must induce either a clique or an independent set.
\begin{cor}[\cite{bddlayers}]
Let $G = (V,E)$ be an $n$-vertex graph, where $n \geq 2^{2k}$.  Then the number of $k$-vertex subsets $U \subset V$ such that $U$ induces either a clique or independent set in $G$ is at least
$$\frac{(2^{2k} - k)!}{(2^{2k})!}\frac{n!}{(n-k)!}.$$
\label{ramsey-cor}
\end{cor}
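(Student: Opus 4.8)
The plan is to prove this by a standard double-counting (averaging) argument over $2^{2k}$-vertex subsets. Set $m = 2^{2k}$; since $R(k) < 2^{2k} = m$ by Theorem \ref{ramsey}, every $m$-vertex subset $W \subseteq V$ contains at least one $k$-vertex subset $U \subseteq W$ that induces a clique or an independent set in $G$ (call such a $k$-subset \emph{good}). Since $n \geq m$, there are $\binom{n}{m}$ subsets $W$ to consider, and each contributes at least one good $k$-subset.

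Next I would count incidences between $m$-subsets and good $k$-subsets. On one hand, the number of pairs $(U,W)$ with $U$ good, $|W| = m$, and $U \subseteq W$ is at least $\binom{n}{m}$, by the previous paragraph. On the other hand, any fixed good $k$-subset $U$ is contained in exactly $\binom{n-k}{m-k}$ subsets $W$ of size $m$. Hence, writing $N$ for the number of good $k$-subsets, we get $N \cdot \binom{n-k}{m-k} \geq \binom{n}{m}$, so
$$N \geq \frac{\binom{n}{m}}{\binom{n-k}{m-k}} = \frac{n!\,(m-k)!}{m!\,(n-k)!} = \frac{(2^{2k}-k)!}{(2^{2k})!}\cdot\frac{n!}{(n-k)!},$$
which is exactly the claimed bound (the middle equality is the routine cancellation of $(n-m)!$). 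Note $m - k = 2^{2k} - k \geq 0$, so all binomial coefficients are well-defined.

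I do not expect any real obstacle here; the only points requiring a word of care are that $m = 2^{2k}$ is a valid (if slightly loose) Ramsey bound so that Theorem \ref{ramsey} applies to every $m$-subset, and that the hypothesis $n \geq 2^{2k}$ guarantees there is at least one $m$-subset to average over, so the argument is non-vacuous.
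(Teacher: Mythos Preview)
Your argument is correct: the double-counting over $m$-subsets with $m=2^{2k}$, combined with the Ramsey bound $R(k)<2^{2k}$ from Theorem~\ref{ramsey}, yields exactly the stated inequality. The paper does not include its own proof of this corollary (it is quoted from~\cite{bddlayers} and described only as a ``straightforward corollary'' of Theorem~\ref{ramsey}), so there is nothing further to compare; your averaging argument is precisely the standard way to derive such a bound.
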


\subsubsection{Random sampling}
\label{sampling}

Perhaps the simplest way of devising an FPTRAS for a subgraph counting problem is to use a random sampling technique: the idea is to estimate the total number of $k$-tuples of vertices that induce graphs satisfying $\phi_k$ by picking some number of such tuples uniformly at random and counting the number of selected tuples that are ``good'' (i.e.~satisfy $\phi_k$).  However, for such a method to work, we need to know that the total number of ``good'' $k$-tuples in the input graph satisfies certain conditions.  For example, suppose that $G$ contains only one good $k$-tuple: in order to have a reasonable chance of finding this tuple (and thus distinguishing between this situation and the case in which $G$ contains no good $k$-tuple) we would have to sample $O(n^k)$ tuples, requiring more time than is allowed by the definition of the FPTRAS.

We now give a general sufficient condition on the number of good $k$-tuples in an underlying graph which will ensure that we can obtain a good approximation with a random sampling method.  The techniques are considered folklore in the classical random counting community, and so only a sketch proof is given.

\begin{lma}
Let $G=(V,E)$ be a graph on $n$ vertices and $\phi_k$ a mapping from labelled $k$-vertex graphs to $\{0,1\}$, and set $N$ to be the number of $k$-tuples of vertices $(v_1,\ldots,v_k) \in V^{\underline{k}}$ satisfying $\phi_k(G[v_1,\ldots,v_k]) = 1$.  Suppose that there exists a polynomial $q(n)$ and a computable function $g(k)$ such that either $N=0$ or $N \geq \frac{1}{g(k)q(n)} \frac{n!}{(n-k)!}$.  Then, for every $\epsilon > 0$ and $\delta \in (0,1)$ there is an explicit randomised algorithm which outputs an integer $\alpha$, such that
$$\mathbb{P}[|\alpha - N| \leq \epsilon \cdot N] \geq 1 - \delta,$$
and runs in time at most $g(k)\tilde{q}(n,\epsilon^{-1},\log(\delta^{-1}))$, where $g$ is a computable function and $\tilde{q}$ is a polynomial.
\label{exists-fptras}
\end{lma}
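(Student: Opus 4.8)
The plan is to use a standard Monte Carlo estimator together with a Chernoff-type concentration bound, exactly as in the classical FPRAS literature (Karp--Luby and descendants). First I would set up the estimator: draw $t$ independent samples, each of which is a uniformly random element of $V^{\underline{k}}$ (equivalently, a uniformly random injection $[k] \to V$; note $|V^{\underline{k}}| = n!/(n-k)!$, which is computable exactly). For each sample $(v_1,\ldots,v_k)$, test in time depending only on $k$ (plus a polynomial in $n$ to extract the induced subgraph) whether $\phi_k(G[v_1,\ldots,v_k]) = 1$ — this is possible since $k \mapsto \phi_k$ is computable. Let $X_i$ be the indicator of success on the $i$th sample and $\bar X = \frac{1}{t}\sum_{i=1}^t X_i$, so $\mathbb{E}[\bar X] = N / \frac{n!}{(n-k)!} =: \mu$. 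Output $\alpha := \bar X \cdot \frac{n!}{(n-k)!}$, which is an unbiased estimator of $N$.

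Next I would invoke a multiplicative Chernoff bound: for a sum of $t$ i.i.d.\ $\{0,1\}$ variables with mean $\mu$,
$$\mathbb{P}\big[\,|\bar X - \mu| > \epsilon \mu\,\big] \leq 2\exp\!\big(-\tfrac{1}{3}\epsilon^2 \mu t\big),$$
valid for $\epsilon \in (0,1)$. To make the right-hand side at most $\delta$ it suffices to take $t \geq \frac{3}{\epsilon^2 \mu}\ln(2/\delta)$. This is where the hypothesis enters: by assumption either $N = 0$ (in which case the sampling algorithm should simply detect this, though strictly one cannot certify $N=0$ by sampling — see the obstacle below) or $\mu = N / \frac{n!}{(n-k)!} \geq \frac{1}{g(k)q(n)}$, so it suffices to take $t = \lceil 3 g(k) q(n) \epsilon^{-2} \ln(2/\delta) \rceil$ samples. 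The total running time is then $t$ times the per-sample cost, which is of the form $g'(k)\,\tilde q(n, \epsilon^{-1}, \log(\delta^{-1}))$ for a computable $g'$ and a polynomial $\tilde q$, as required; multiplying $\alpha$ out adds only a polynomial factor.

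The main obstacle — and the reason only a sketch is expected — is the dichotomous hypothesis "$N = 0$ or $N$ large": a pure sampling algorithm run for the budget above, when $N = 0$, will correctly return $\alpha = 0$ (every $X_i = 0$), so in fact no extra case analysis is needed here; the estimator is honest on the all-zero event. One should check that the probabilistic guarantee $\mathbb{P}[|\alpha - N| \leq \epsilon N] \geq 1 - \delta$ holds trivially with probability $1$ when $N = 0$, and follows from the Chernoff bound above otherwise. A minor secondary point to address is that sampling a uniformly random element of $V^{\underline{k}}$ is easy (pick $v_1,\ldots,v_k$ one at a time without replacement), and computing $\frac{n!}{(n-k)!}$ exactly takes time polynomial in $n$ and $k$. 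Since, as the statement notes, these techniques are folklore, I would present the argument at the level of: define the estimator, state the Chernoff bound, read off the sample size from the hypothesis, and remark on the degenerate case.
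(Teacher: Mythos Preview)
Your proposal is correct and follows essentially the same approach as the paper: define the unbiased Monte Carlo estimator by uniform sampling from $V^{\underline{k}}$, apply a multiplicative Chernoff bound, and use the hypothesis $\mu \geq 1/(g(k)q(n))$ to bound the required number of samples by $O(g(k)q(n)\epsilon^{-2}\log(\delta^{-1}))$, noting that the $N=0$ case is handled automatically. Your write-up is in fact more careful than the paper's sketch (which defers the Chernoff calculation entirely), so there is nothing to add.
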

\begin{proof}[Proof (sketch)] 
We obtain an approximation to $N$ using a simple random sampling algorithm.  At each step, a $k$-tuple $(v_1,\ldots,v_k)$ of vertices is chosen uniformly at random among all elements of $V^{\underline{k}}$; we then determine whether $\phi_k(G[v_1,\ldots,v_k]) = 1$.  This sampling and checking step can clearly be performed in polynomial time.  To obtain a good estimate of the total number of $k$-tuples satisfying $\phi_k$, we repeat this sampling process $t$ times (for some value of $t$ to be determined), and compute the proportion $p$ of our sampled tuples which satisfy $\phi_k$.  We then output as our approximation $p \frac{n!}{(n-k)!}$.  Note that in the case that $N=0$ we are certain to output 0, as required.

The value of $t$ must be chosen to be large enough that $\mathbb{P}[|p \frac{n!}{(n-k)!} - N| \leq \epsilon \cdot N] \geq 1 - \delta$.  However, it is straightforward to verify (using, for example, a Chernoff bound) that the number of trials required is not too large (for example, $4 \log(\delta^{-1}) g(k)q(n) \epsilon^{-2}$ will do).
\end{proof}

In many cases, random sampling methods give rise to an FPTRAS for properties $\Phi$ for which the decision version is trivial (at least for sufficiently large input graphs $G$).  As an example, consider the following (symmetric) problem:
\\

\hangindent=1cm
\paramcount{Clique or Independent Set} \\
\textit{Input:} A graph $G = (V,E)$ and $k \in \mathbb{N}$.\\
\textit{Parameter:} $k$. \\
\textit{Question:} How many $k$-vertex subsets of $G$ induce either a clique or an independent set? \\

It was observed by Arvind and Raman \cite{arvind02} that the decision version of this problem is trivially in FPT: for $G$ large enough, the answer is always YES by Theorem \ref{ramsey} (and if the number of vertices in $G$ is less than $2^{2k}$, the problem can be solved by brute force search).  The approximability of the corresponding counting problem also follows easily from Lemma \ref{exists-fptras} together with Corollary \ref{ramsey-cor}:
\begin{cor}
There exists an FPTRAS for \paramcount{Clique or Independent Set}.
\label{clique+IS-app}
\end{cor}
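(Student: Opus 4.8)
The plan is to obtain the FPTRAS as an immediate application of Lemma~\ref{exists-fptras}, with the required lower bound on the number of ``good'' subsets supplied by Corollary~\ref{ramsey-cor}. First I would regard \paramcount{Clique or Independent Set} as the problem \paramcount{\genprob}$(\Phi)$ for the symmetric property with $\phi_k(H,\pi) = 1$ if and only if $H$ is a clique or an independent set on $k$ vertices; writing $S$ for the number of $k$-element subsets of $V(G)$ that induce a clique or an independent set, the associated labelled count is $N = k!\,S$, since every ordering of such a subset yields a good $k$-tuple. An $\epsilon$-relative approximation of $N$ therefore yields, after division by the fixed efficiently computable factor $k!$, an $\epsilon$-relative approximation of the quantity $S$ actually requested.

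Next I would dispose of small input graphs: if $n < 2^{2k}$ then the number of $k$-element subsets of $V(G)$ is $\binom{n}{k} \leq \binom{2^{2k}}{k}$, bounded by a function of $k$, so $S$ (hence $N$) can be computed exactly by brute-force enumeration in time depending only on $k$ --- in particular an FPTRAS. For $n \geq 2^{2k}$, Corollary~\ref{ramsey-cor} gives
\[
S \;\geq\; \frac{(2^{2k}-k)!}{(2^{2k})!}\,\frac{n!}{(n-k)!},
\]
and the key point is that $\tfrac{(2^{2k}-k)!}{(2^{2k})!} = \bigl((2^{2k})(2^{2k}-1)\cdots(2^{2k}-k+1)\bigr)^{-1}$ depends only on $k$, not on $n$. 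Hence $N = k!\,S \geq \tfrac{1}{g(k)}\tfrac{n!}{(n-k)!}$ with $g(k) := (2^{2k})(2^{2k}-1)\cdots(2^{2k}-k+1)$ a computable function of $k$; in particular $N > 0$, so the hypothesis ``$N=0$ or $N \geq \tfrac{1}{g(k)q(n)}\tfrac{n!}{(n-k)!}$'' of Lemma~\ref{exists-fptras} is satisfied with this $g$ and with the constant polynomial $q \equiv 1$. Applying that lemma and dividing its output by $k!$ produces the desired FPTRAS in this range of $n$; combining it with the brute-force procedure for $n < 2^{2k}$ completes the argument.

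This is essentially a routine corollary, so I do not anticipate any genuine obstacle. The only points that need a moment's care are (i) noting that the Ramsey bound of Corollary~\ref{ramsey-cor} has a multiplicative constant independent of $n$, so that the lower bound on $S$ has exactly the shape demanded by Lemma~\ref{exists-fptras} (with $q \equiv 1$), and (ii) transferring the relative-error guarantee between the labelled count $N$ and the unlabelled count $S$, which is immediate since they differ only by $k!$. One can in fact avoid the explicit case distinction on $n$ altogether by instead taking $g(k) = 2^{2k^2}$: for $n < 2^{2k}$ this forces $\tfrac{1}{g(k)}\tfrac{n!}{(n-k)!} \leq \tfrac{n^k}{2^{2k^2}} < 1 \leq N$ whenever $N \neq 0$, while for $n \geq 2^{2k}$ it still dominates the reciprocal of the constant above, so the hypothesis of Lemma~\ref{exists-fptras} holds for every $n$ simultaneously.
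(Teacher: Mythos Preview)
Your proposal is correct and follows exactly the approach the paper indicates: the paper simply states that the result follows from Lemma~\ref{exists-fptras} together with Corollary~\ref{ramsey-cor}, and you have filled in precisely those details, including the routine translation between the labelled tuple count $N$ and the subset count $S$ via the factor $k!$. Your handling of the case $n<2^{2k}$ (either by brute force or by absorbing it into a single choice $g(k)=2^{2k^2}$) is a reasonable elaboration of what the paper leaves implicit.
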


Similarly, the proof of the positive direction of Theorem \ref{hereditary} given in \cite{khot02} invoked Ramsey theory to prove that the answer to the decision problem is trivially YES when the input graph is sufficiently large; thus, exploiting Lemma \ref{exists-fptras} again, we easily obtain the following result.
\begin{cor}
If the hereditary class $\Pi$ contains all empty graphs and all complete graphs, or excludes some empty graphs and some complete graphs, then \paramcount{\genprob}$(\Phi_{\Pi})$ admits an FPTRAS.
\end{cor}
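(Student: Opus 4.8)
The plan is to reduce the statement to Lemma~\ref{exists-fptras}, using the positive half of Theorem~\ref{hereditary} in the way suggested by the preceding discussion. I would split into the two cases of the hypothesis on $\Pi$ and, in each, verify the dichotomy required by Lemma~\ref{exists-fptras} for the number $N$ of tuples $(v_1,\ldots,v_k) \in V^{\underline{k}}$ with $G[v_1,\ldots,v_k] \in \Pi$, namely that either $N = 0$ or $N \geq \frac{1}{g(k)q(n)}\frac{n!}{(n-k)!}$ for some fixed polynomial $q$ and computable $g$; the FPTRAS then follows immediately. It is convenient to note first that $\Phi_{\Pi}$ is symmetric, so each $k$-vertex subset $U$ with $G[U] \in \Pi$ contributes $k!$ such tuples, and it therefore suffices to reason about good $k$-vertex subsets.

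In the first case, where $\Pi$ contains all empty graphs and all complete graphs, every $k$-vertex subset inducing a clique or an independent set is good; so on any input graph with $n \geq 2^{2k}$ vertices, Corollary~\ref{ramsey-cor} at once shows that $N$ is at least $k!$ times the bound appearing there, which has exactly the form demanded by Lemma~\ref{exists-fptras} with $q \equiv 1$ and an appropriate computable $g$ (and in particular $N \neq 0$). The finitely many input graphs with fewer than $2^{2k}$ vertices are dealt with by brute-force enumeration, in time depending only on $k$. So this case amounts to little more than quoting Corollary~\ref{ramsey-cor}.

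In the second case, where $\Pi$ excludes some empty graph and some complete graph, I would first make the standard Ramsey observation: since $\Pi$ is hereditary it excludes all sufficiently large empty graphs and all sufficiently large complete graphs, so by Theorem~\ref{ramsey} every sufficiently large graph has a forbidden induced subgraph and hence lies outside $\Pi$; thus $\Pi$ is finite, and $K := \max\{|H| : H \in \Pi\}$ is a constant depending only on $\Pi$. For $k > K$ we have $N = 0$; and for $k \leq K$ either $N = 0$ or $N \geq 1 \geq \frac{1}{n^K}\frac{n!}{(n-k)!}$, using $\frac{n!}{(n-k)!} \leq n^k \leq n^K$, so Lemma~\ref{exists-fptras} applies with the fixed polynomial $q(n) = n^K$. (Alternatively, one may simply observe that in this case the exact counting problem is solvable in polynomial time, since the answer is $0$ whenever $k > K$ and otherwise only $O(n^K)$ tuples need be inspected, so an FPTRAS is trivial.)

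I do not anticipate a real obstacle here: the first case is a direct application of Corollary~\ref{ramsey-cor}, and the only substantive point in the second case is recognising, via Ramsey's theorem, that $\Pi$ is a finite class, equivalently that the parameter $k$ is effectively bounded. The two things to watch are: handling the small input graphs separately in the first case (the Ramsey lower bound only takes effect for $n \geq 2^{2k}$), and making sure the polynomial $q$ supplied to Lemma~\ref{exists-fptras} genuinely depends only on $n$ and not on $k$.
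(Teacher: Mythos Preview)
Your proposal is correct and follows essentially the same approach as the paper, which simply states that the positive direction of Theorem~\ref{hereditary} (from Khot--Raman) uses Ramsey theory to show the decision problem is trivial for sufficiently large inputs, and then invokes Lemma~\ref{exists-fptras}. You have supplied considerably more detail than the paper's one-sentence justification, in particular by treating the two cases of the hypothesis separately and noting explicitly that in the second case $\Pi$ is finite; this is all sound and is precisely the unpacking the paper's terse proof invites.
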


In both of these examples, it was known that any sufficiently large graph must contain at least one good $k$-tuple; however, the formulation of Lemma \ref{exists-fptras} allows for a slightly more complicated situation, in which there is also the possibility of the input graph containing no good $k$-tuple.  This is the case when considering the problem \paramcount{Even Subgraph} in \cite{even}: Ramsey theory is exploited to show that if there is at least one induced $k$-vertex subgraph with an even number of edges then there must in fact be a large number of such subgraphs.  In this situation the decision problem is not entirely trivial, but nevertheless admits an fpt-algorithm, as it is shown that (for a sufficiently large graph $G$), if there is no $k$-vertex induced subgraph with an even number of edges, then $G$ must have a very specific and easily recognisable structure.

These methods are typically more successful in devising approximation algorithms for uncoloured problems than for the multicolour version, primarily because it seems to be much more difficult to guarantee the existence of (a large number of) specific structure(s) when an additional condition on the colouring of the structures is also imposed.  As an example, we consider the multicolour version of the problem \paramdec{Clique or Independent Set} discussed above.

\begin{prop}
\paramdec{Multicolour Clique or Independent Set} is W[1]-complete.
\label{MCclique+IS-dec}
\end{prop}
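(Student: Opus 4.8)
The plan is to establish membership in W[1] and W[1]-hardness separately. Membership is immediate: \paramdec{Multicolour Clique or Independent Set} is a special case of \paramdec{\genprobcol}$(\Phi)$ (take $\phi_k(H,\pi)=1$ iff $H$ is a clique or an empty graph), and we have already observed that every problem of this form lies in W[1]. So the substance is the hardness direction.

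For W[1]-hardness I would reduce from \paramdec{Multicolour Clique}, which is well known to be W[1]-complete (it is the colourful version of \paramdec{Clique}, equivalent to it under fpt-reductions). Given an instance $(G,k,f)$ of \paramdec{Multicolour Clique} with colour classes $V_1,\dots,V_k$, I would build a graph $G'$ on the same vertex set with the same colouring by \emph{adding} edges: for each pair $i<j$, if $V_i$ and $V_j$ already span at least one edge in $G$ we leave the bipartite part $(V_i,V_j)$ alone, but we want to guarantee that no colourful $k$-set can ever be independent. The cleanest way to do this is to make $G'$ contain $G$ as a subgraph while forcing every colourful $k$-tuple that is independent in $G'$ to be impossible --- e.g. add a single edge between a fixed vertex of $V_1$ and a fixed vertex of $V_2$ is not enough, since an independent colourful set could avoid those vertices. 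Instead I would add \emph{all} edges between $V_1$ and $V_2$; then any colourful set, containing one vertex of $V_1$ and one of $V_2$, has an edge, so no colourful independent set exists in $G'$. Consequently $G'$ has a colourful clique on $k$ vertices if and only if it has a colourful clique-or-independent-set, and the colourful cliques of $G'$ that use distinct colours correspond exactly to colourful cliques of $G$ \emph{provided} the added $V_1$--$V_2$ edges do not create spurious cliques. This is the point requiring care, and it is where I expect the main obstacle to be.

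To handle that obstacle cleanly, rather than completing $K_{V_1,V_2}$ inside the original graph I would instead \emph{pad} the instance: add $k$ fresh colour classes $U_1,\dots,U_k$, each a single new vertex $u_i$ of a new colour, make $u_1,\dots,u_k$ pairwise adjacent (so they form a rainbow clique among themselves), and join each $u_i$ to \emph{every} old vertex. The new instance has parameter $2k$. A colourful set on $2k$ vertices must contain all of $u_1,\dots,u_k$ and one vertex from each old class; such a set induces a clique iff the old part is a clique, and it can never induce an independent set (the $u_i$ are mutually adjacent). Hence the padded graph has a colourful clique-or-independent-set on $2k$ vertices iff $G$ has a colourful $k$-clique. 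This is a valid fpt-reduction (parameter $2k$, polynomial time), giving W[1]-hardness; combined with membership this proves W[1]-completeness.

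One should double-check the edge case where $G$ itself is small or $k$ is small --- but since the reduction is many-one and the target problem is in W[1], no special-casing of small instances is needed (unlike the FPTRAS situation, here we do not get a trivial YES answer from Ramsey's theorem, precisely because the colourfulness constraint destroys the Ramsey argument, which is the conceptual content the proposition is meant to highlight). I would close the proof by remarking that this contrast with Corollary \ref{clique+IS-app} is exactly the phenomenon --- colourfulness breaking the "trivially YES for large $G$" behaviour --- foreshadowed in the surrounding discussion.
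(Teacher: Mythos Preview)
Your proposal is correct and follows essentially the same approach as the paper: reduce from \paramdec{Multicolour Clique} by adjoining universal vertices with fresh colours, which forces any colourful subset to contain an edge and hence rules out the independent-set alternative. The paper's reduction is slightly more economical --- it adds a single universal vertex of colour $k+1$ (raising the parameter to $k+1$) rather than $k$ such vertices --- but the underlying idea, and the reason it works, are identical to yours.
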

\begin{proof}
We proceed by means of a reduction from \paramdec{Multicolour Clique}, shown to be W[1]-complete in \cite{fellows09}.  Let $G$ with colouring $f$ be the input to an instance of \paramdec{Multicolour Clique}; without loss of generality we may assume that $f$ colours the vertices of $G$ with colours $[k]$.  We now construct $G'$ from $G$ by adding a new vertex $v$ adjacent to every vertex of $G$; we extend the colouring $f$, to give a colouring $f'$ of the vertices of $G'$, by setting $f(v) = k+1$.  Any colourful subset of vertices in $G'$ with colouring $f'$ must therefore contain $v$, but as $v$ is adjacent to all other vertices it cannot be part of any $k+1$-independent set.  Thus, there is a colourful clique or independent set in $G'$ (with colouring $f'$) if and only if there is a colourful clique in $G$ (with colouring $f$).
\end{proof}

\subsubsection{Color-coding}
\label{color-coding}

An extremely important technique in problems that involve finding or counting small subgraphs is that of \emph{color-coding}, first introduced by Alon, Yuster and Zwick \cite{alon95}.  This technique allows us to transform algorithms for the multicolour versions of the problems into algorithms for the uncoloured case, in the setting of decision or approximate counting.

The simplest form of this technique gives randomised algorithms.  If the vertices of $G$ are coloured independently and uniformly at random with $k$ colours, the probability that any given $k$-vertex subset is colourful with respect to this random colouring depends only on $k$.  Thus, by repeating this random colouring process sufficiently many times (which will be bounded by a function of $k$), the probability that a given $k$-vertex subset is colourful with respect to \emph{at least one} of the random colourings can be made arbitrarily high.  Thus, if the multicolour decision problem is in FPT, we can apply this known algorithm to each of the randomly coloured graphs in turn, and return the answer YES to the uncoloured decision problem if and only if we find a multicolour yes-instance with respect to at least one of the colourings.

In fact, this process can be derandomised, using \emph{$k$-perfect families of hash functions}: these are families of functions $\mathcal{F}_{n,k}$ from $[n]$ to $[k]$ such that, for every $U \in [n]^{(k)}$, there exists some $f \in \mathcal{F}_{n,k}$ such that $U$ is colourful with respect to $f$.  Bounds on the size of such families and the time required to construct them are known:
\begin{thm}[\cite{alon95}, refining constructions in \cite{schmidt90,slot84,fredman84}]
For all $n,k \in \mathbb{N}$, there is a $k$-perfect family $\mathcal{F}_{n,k}$ of hash functions from $[n]$ to $[k]$ of cardinality $2^{O(k)}\cdot\log n$, and (given $n$ and $k$) the family $\mathcal{F}_{n,k}$ can be computed in time $2^{O(k)}\cdot n \cdot \log^2 n$.
\label{perfect-families}
\end{thm}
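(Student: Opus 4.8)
The plan is to prove this in two stages: first show by a probabilistic argument that a family of the stated cardinality \emph{exists}, and then make the construction explicit by composing a hash family that shrinks the range from $[n]$ down to a universe of size depending only on $k$ with a second family that handles that small universe by brute force.

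For the existence part I would colour $[n]$ independently and uniformly at random with colours from $[k]$. For a fixed $U \in [n]^{(k)}$ the probability that $U$ is colourful is $k!/k^k$, which by Stirling is at least $e^{-k}$. Drawing $t$ independent colourings, the chance that a fixed $U$ is colourful under none of them is at most $(1-e^{-k})^t \le e^{-t e^{-k}}$; since $|[n]^{(k)}| \le n^k$, a union bound shows that taking $t = \lceil (k+1) e^k \ln n \rceil$ makes the bad event have probability strictly less than $1$, so a suitable family of cardinality $2^{O(k)}\log n$ exists.

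To get an \emph{explicit} family I would first invoke the known ``splitter'' (classical perfect hash) constructions of \cite{schmidt90,slot84,fredman84}: there is a family $\mathcal{G}$ of functions $[n] \to [k^2]$, of cardinality $k^{O(1)}\log n$ and constructible in time $2^{O(k)} n \log^2 n$, such that every $U \in [n]^{(k)}$ is mapped injectively by some $g \in \mathcal{G}$. Since $[k^2]$ has size depending only on $k$, the same probabilistic argument as above (now applied over the universe $[k^2]$, and turned into an exhaustive search) yields a family $\mathcal{H}$ of functions $[k^2] \to [k]$ of cardinality $2^{O(k)}$, computable in time bounded purely in terms of $k$, such that every $k$-subset of $[k^2]$ is mapped bijectively onto $[k]$ by some $h \in \mathcal{H}$. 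Setting $\mathcal{F}_{n,k} = \{\, h \circ g : g \in \mathcal{G},\ h \in \mathcal{H} \,\}$ then works: given $U \in [n]^{(k)}$, pick $g \in \mathcal{G}$ injective on $U$ and then $h \in \mathcal{H}$ mapping $g(U)$ bijectively onto $[k]$, so that $h\circ g$ makes $U$ colourful. The cardinality is $|\mathcal{G}|\cdot|\mathcal{H}| = 2^{O(k)}\log n$ and the running time is dominated by the construction of $\mathcal{G}$.

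The main obstacle is the explicit construction of the range-reduction family $\mathcal{G}$ within the claimed time bound: the probabilistic existence argument and the brute-force treatment of the bounded universe $[k^2]$ are routine, but producing a family $[n]\to[k^2]$ that is genuinely \emph{constructible} in time $2^{O(k)} n \log^2 n$ --- rather than merely shown to exist --- requires the iterated-hashing machinery of the cited perfect-hashing papers, with careful control of the number of colliding pairs at each stage. In practice one would simply cite \cite{alon95}; the sketch above is intended to indicate the overall shape of the argument.
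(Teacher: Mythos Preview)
The paper does not prove this theorem at all: it is quoted as a known result from \cite{alon95} (building on \cite{schmidt90,slot84,fredman84}) and is used purely as a black box in the discussion of color-coding. There is therefore nothing in the paper to compare your proposal against.

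Your sketch is a reasonable outline of the standard argument --- probabilistic existence plus an explicit two-stage construction composing a range-reduction family $[n]\to[k^2]$ with a brute-force family on the small universe $[k^2]$ --- and you correctly identify that the only non-routine part is the explicit construction of the first family within the claimed time bound. You also acknowledge yourself that in practice one simply cites \cite{alon95}, which is exactly what the paper does. So there is no discrepancy here: the paper treats this as an imported result, and your proposal goes beyond what the paper attempts.
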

Thus if, instead of using random colourings, we begin by constructing a $k$-perfect family of hash functions and then apply an FPT algorithm for the multicolour version with each of these colourings, we obtain a deterministic FPT algorithm for the uncoloured problem.  This observation gives the following result (implicit in \cite{alon95}):
\begin{prop}
For any family $\Phi$, if \paramdec{\genprobcol}$(\Phi)$ is in FPT then so is \paramdec{\genprob}$(\Phi)$.
\label{dec:col->uncol}
\end{prop}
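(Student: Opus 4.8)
The plan is to apply the derandomised color-coding machinery just described: a $k$-perfect family of hash functions converts the single uncoloured instance into a small (fpt-bounded) number of multicolour instances, each of which we feed to the assumed algorithm. Concretely, given an input $(G,k)$ to \paramdec{\genprob}$(\Phi)$, set $n = |V(G)|$ and fix a bijection between $V(G)$ and $[n]$, so that any function $[n] \to [k]$ can be read as a colouring of $V(G)$. First I would invoke Theorem \ref{perfect-families} to construct the $k$-perfect family $\mathcal{F}_{n,k}$ of hash functions from $[n]$ to $[k]$, of cardinality $2^{O(k)} \log n$, in time $2^{O(k)} n \log^2 n$. Then, for each $f \in \mathcal{F}_{n,k}$, I would run the hypothesised FPT algorithm for \paramdec{\genprobcol}$(\Phi)$ on $(G,k,f)$, and answer YES to \paramdec{\genprob}$(\Phi)$ precisely when at least one of these runs answers YES.

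The correctness argument has two directions. If $(G,k)$ is a yes-instance, witnessed by a tuple $(v_1,\ldots,v_k) \in V(G)^{\underline{k}}$ with $\phi_k(G[v_1,\ldots,v_k]) = 1$, then because these $k$ vertices are distinct the set $U = \{v_1,\ldots,v_k\}$ has size exactly $k$; the defining property of the $k$-perfect family then supplies an $f \in \mathcal{F}_{n,k}$ under which $U$ is colourful, so the same tuple witnesses that $(G,k,f)$ is a yes-instance of \paramdec{\genprobcol}$(\Phi)$ and the associated call returns YES. Conversely, any YES answer on some $(G,k,f)$ comes from a tuple satisfying $\phi_k$ (together with the extra colourfulness condition, which we simply discard), so $(G,k)$ is a yes-instance of \paramdec{\genprob}$(\Phi)$. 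Hence the algorithm decides \paramdec{\genprob}$(\Phi)$ correctly.

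For the running time: building $\mathcal{F}_{n,k}$ costs $2^{O(k)} n \log^2 n$; we then make $2^{O(k)} \log n$ calls to the assumed algorithm for \paramdec{\genprobcol}$(\Phi)$, each running in time $h(k) \cdot n^{O(1)}$ for some computable $h$ by the FPT hypothesis. The total is $2^{O(k)} h(k) \cdot n^{O(1)}$, which is an fpt bound with respect to $k$, so \paramdec{\genprob}$(\Phi)$ is in FPT.

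There is no serious obstacle here, since this is the standard derandomised color-coding reduction; the only point that needs a moment of attention is that colourfulness is a property of the underlying $k$-element vertex set, not of the ordered tuple, so that the $k$-perfect family --- whose guarantee is phrased for $k$-element subsets --- is exactly what is required to cover the labelled tuples occurring in the definitions of these problems. (If one is content with a randomised algorithm, the family $\mathcal{F}_{n,k}$ may be replaced by repeating a uniformly random $k$-colouring a number of times bounded by a function of $k$, but Theorem \ref{perfect-families} makes the deterministic statement available directly.)
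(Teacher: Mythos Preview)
Your proof is correct and is exactly the derandomised color-coding argument the paper has in mind: the paper does not give an explicit proof but marks the proposition as implicit in \cite{alon95}, immediately after describing precisely this use of $k$-perfect hash families to turn an FPT algorithm for the multicolour decision problem into one for the uncoloured problem. Your write-up spells out the construction, correctness, and running-time bound in full, matching the intended approach.
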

This technique was used to prove the following result in \cite{alon95}; inclusion in FPT had previously been proved by Plehn and Voigt \cite{plehn90} using different techniques, but the color-coding method gives an algorithm with improved running time.
\begin{thm}
Let $\mathcal{C}$ be a class of graphs having treewidth at most $t$.  Then \paramdec{Sub}$(\mathcal{C})$ can be solved in time $2^{O(k)}|G|^{t+1}\log |G|$. 
\label{bddtw-subdec}
\end{thm}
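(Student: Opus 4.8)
The plan is to apply the color-coding method of Alon, Yuster and Zwick. Fix an input consisting of a graph $G$ on $n = |G|$ vertices and a pattern $H \in \mathcal{C}$ on $k$ vertices; since $\mathcal{C}$ has treewidth at most $t$ we have $\tw(H) \leq t$, and a copy of $H$ in $G$ is the image of an injective map $\theta : V(H) \to V(G)$ with $\theta(u)\theta(v) \in E(G)$ whenever $uv \in E(H)$. First I would invoke Theorem~\ref{perfect-families} to construct, in time $2^{O(k)} n \log^2 n$, a $k$-perfect family $\mathcal{F}_{n,k}$ of hash functions $V(G) \to [k]$ of size $2^{O(k)}\log n$. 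The key observation is that $G$ contains a copy of $H$ if and only if, for some $f \in \mathcal{F}_{n,k}$, there is such a $\theta$ whose image is colourful under $f$: a copy of $H$ occupies a $k$-set of vertices of $G$, and by the $k$-perfect property some $f$ in the family makes this $k$-set colourful, while conversely a colourful copy is in particular a copy. It therefore suffices to solve, for a single fixed colouring $f$, the problem of deciding whether a colourful copy of $H$ exists, within time $2^{O(k)} n^{t+1}$; running this for each of the $2^{O(k)}\log n$ colourings of the family then gives total running time $2^{O(k)}|G|^{t+1}\log|G|$ (the construction cost being of lower order for $t \geq 1$, and the case $t=0$ being trivial).

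For the fixed-colouring subproblem I would first compute a nice tree decomposition $(T,\mathcal{D})$ of $H$ of width at most $t$ with $O(k)$ nodes; since $|H| = k$ and $t$ is fixed this takes time depending only on $k$ (e.g.\ via Bodlaender's algorithm, which is linear in $|H|$ for fixed $t$), and turning an arbitrary width-$t$ decomposition into a nice one with $O(k)$ nodes is routine. I would then run a bottom-up dynamic program whose table is indexed by triples $(x,\sigma,S)$: here $x$ is a node of $T$; $\sigma : \mathcal{D}(x) \to V(G)$ is an injective partial homomorphism of $H[\mathcal{D}(x)]$ into $G$ whose image uses $|\mathcal{D}(x)|$ distinct colours; and $S \subseteq [k]$ is the set of colours used, under a hypothetical colourful extension of $\sigma$, by the images of all vertices of $H$ lying in bags of the subtree rooted at $x$. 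The entry records whether such an extension exists. There are at most $n^{t+1}$ choices of $\sigma$ and at most $2^k$ choices of $S$ per node, and the transitions are the standard ones for a nice tree decomposition: at an introduce node we try each image $w$ of the newly introduced vertex of $H$ whose colour is not yet in $S$ and which is adjacent in $G$ to the images of all its already-mapped neighbours in the bag; at a forget node we project $\sigma$ onto the smaller bag and keep $S$; and at a join node we combine two feasible entries $(x,\sigma,S_1)$ and $(x,\sigma,S_2)$ having the same $\sigma$ into $(x,\sigma,S_1 \cup S_2)$, provided $S_1 \cap S_2$ is exactly the set of colours of $\sigma(\mathcal{D}(x))$. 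At the root (whose bag may be taken to be empty or a single vertex) we answer YES iff some feasible entry has $S = [k]$, i.e.\ a complete colourful copy of $H$ has been assembled. Correctness follows by induction up $T$, using property~3 in the definition of a tree decomposition (the connectivity of each $T(v)$) to ensure that bags act as separators, so that partial homomorphisms coming from different subtrees glue together consistently.

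For the running time of the fixed-colouring routine there are $O(k)$ nodes; introduce and forget nodes each cost $2^{O(k)} n^{t+1}$ up to polynomial factors in $n$, while a join node costs at most $n^{t+1}$ times the number of pairs $(S_1,S_2)$ with the prescribed intersection, which is $2^{O(k)}$ (each colour lies either in the bag colours, in $S_1$ only, in $S_2$ only, or in neither). Hence one colouring is processed in time $2^{O(k)} n^{t+1}$, and multiplying by $|\mathcal{F}_{n,k}| = 2^{O(k)}\log n$ yields the stated bound.

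I expect the only place requiring genuine care to be the join-node transition together with the matching correctness argument: one must check that it suffices to remember merely the colour set $S$ rather than the full partial embedding of the whole subtree, and this rests precisely on the facts that bags separate $H$ and that colourfulness forces the two subtrees to use disjoint colours off their common bag. Everything else is standard bookkeeping.
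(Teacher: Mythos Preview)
Your proposal is correct and follows exactly the color-coding approach the paper attributes to Alon, Yuster and Zwick~\cite{alon95}. Note that the paper does not actually give its own proof of this theorem: it is stated as a cited result, with the surrounding text explaining that the method is to first solve the colourful version (via dynamic programming over a tree decomposition of $H$) and then derandomise using the $k$-perfect hash families of Theorem~\ref{perfect-families}, which is precisely what you do.
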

This result covers, for example, the decision problems \paramdec{Path}, \paramdec{Cycle} and \paramdec{Matching}.  As an intermediate step in the proof of Theorem \ref{bddtw-subdec}, it is also shown that the problem of deciding whether there exists an \emph{colourful} copy of a graph from $\mathcal{C}$ is also in FPT; by applying this result for the multicolour case to each element of $\min(\Phi)$ in turn, we obtain the following more general corollary.
\begin{cor}
Let $\Phi$ be a monotone property such that $\min(\Phi)$ is a class of graphs of bounded treewidth.  Then \paramdec{\genprob}$(\Phi)$ and \paramdec{\genprobcol}$(\Phi)$ are in FPT.
\label{bddtw-mondec}
\end{cor}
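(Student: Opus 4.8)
The plan is to reduce both decision problems, using the monotonicity of $\Phi$, to a bounded number of instances of the (colourful variant of the) subgraph-existence problem covered by Theorem~\ref{bddtw-subdec}.

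First I would unpack what it means for a tuple to be ``good''. Since $\Phi$ is monotone, a tuple $(v_1,\ldots,v_k) \in V^{\underline{k}}$ satisfies $\phi_k(G[v_1,\ldots,v_k]) = 1$ if and only if $G[v_1,\ldots,v_k]$ contains some minimal labelled graph $(H,\pi) \in \min(\phi_k)$ as a labelled subgraph; by the definition of labelled subgraph this holds precisely when the map $i \mapsto v_i$ sends every edge $\pi(i)\pi(j)$ of $H$ to an edge of $G$, i.e.\ when $\pi(i) \mapsto v_i$ is a (not necessarily induced) copy of the unlabelled graph underlying $H$ in $G$. Hence \paramdec{\genprob}$(\Phi)$ is a yes-instance on $(G,k)$ if and only if some graph $H$ underlying an element of $\min(\phi_k)$ admits a copy in $G$, and likewise \paramdec{\genprobcol}$(\Phi)$ is a yes-instance if and only if some such $H$ admits a \emph{colourful} copy in $G$ with respect to the given colouring.

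Next I would bound the work. Because $k \mapsto \phi_k$ is computable we can compute $\min(\phi_k) \subseteq \mathcal{L}(k)$, whose cardinality is bounded by a function of $k$; and each graph underlying an element of $\min(\phi_k)$ has treewidth at most $t$, where $t$ is the bound on the treewidth of the class $\min(\Phi)$ (labels being irrelevant to treewidth). For the uncoloured problem I would then run, for each such $H$, the algorithm of Theorem~\ref{bddtw-subdec} to test in time $2^{O(k)}|G|^{t+1}\log|G|$ whether $H$ has a copy in $G$, answering YES iff one of these tests succeeds; the total running time is a function of $k$ times $2^{O(k)}|G|^{t+1}\log|G|$, which is fpt. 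For the multicolour problem I would do the same, invoking instead the colourful-copy variant established inside the proof of Theorem~\ref{bddtw-subdec}; alternatively, once \paramdec{\genprobcol}$(\Phi)$ is shown to be in FPT, the uncoloured case follows immediately from Proposition~\ref{dec:col->uncol}.

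The argument is short, and its only genuinely new ingredient over a direct citation is the monotonicity reduction, which is needed precisely because $\Phi$ need not be of the form \paramcount{Sub}$(\mathcal{C})$ (there may be many minimal graphs on $k$ vertices). The one point requiring care is that we must appeal to the colourful version of Theorem~\ref{bddtw-subdec} --- stated in the excerpt only as an intermediate step of its proof rather than as a standalone theorem --- and that throughout we test for not-necessarily-induced copies, matching the behaviour of a monotone property; no further obstacle arises.
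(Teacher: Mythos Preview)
Your proposal is correct and follows essentially the same approach as the paper: the paper's ``proof'' is simply the sentence preceding the corollary, which states that one applies the colourful-copy result (the intermediate step in the proof of Theorem~\ref{bddtw-subdec}) to each element of $\min(\Phi)$ in turn. Your write-up fleshes out exactly this argument, making explicit the role of monotonicity in reducing the question to subgraph existence for the minimal elements, and noting (as the paper does elsewhere via Proposition~\ref{dec:col->uncol}) that the uncoloured case then follows from the multicolour one.
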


When considering approximate counting rather than decision, a more careful probability analysis reveals that the same strategy of random colourings (combined with repeated calls to an appropriate approximation algorithm for the multicolour version of the problem) can be used to give an FPTRAS for the uncoloured counting problem; Arvind and Raman exploited this technique in \cite{arvind02}, implicitly using the following result.
\begin{prop}
For any family $\Phi$, if \paramcount{\genprobcol}$(\Phi)$ admits an FPTRAS, then so does \paramcount{\genprob}$(\Phi)$.
\label{app:col->uncol}
\end{prop}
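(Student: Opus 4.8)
The plan is to adapt the random-colouring argument of Arvind and Raman \cite{arvind02} to this general setting. Write $N = \StrEmb(\mathcal{H}_{\phi_k},G)$ for the quantity we wish to approximate, and, for a colouring $f\colon V \to [k]$, write $Y_f = \ColStrEmb(\mathcal{H}_{\phi_k},G,f)$. If the vertices of $G$ are coloured independently and uniformly at random with colours from $[k]$, then for each fixed good tuple $(v_1,\ldots,v_k)$ (that is, one with $\phi_k(G[v_1,\ldots,v_k]) = 1$) the probability that it becomes colourful is exactly $p := k!/k^k$, a quantity depending only on $k$; hence by linearity of expectation $\mathbb{E}_f[Y_f] = pN$. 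The algorithm will therefore generate many independent random colourings $f_1,\ldots,f_t$, call the assumed FPTRAS for \paramcount{\genprobcol}$(\Phi)$ on each (with appropriate error parameters) to obtain $z_i$ with $z_i \approx Y_{f_i}$, and output $\tfrac{1}{pt}\sum_{i=1}^t z_i$ as its estimate of $N$. Note that when $N = 0$ every $Y_{f_i}$ is $0$ and hence every successful oracle call returns $0$, so the $N=0$ case causes no difficulty.

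For correctness, I would first set the oracle error aside and show that $\tfrac1t\sum_i Y_{f_i}$ concentrates multiplicatively around $pN$. The crucial point is that the \emph{relative} variance of $Y_f$ is bounded by a function of $k$ only. Writing $Y_f$ as a sum of indicators over the $N$ good tuples, $\mathbb{E}_f[Y_f^2]$ is a sum of $N^2$ terms, each of the form $\mathbb{P}_f[T \text{ and } T' \text{ both colourful}] \le \mathbb{P}_f[T \text{ colourful}] = p$; hence $\mathbb{E}_f[Y_f^2] \le pN^2$ and so $\operatorname{Var}_f(Y_f) \le pN^2 \le p^{-1}\,(\mathbb{E}_f[Y_f])^2$. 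A Chebyshev bound now shows that the mean of a batch of $O(p^{-1}\epsilon^{-2})$ colourings lies within a factor $1\pm\epsilon/2$ of $pN$ with, say, probability at least $2/3$, and taking the median of $O(\log(1/\delta))$ such batch-means boosts this to $1-\delta/2$. Since $p^{-1} = k^k/k!$ depends only on $k$, the total number of colourings used is $g(k)\cdot\mathrm{poly}(\epsilon^{-1},\log(1/\delta))$.

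It then remains to absorb the oracle error. Calling the multicolour FPTRAS with accuracy $\epsilon/6$ and with a confidence parameter $\delta'$ small enough that a union bound over all $g(k)\cdot\mathrm{poly}(\epsilon^{-1},\log(1/\delta))$ calls fails with probability at most $\delta/2$, we obtain $(1-\epsilon/6)Y_{f_i} \le z_i \le (1+\epsilon/6)Y_{f_i}$ simultaneously for all $i$; combining this with the concentration just established and absorbing the constants into $\epsilon$ yields the required $(1\pm\epsilon)$-approximation of $N$ with probability at least $1-\delta$. Each oracle call runs in time $f(k)\cdot\mathrm{poly}(n,\epsilon^{-1},\log(1/\delta))$ and the number of calls is $g(k)\cdot\mathrm{poly}(\epsilon^{-1},\log(1/\delta))$, so the total running time is of the form demanded of an FPTRAS.

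The step I expect to be the main obstacle is the variance bound: a priori the number of colourful good tuples under a single random colouring could fluctuate too wildly to allow a polynomial (in $\epsilon^{-1}$, $\log(1/\delta)$) number of repetitions. What rescues the argument is the crude but sufficient inequality $\mathbb{P}_f[T \text{ and } T' \text{ both colourful}] \le \mathbb{P}_f[T \text{ colourful}]$, which caps the second moment of $Y_f$ at $p$ times the square of the count and thereby keeps the relative variance controlled by a function of $k$ alone; once this is in hand, the remainder is the routine combination of Chebyshev's inequality, median-of-means amplification, and a union bound over the oracle calls.
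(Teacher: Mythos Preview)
Your proposal is correct and follows essentially the same random-colouring approach that the paper attributes (without detailed proof) to Arvind and Raman \cite{arvind02}; the paper merely states that ``a more careful probability analysis reveals that the same strategy of random colourings \ldots\ can be used to give an FPTRAS for the uncoloured counting problem,'' and your write-up supplies exactly that analysis via the second-moment bound and median-of-means amplification.
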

Indeed, Arvind and Raman provided in \cite{arvind02} a general framework for proving the existence of an FPTRAS for a parameterised counting problem, translating the classic Karp-Luby result on approximate counting \cite{karp83} into the parameterised setting.  Specifically, they used this framework to prove the following result.
\begin{thm}[\cite{arvind02}]
Let $\mathcal{C}$ be a class of graphs of bounded treewidth.  Then there is an FPTRAS for \paramcount{Sub}$(\mathcal{C})$.
\end{thm}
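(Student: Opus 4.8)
The plan is to reduce, via the color-coding transfer already in hand (Proposition~\ref{app:col->uncol}), to the multicolour version of the problem, and then to solve that version \emph{exactly} in fpt time by dynamic programming along a tree decomposition of the pattern graph. Let $\Phi$ be the (uniformly monotone) property encoding \paramcount{Sub}$(\mathcal{C})$ as in Section~\ref{model}, so that the output of \paramcount{\genprob}$(\Phi)$ is $\aut(H)$ times the number of copies of $H$ in $G$, and write $t$ for the bound on the treewidth of $\mathcal{C}$. Given an instance of \paramcount{\genprobcol}$(\Phi)$, that is, a graph $G$ with a colouring $f\colon V(G)\to[k]$ and a pattern $H$ with $|H|=k$, I would first compute a tree decomposition of $H$ of width at most $t$, and then run a bottom-up dynamic programme whose state at a node of the decomposition records a map from the vertices in that node's bag to $V(G)$ that preserves the edges of $H$ present inside the bag and assigns pairwise distinct colours, together with the set of colours already used in the processed part of $H$; the table entry counts the extensions of this partial map to the rest of the processed subtree. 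Since $k=|V(H)|$, any edge-preserving map whose image is colourful is automatically injective, so on reaching the root this counts exactly the colourful (not necessarily induced) copies of $H$ in $G$, and multiplying by $\aut(H)$ gives $\ColStrEmb(\mathcal{H}_{\phi_k},G,f)$. Because each bag has at most $t+1$ vertices, there are at most $2^{k}\cdot n^{t+1}$ states per node, so the whole computation runs in time $f(k)\cdot n^{O(t)}$.

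This shows \paramcount{\genprobcol}$(\Phi)$ is in FPT, hence it trivially admits an FPTRAS, and Proposition~\ref{app:col->uncol} then yields an FPTRAS for \paramcount{\genprob}$(\Phi)$; dividing the estimator's output by the constant $\aut(H)$ (computable from $k$) preserves the $(1\pm\epsilon)$ guarantee and gives the claimed FPTRAS for \paramcount{Sub}$(\mathcal{C})$. If one prefers to stay closer to Arvind and Raman's original argument, the same exact colourful-copy counter can instead be plugged into a parameterised Karp--Luby estimator: take the $k$-perfect hash family $\mathcal{F}_{n,k}$ of Theorem~\ref{perfect-families}, let $A_\chi$ be the set of copies of $H$ that are colourful under $\chi$, note that $\bigcup_{\chi\in\mathcal{F}_{n,k}}A_\chi$ is the set of all copies of $H$ by $k$-perfectness, observe that $|A_\chi|$ is computed and a uniform element of $A_\chi$ sampled by a top-down pass over the same table while membership of a given copy in $A_\chi$ is checkable in $2^{O(k)}$ time, and conclude that $O(|\mathcal{F}_{n,k}|\cdot\epsilon^{-2}\log(\delta^{-1})) = 2^{O(k)}\epsilon^{-2}\log(\delta^{-1})\log n$ samples, each processed in fpt time, suffice.

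The main obstacle is the correctness and running-time analysis of the dynamic programme for counting colourful copies — in particular, arguing that the bag-state together with the ``colours used so far'' component simultaneously forces global injectivity and prevents double-counting across the tree (the usual care needed at join nodes of a tree decomposition), and checking that the $n^{O(t)}$ bound really exploits only the bounded treewidth of $\mathcal{C}$. This dynamic programme is essentially the counting refinement of the algorithm used to prove Theorem~\ref{bddtw-subdec}; everything else in the proof is either quoted directly from results stated above (Proposition~\ref{app:col->uncol} and Theorem~\ref{perfect-families}) or is the elementary probabilistic bookkeeping sketched in the proof of Lemma~\ref{exists-fptras}.
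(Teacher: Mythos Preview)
Your proposal is correct and matches the approach the paper attributes to Arvind and Raman: the paper does not give a detailed proof but explains that the result is obtained by solving the multicolour version (here, counting colourful copies of a bounded-treewidth pattern by dynamic programming along a tree decomposition, exactly as you describe) and then transferring to the uncoloured problem via the colour-coding argument encapsulated in Proposition~\ref{app:col->uncol}, with the paper specifically noting that Arvind and Raman cast this transfer as a parameterised Karp--Luby estimator. Your two formulations---exact FPT for the multicolour problem followed by Proposition~\ref{app:col->uncol}, or the explicit Karp--Luby estimator over a $k$-perfect hash family---are both faithful to this, so there is nothing substantively different to compare.
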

This result was generalised in \cite{connected} to cover all monotone properties whose minimal elements have bounded treewidth.
\begin{thm}[\cite{connected}]
Let $\Phi$ be a monotone property such that $\min(\Phi)$ is a class of graphs of bounded treewidth.  Then there exists an FPTRAS for \paramcount{\genprob}$(\Phi)$; the same is true for \paramcount{\genprobcol}$(\Phi)$.
\label{tw-fptras}
\end{thm}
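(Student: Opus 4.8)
The plan is to prove the result for the multicolour problem \paramcount{\genprobcol}$(\Phi)$ directly, and then obtain the statement for \paramcount{\genprob}$(\Phi)$ at no extra cost via Proposition \ref{app:col->uncol}. Working in the multicolour setting is the key simplification: there, the requirement that a $k$-tuple be colourful already forces its entries to be distinct, so the quantities we shall need to evaluate are counts of \emph{colour-constrained homomorphisms} of a bounded-treewidth pattern graph, which can be computed \emph{exactly} (not merely estimated) by a tree-decomposition dynamic programme. Note that Lemma \ref{exists-fptras} cannot be applied directly, since $\Phi$ may be satisfied by very few $k$-tuples (possibly just one), so a genuine union-estimation argument is needed. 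Fix $k$ and list $\min(\phi_k) = \{(F_1,\pi_1),\ldots,(F_m,\pi_m)\}$; since each $F_i$ is a $k$-vertex labelled graph, $m$ is bounded by a computable function of $k$. Because $\Phi$ is monotone, $\mathcal{H}_{\phi_k}$ is precisely the up-set generated by $\min(\phi_k)$ under edge-inclusion of labelled $k$-vertex graphs, so $\phi_k(G[v_1,\ldots,v_k]) = 1$ if and only if $(F_i,\pi_i)$ is a labelled subgraph of $G[v_1,\ldots,v_k]$ for some $i$. Writing $A_i$ for the set of tuples $(v_1,\ldots,v_k) \in V^{\underline{k}}$ with $\{f(v_1),\ldots,f(v_k)\} = [k]$ and $v_{\pi_i^{-1}(u)}v_{\pi_i^{-1}(v)} \in E(G)$ for every $uv \in E(F_i)$, we therefore have $\ColStrEmb(\mathcal{H}_{\phi_k},G,f) = \bigl|\bigcup_{i=1}^{m} A_i\bigr|$.

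The next step is to show that, for each $i$, we can compute $|A_i|$ exactly and draw a uniformly random element of $A_i$, both in fpt time. Relabelling the vertices of $F_i$ through $\pi_i^{-1}$ yields a graph $\tilde F_i$ on vertex set $[k]$ with $\tw(\tilde F_i) = \tw(F_i) \le t$, where $t$ is the constant bounding the treewidth of the graphs in $\min(\Phi)$. Then $A_i$ is the disjoint union $\bigcup_{\sigma \in S_k} A_{i,\sigma}$, where $A_{i,\sigma}$ is the set of homomorphisms $\theta : \tilde F_i \to G$ with $f(\theta(j)) = \sigma(j)$ for every $j \in [k]$; each such $\theta$ is automatically injective and colourful, so no separate injectivity constraint is required. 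Fixing a width-$t$ tree decomposition of $\tilde F_i$, a standard bottom-up dynamic programme whose state at a node records the image in $G$ of every vertex of the current bag computes $|A_{i,\sigma}|$ in time $|G|^{O(t)}$, hence $|A_i|$ in time $k!\cdot|G|^{O(t)}$; the same tables support the usual top-down sampling procedure (first choose $\sigma$ with probability $|A_{i,\sigma}|/|A_i|$, then sample within $A_{i,\sigma}$) to produce a uniformly random element of $A_i$.

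With exact counting and exact uniform sampling available we apply the classical Karp--Luby estimator for the size of a union \cite{karp83}, as used by Arvind and Raman in \cite{arvind02}: repeatedly draw $i \in [m]$ with probability $|A_i|/\sum_{j}|A_j|$, draw $\theta$ uniformly from $A_i$, compute $c(\theta) = |\{j : \theta \in A_j\}|$ by inspecting $G[\theta([k])]$ and testing it against each of the $m$ labelled graphs in $\min(\phi_k)$, and output $X = \frac{1}{c(\theta)}\sum_j |A_j|$. Then $X$ is an unbiased estimator of $\bigl|\bigcup_j A_j\bigr|$ and satisfies $0 \le X \le \sum_j|A_j| \le m\cdot\bigl|\bigcup_j A_j\bigr| = m\,\mathbb{E}[X]$ deterministically, so its relative variance is at most $m$; averaging $O(m\epsilon^{-2}\log\delta^{-1})$ independent copies and applying a Chernoff bound then gives, with probability at least $1-\delta$, a value within a factor $1\pm\epsilon$ of $\ColStrEmb(\mathcal{H}_{\phi_k},G,f)$. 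As $t$ is a fixed constant the term $|G|^{O(t)}$ is polynomial in $|G|$, whereas $m$, $k!$ and the number of samples depend only on $k$; this is an FPTRAS for \paramcount{\genprobcol}$(\Phi)$, and Proposition \ref{app:col->uncol} upgrades it to an FPTRAS for \paramcount{\genprob}$(\Phi)$.

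I expect the delicate point to be the exact-counting-and-sampling subroutine of the second step. In the uncoloured problem the analogue of $A_i$ would be the set of injective (not necessarily induced) copies of $\tilde F_i$ in $G$, and computing that count exactly by tree-decomposition DP is awkward: enforcing global injectivity naively pushes one into an inclusion--exclusion over vertex identifications whose quotient graphs need not have bounded treewidth. Routing the whole argument through the multicolour version, where colourfulness supplies injectivity for free, is precisely what keeps these subroutines clean; the remaining ingredients — the reduction to a union of the $A_i$ via monotonicity, and the Karp--Luby variance analysis — are routine. A minor point to check along the way is merely that $|\min(\phi_k)|$ is bounded by a function of $k$, which is immediate since there are only finitely many labelled $k$-vertex graphs.
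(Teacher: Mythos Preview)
The paper does not actually prove this theorem: it is quoted from \cite{connected} and stated without proof, with the surrounding discussion indicating only that the method is the Karp--Luby union-of-sets estimator (translated to the parameterised setting by Arvind and Raman \cite{arvind02}) applied via the multicolour problem, with color-coding (Proposition \ref{app:col->uncol}) transferring the FPTRAS back to the uncoloured version. Your proposal is a correct and complete realisation of exactly this outline --- the reduction to $\bigl|\bigcup_i A_i\bigr|$ via monotonicity, the exact counting and sampling of each $A_{i,\sigma}$ by tree-decomposition dynamic programming (with colourfulness providing injectivity for free), the Karp--Luby variance bound, and the final appeal to Proposition \ref{app:col->uncol} --- so there is nothing to correct and nothing substantively different to compare.
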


Once again, the random colouring process can be derandomised, this time using so-called \emph{$(\epsilon,k)$-balanced families} of hash functions from $[n]$ to $[k]$: such a family $\mathcal{F}_{n,k}$ has the property that there exists some constant $T$ so that, for any $U \in [n]^{(k)}$, the number of functions $f \in \mathcal{F}_{n,k}$ such that $U$ is colourful with respect to $f$ is between $(1-\epsilon)T$ and $(1+\epsilon)T$.  Constructions of such families were given in \cite{alon09,alon10}.  However, it was also shown in \cite{alon09} that the minimum size of a family of \emph{perfectly $k$-balanced} hash functions from $[n]$ to $[k]$ (that is, a $(0,k)$-balanced family) must be at least $c(k)n^{\lfloor k/2 \rfloor}$ (where $c(k)$ is a positive constant depending only on $k$), meaning that we cannot obtain an \emph{exact} fpt-algorithm for \paramcount{\genprob}$(\Phi)$ by this method even if there exists an exact fpt-algorithm for \paramcount{\genprobcol}$(\Phi)$.  In fact, it is clear that this method must fail in the exact counting case unless FPT = \#W[1], as it would give fpt-algorithms for \paramcount{Path} and \paramcount{Matching}.

\subsubsection{The Excluded Grid Theorem}
\label{ex-grid}

We have already seen that a number of positive results for decision and approximate counting rely on methods for finding copies of graphs whose treewidth is bounded.  It is natural to ask whether the use of tree decompositions to give efficient algorithms for such problems is necessary, or merely convenient: some partial answers to this question have been given by exploiting the celebrated Excluded Grid Theorem due to Robertson and Seymour \cite{robertson86}, and we use this result to explore the question more thoroughly in Section \ref{new}.
\begin{thm}[Excluded Grid Theorem \cite{robertson86}]
There is a computable function $w : N \rightarrow N$ such that the $(k \times k)$ grid is a minor of every graph of treewidth at least $w(k)$.
\label{excluded-grid}
\end{thm}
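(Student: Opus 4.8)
Although Theorem~\ref{excluded-grid} is used here only as a black box, it is worth recording the route a proof of it takes. The plan is to reach a $(k\times k)$-grid minor in two stages: first extract from high treewidth a large, robustly connected substructure, and then carve a grid out of it. For the first stage I would invoke the bramble--treewidth duality of Seymour and Thomas --- a graph of treewidth at least $w$ contains a bramble of order greater than $w$ --- or, more directly, a balanced-separator argument; either yields a large \emph{well-linked} set $Z \subseteq V(G)$, meaning a set of vertices such that any two equal-sized subsets $A, B \subseteq Z$ are joined by $|A|$ pairwise vertex-disjoint paths in $G$. The size of $Z$ one obtains is a computable (indeed linear) function of $w$, so it suffices to show that a large enough well-linked set forces a $(k\times k)$-grid minor.

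The heart of the proof is this second stage. I would build the grid minor incrementally, one row (or column) at a time, maintaining as an invariant a \emph{partial grid}: a family of pairwise disjoint connected subgraphs of $G$ already organised into rows and columns, each still meeting $Z$ in the way the finished grid would. To extend the partial grid I would pull out of $Z$ a fresh batch of vertex-disjoint paths --- their existence supplied by well-linkedness together with Menger's theorem --- and route them through the existing part so as to realise the next row of branch sets and edges. Each extension blocks only a bounded portion of $Z$, so one tracks a parameter that shrinks in a controlled way and stops once the grid has order $k$; choosing $w(k)$ large enough (the original argument gives a tower-type bound, later work an explicit and eventually polynomial one) guarantees that $Z$ never runs out. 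Since every step is algorithmic, the function $w$ so produced is automatically computable, which is all that the applications in Section \ref{new} require.

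The main obstacle is the routing step. It is not enough that sufficiently many disjoint paths exist; one must guarantee that the paths realising a new row enter and leave the already-built partial grid in the correct cyclic order, so that the grid structure is preserved rather than scrambled by uncontrolled crossings. Controlling this interface --- and the erosion of well-linkedness that it causes --- is what makes the argument delicate and accounts for the large bound on $w(k)$. The original proof of Robertson and Seymour \cite{robertson86} handles it through their full graph-minor structure theory, whereas the later simplified proof of Diestel, Gorbunov, Jensen and Thomassen, and the polynomial-bound proofs of Chekuri and Chuzhoy, manage it by more direct flow-based routing arguments.
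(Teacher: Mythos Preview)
The paper does not prove Theorem~\ref{excluded-grid} at all: it is stated with a citation to Robertson and Seymour \cite{robertson86} and used purely as a black box, exactly as you yourself note in your opening sentence. There is therefore no ``paper's own proof'' to compare your proposal against; anything you write here goes strictly beyond what the paper does.

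That said, your sketch is a fair high-level account of how modern proofs of the Excluded Grid Theorem proceed --- extracting a large well-linked set from high treewidth (via brambles or separator arguments) and then routing a grid through it --- and you correctly flag the routing/crossing-control step as the crux and the source of the large bounds. As a sketch it is accurate in spirit, though of course far from a proof: the phrase ``route them through the existing part so as to realise the next row'' is precisely where all the work lies, and none of that work is indicated here. For the purposes of this paper, a single sentence noting that the theorem is quoted from \cite{robertson86} (and that computability of $w$ is all that is needed) would have sufficed.
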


This result has been used to prove the W[1]-hardness of problems involving homomorphisms by Grohe, Schwentick and Segoufin \cite{grohe01} and by Grohe \cite{grohe07}; the latter considers the following problem, where $\mathcal{C}$ is a fixed class of recursively enumerable graphs:\footnote{In fact \cite{grohe07} addressed the more general problem of homomorphisms between arbitrary relational structures, but we restrict attention to the special case of graph homomorphisms here to avoid introducing additional notation.}
\\

\hangindent=1cm
\paramdec{Hom}$(\mathcal{C},-)$ \\
\textit{Input:} A graph $G = (V,E)$ and $H \in \mathcal{C}$.\\
\textit{Parameter:} $k = |H|$. \\
\textit{Question:} Is there a homomorphism from $H$ to $G$? \\

A homomorphism from $H$ to $G$ is a mapping $\theta: V(H) \rightarrow V(G)$ such that, for all $uv \in E(H)$, we have $\theta(u)\theta(v) \in E(G)$; the key difference from an isomorphism is that the mapping $\theta$ need not be injective in this case.  The \emph{core} of a graph $H$ is the subgraph $H'$ of $H$ such that there is a homomorphism from $H$ to $H'$, but there is no homomorphism from $H'$ to a proper subgraph of $H'$ (if more than one subgraph $H'$ of $H$ satisfies this definition, it is known that the two subgraphs must be isomorphic).  A graph $H$ has treewidth at most $t$ \emph{modulo homomorphic equivalence} if the core of $H$ has treewidth at most $t$.

In this setting, Grohe proved in \cite{grohe07} that \paramdec{Hom}$(\mathcal{C},-)$ is W[1]-hard whenever $\mathcal{C}$ is a class of recursively enumerable graphs that do not have bounded treewidth modulo homomorphic equivalence.  Since then, a number of authors have built on this result, including \cite{dalmau04,chen13,chen08,farnqvist07}.

Broadly speaking, the key idea in all of these constructions is to use a $k \times \binom{k}{2}$ grid (whose existence as a minor in one of the graphs in the class under consideration follows from Theorem \ref{excluded-grid}) to encode the existence of a clique in some other graph $G$; to achieve this, a new graph $G'$ is constructed (which will be the input to the problem to which the reduction is being made).  In $G'$, there is a vertex-class corresponding to each vertex of a $k \times \binom{k}{2}$ grid, with vertices in each class indexed by elements of $V(G) \times E(G)$; conditions are placed on edges between the vertex classes to ensure that any copy of a $k \times \binom{k}{2}$ grid in $G'$ \emph{which includes one vertex from each vertex class} must correspond to a clique in $G$.

The important point to emphasise in this description is the fact that we only want to identify grids which include one vertex from each vertex class.  Different authors have used varying strategies to ensure that the grids we identify do indeed include one vertex from each vertex class: in \cite{grohe07}, the condition that we are considering a graph whose core has unbounded treewidth means we cannot miss any of the vertex classes, whereas in \cite{grohe01} the authors considered a \emph{coloured} version of the homomorphism problem, for which unbounded treewidth alone is enough to imply W[1]-hardness.  Dalmau and Jonsson \cite{dalmau04} consider the counting version of the homomorphism problem, \paramcount{Hom}$(\mathcal{C},-)$, and show that fixed parameter tractability coincides precisely with classes $\mathcal{C}$ of bounded treewidth, using an inclusion-exclusion method (only possible in the exact counting setting) to count precisely the number of grids that use every vertex class.  Farnqvist and Jonsson studied a variant, \paramdec{LHom}$(\mathcal{C},-)$ in which restrictions are placed on the set of vertices in $G$ to which each vertex in $H$ may be mapped by the homomorphism (resulting in a problem that generalises the well-known \textsc{ListColouring} for graphs), which again allows grids that do not use every vertex class to be avoided, so the condition for W[1]-hardness is again that $\mathcal{C}$ has bounded treewidth.

Very recently, Curticapean and Marx also used these ideas to prove \#W[1]-completeness of \paramcount{Sub}$(\mathcal{C})$ when $\mathcal{C}$ is a class of graphs of unbounded treewidth; this is a special case of our result on exact counting given in Theorem \ref{exact-hard} below.  The proof of their result involves proving hardness of the problem \paramcount{PartitionedSub}$(\mathcal{C})$ (analogous to a counting version of \paramdec{LHom}$(\mathcal{C},-)$ in the discussion above) as an intermediate step; this problem is then reduced to \paramcount{Sub}$(\mathcal{C})$.

In Section \ref{new} below, we will also exploit colourings to allow us to identify those grids that include one vertex from each vertex class.  Due to a general result in Section \ref{relationships} below (Lemma \ref{uncol-col}) we can reduce the multicolour version to the uncoloured version when considering exact counting, but this is not possible for approximate counting or the decision version, so our results are only valid for the multicolour problem in these settings.

\subsection{Relationships between results}
\label{relationships}

We have already seen in Section \ref{methods} above certain relationships between the complexities of the different problem variants; in this section we derive some further implications between the complexities of the different variants of subgraph counting problems, and then summarise all the known relationships in Figure \ref{complexity-summary}.

\subsubsection*{Relationships between exact counting, approximate counting and decision}

We begin by noting the trivial fact that, if the counting version of a problem is in FPT, then it admits an FPTRAS and the corresponding decision problem is also in FPT.  Similarly, we now prove that, if a counting problem admits an FPTRAS, then there is a \emph{randomised} algorithm for the corresponding decision problem.

\begin{prop}
Let $\Phi$ be a family $(\phi_1,\phi_2,\ldots)$ of functions $\phi_k: \mathcal{L}(k) \rightarrow \{0,1\}$, such that the function mapping $k \mapsto \phi_k$ is computable, and suppose that there exists an FPTRAS for \paramcount{\genprob}$(\Phi)$.  Then, given a graph $G$ on $n$ vertices and $k \in \mathbb{N}$, there is a randomised algorithm, running in time $f(k)\cdot n^{O(1)}$, which
\begin{enumerate}
\item if $(G,k)$ is a yes-instance for \paramdec{\genprob}$(\Phi)$, returns YES with probability greater than $1/2$, and
\item if $(G,k)$ is a no-instance for \paramdec{\genprob}$(\Phi)$, returns NO.
\end{enumerate}
The same is true for the corresponding multicolour problems.
\label{random-dec}
\end{prop}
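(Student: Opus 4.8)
The plan is to use the FPTRAS not as an estimator but as the engine of a \emph{search-and-verify} procedure. The point is that a single call to the FPTRAS gives only two-sided error --- on a no-instance it may still report a positive estimate with small but nonzero probability --- whereas the statement demands a one-sided guarantee on no-instances. To obtain this, I would arrange that the algorithm outputs YES only after it has explicitly produced a $k$-tuple $(v_1,\ldots,v_k)$ and checked, deterministically, that $\phi_k(G[v_1,\ldots,v_k]) = 1$; this check costs only a function of $k$, since $k \mapsto \phi_k$ is computable. With this discipline in place, a no-instance can never cause a YES, so the only thing left to control is the probability of a (tolerated) false NO on a yes-instance.

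Concretely: if $n < k$, output NO. Otherwise fix $\epsilon = 1/2$ and $\delta = 1/(3n)$, and treat the FPTRAS as a noisy decision oracle for $\paramdec{\genprob}(\Phi)$ --- on input $(G',k)$ it returns $z$, which we read as ``yes'' iff $z \ge 1/2$. By the FPTRAS guarantee, on an instance whose true count is $0$ we get $z = 0$ (hence ``no'') with probability $\ge 1-\delta$, and on an instance whose true count is at least $1$ we get $z \ge (1-\epsilon)\cdot 1 = 1/2$ (hence ``yes'') with probability $\ge 1-\delta$; so each call is correct with probability $\ge 1-\delta$. Now run the usual self-reduction: start with $W = V$, process the vertices of $V$ one at a time, and when processing $v$ query the oracle on $G[W \setminus \{v\}]$, deleting $v$ from $W$ precisely when the answer is ``yes''. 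This uses exactly $n$ oracle calls. If the final $W$ has $|W| = k$, test all $k!$ orderings of $W$ against $\phi_k$ and output YES (with the witness) if one succeeds; in all other cases output NO.

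Correctness has two halves. One-sided error on no-instances is immediate from the verification discipline above. For a yes-instance, condition on the event --- of probability at least $1 - n\delta = 2/3$ by a union bound --- that all $n$ oracle answers are correct. Then the invariant ``$G[W]$ contains a good $k$-tuple'' is preserved at every step (deletion happens only when $G[W\setminus\{v\}]$ genuinely contains one, and non-deletion leaves $W$ unchanged), and one checks that the process must terminate with $|W|$ exactly $k$: if $|W| > k$, a good tuple $T \subseteq W$ misses some vertex $v \in W$, but at the moment $v$ was processed the then-current set $W_v \supseteq W$ contained $T \subseteq W_v \setminus \{v\}$, so a correct oracle would have deleted $v$ --- contradiction. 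Hence the final $G[W]$ contains a good $k$-tuple, necessarily one of the $k!$ orderings we test, so we output YES; thus yes-instances succeed with probability $\ge 2/3 > 1/2$. Each oracle call costs $f(k)\cdot g(n,2,\log(3n)) = f(k)\cdot n^{O(1)}$, and the final verification costs $k!$ evaluations of $\phi_k$ plus polynomial overhead, so the total running time is $f(k)\cdot n^{O(1)}$. The multicolour case is handled verbatim, querying the FPTRAS for $\paramcount{\genprobcol}(\Phi)$ on $(G[W\setminus\{v\}],k,f|_{W\setminus\{v\}})$; a colourful good $k$-tuple still occupies exactly $k$ vertices, so the same invariant argument gives $|W| = k$ at the end, and the $k!$ orderings of $W$ are automatically colourful when the invariant holds. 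The one genuinely delicate point throughout is securing one-sided rather than two-sided error, and this is entirely taken care of by the ``output YES only with a verified witness'' rule.
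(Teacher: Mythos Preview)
Your proof is correct and takes essentially the same approach as the paper: use the FPTRAS with small $\delta$ as a noisy decision oracle, self-reduce by deleting vertices one at a time to isolate a candidate set of size~$k$, and then verify deterministically so that no-instances can never produce YES. The paper's version differs only cosmetically---it performs an initial oracle call before entering the search, and terminates early once $k+1$ vertices have been marked ``necessary'' (rather than processing all $n$ vertices and appealing to your invariant to force $|W|=k$)---with the same union-bound analysis over at most $n$ oracle calls.
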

\begin{proof}
Suppose that there exists an FPTRAS for \paramcount{\genprob}$(\Phi)$.  Running this randomised approximation algorithm with $\epsilon = 1/2$ and $\delta = \frac{1}{2n+1}$ immediately gives a randomised fpt-algorithm $A$ for the decision problem with two-sided error probability strictly smaller than $\frac{1}{2n}$: if the answer to the decision problem is NO then the approximation algorithm must output ``0'' with probability at least $1 - \frac{1}{2n+1}$, whereas if the answer to the decision problem is YES then the approximation algorithm must output a value of at least $1/2$ with probability at least $1 - \frac{1}{2n+1}$.  It remains to show that this two-sided error can be reduced to a one-sided error, without increasing the probability of error too much.

We do this using standard techniques, similar to those employed to show that certain problems (so-called ``self-reducible'' problems) in the class BPP are in fact contained in RP; the idea is that before outputting the answer YES we will search for a witness to the fact that the input is indeed a yes-instance.  We begin by running our randomised decision algorithm $A$; if the output of $A$ is NO then we return NO.  If, on the other hand, the output of $A$ is YES, then we begin our search for a witness.

To do this, we consider each vertex in turn, and either delete it or add it to a list of ``necessary'' vertices.  We begin by choosing an arbitrary vertex $v$ of $G$, which we (temporarily) delete, and then run $A$ on the resulting graph.  If the output of $A$ on this new graph is YES, we permanently delete $v$ and then repeat the process (but always ensuring that we do not delete a vertex we have previously marked as ``necessary''); if the output is NO then we reinstate the vertex $v$, marking it as ``necessary'', and repeat.  This process terminates when either:
\begin{enumerate}
\item $k+1$ vertices have been marked as necessary, in which case we return NO, or
\item there are exactly $k$ vertices remaining, in which case we check (in time depending only on $k$) whether $\phi_k$ is true on any tuple formed of the remaining vertices; if so we return YES, and otherwise we return NO.
\end{enumerate}
It is clear that this witness search process is fixed parameter tractable, and moreover that it will only return YES if we do indeed have a yes-instance.  To bound the probability of returning NO if we actually have a yes-instance, observe that we call the algorithm $A$ at most $n$ times (where $n$ is the number of vertices in $G$), so by the union bound the probability that the algorithm will return the incorrect answer on at least one of these calls is at most $\frac{n}{2n+1} < \frac{1}{2}$; if $A$ returns the correct answer on all of the calls then we are sure to return YES on any yes-instance.  This gives the required algorithm in the uncoloured case.

The corresponding result for the multicolour versions of the problems follows by exactly the same argument.
\end{proof}

Subject to the assumption that W[1] is not equal to FPT under randomised parameterised reductions (as described in Section \ref{param-cplxty}), this gives the following immediate corollary.

\begin{cor}
Assume that W[1] is not equal to FPT under randomised parameterised reductions, and let $\Phi$ be a family $(\phi_1,\phi_2,\ldots)$ of functions $\phi_k: \mathcal{L}(k) \rightarrow \{0,1\}$, such that the function mapping $k \mapsto \phi_k$ is computable.  Then, if \paramdec{\genprob}$(\Phi)$ is W[1]-complete, there is no FPTRAS for \paramcount{\genprob}$(\Phi)$.  Similarly, if \paramdec{\genprobcol}$(\Phi)$ is W[1]-complete, there is no FPTRAS for \paramcount{\genprobcol}$(\Phi)$.
\label{no-decision}
\end{cor}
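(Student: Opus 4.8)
The plan is to obtain the corollary directly from Proposition~\ref{random-dec} by a contrapositive argument, so this really is an immediate consequence. Suppose, for contradiction, that \paramdec{\genprob}$(\Phi)$ is W[1]-complete but that \paramcount{\genprob}$(\Phi)$ nonetheless admits an FPTRAS. Then Proposition~\ref{random-dec} supplies a randomised fpt-algorithm $A$ that, on any input $(G,k)$, returns \textbf{yes} with probability greater than $1/2$ when $(G,k)$ is a yes-instance of \paramdec{\genprob}$(\Phi)$, and returns \textbf{no} with certainty when $(G,k)$ is a no-instance.

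First I would compose $A$ with a reduction from a known W[1]-complete problem. Since \paramdec{\genprob}$(\Phi)$ is W[1]-hard, there is a deterministic fpt many-one reduction $\rho$ from \paramdec{Clique} to \paramdec{\genprob}$(\Phi)$; running $\rho$ and then $A$ yields a randomised fpt-algorithm for \paramdec{Clique} with exactly the same one-sided error behaviour, namely yes-instances accepted with probability $>1/2$ and no-instances always rejected. Next I would package this as a randomised parameterised reduction in the sense recalled in Section~\ref{param-cplxty}. Let $\Pi_{\mathrm{triv}}$ be the (trivially FPT) problem over $\{0,1\}$ whose unique yes-instance is the string $1$, equipped with the constant parameterisation. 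The reduction sends an instance $I$ of \paramdec{Clique} to $1$ if the composed algorithm outputs \textbf{yes} on $I$ and to $0$ otherwise; this runs in fpt time, a yes-instance of \paramdec{Clique} is mapped to a yes-instance of $\Pi_{\mathrm{triv}}$ with probability $>1/2$ (so condition~(1) of the definition holds, say with $h(k)=2$ and $c'=0$), and a no-instance is mapped to $0$ with certainty. Hence \paramdec{Clique}, and therefore every problem in W[1], reduces to a problem in FPT under randomised parameterised reductions, contradicting the hypothesis; so no FPTRAS for \paramcount{\genprob}$(\Phi)$ can exist.

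For the multicolour statement I would run the identical argument, replacing \paramdec{Clique} by \paramdec{Multicolour Clique} (which is W[1]-complete by \cite{fellows09}, as already used in the proof of Proposition~\ref{MCclique+IS-dec}) and invoking the multicolour part of Proposition~\ref{random-dec} to obtain the corresponding one-sided-error algorithm.

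I do not anticipate any real obstacle here. The only point needing a moment's care is checking that a one-sided-error randomised fpt-algorithm whose success probability is bounded below by a fixed constant does meet the (much weaker) success requirement $1/(h(\kappa(I))\cdot|I|^{c'})$ demanded of a randomised parameterised reduction — which it does with room to spare — so that the composed algorithm genuinely witnesses ``W[1] $=$ FPT under randomised parameterised reductions.''
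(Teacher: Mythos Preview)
Your proposal is correct and follows exactly the route the paper intends: the corollary is stated in the paper as an ``immediate'' consequence of Proposition~\ref{random-dec}, with no further proof given, and you have simply spelled out the contrapositive argument and checked that the resulting one-sided-error algorithm meets the definition of a randomised parameterised reduction. One minor remark: in the multicolour case there is no need to switch the source problem to \paramdec{Multicolour Clique}; since you are assuming \paramdec{\genprobcol}$(\Phi)$ is W[1]-complete, ordinary \paramdec{Clique} already reduces to it, and the argument goes through verbatim.
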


Recently, more refined models of parameterised random complexity have been developed by Montoya and M\"{u}ller \cite{montoya13} and Chauhan and Rao \cite{chauhan14}, in which the use of randomness is restricted.  However, given that the definition of an FPTRAS does not place any restrictions on the nature of the randomised algorithm, the existence of an FPTRAS for a problem does not have immediate implications in this new framework.  Possible future research directions to address this issue are discussed in Section \ref{future} below.

\subsubsection*{Relationships between multicolour and uncoloured problems}

The relationships between the complexities of multicolour and uncoloured versions of the problems are somewhat more complicated.  For the case of exact counting, the following result was proved in \cite{connected}, using an inclusion-exclusion method reminiscent of that used in \cite{dalmau04} (and discussed in Section \ref{ex-grid} above).

\begin{prop}
For any family $\Phi$, we have \paramcount{\genprobcol}$(\Phi)$ \leqfptT \paramcount{\genprob}$(\Phi)$.
\label{uncol-col}
\end{prop}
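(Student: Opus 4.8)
The plan is to give an fpt Turing reduction via inclusion--exclusion over the colour classes, in the spirit of the argument of Dalmau and Jonsson \cite{dalmau04} mentioned above.

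Given an instance $(G,k,f)$ of \paramcount{\genprobcol}$(\Phi)$, I would first observe that a tuple $(v_1,\ldots,v_k) \in V^{\underline{k}}$ is colourful under $f$ precisely when its set of colours $\{f(v_1),\ldots,f(v_k)\}$ equals $[k]$ (it has $k$ elements and there are only $k$ colours available). For each $S \subseteq [k]$, let $G_S = G[f^{-1}(S)]$ be the subgraph of $G$ induced by those vertices whose colour lies in $S$. Since $G_S$ is an \emph{induced} subgraph of $G$, for any tuple $(v_1,\ldots,v_k)$ of distinct vertices of $G_S$ we have $G_S[v_1,\ldots,v_k] = G[v_1,\ldots,v_k]$ as labelled graphs, and hence $\StrEmb(\mathcal{H}_{\phi_k},G_S)$ --- exactly the quantity returned by an oracle for \paramcount{\genprob}$(\Phi)$ on input $(G_S,k)$ --- counts precisely the tuples $(v_1,\ldots,v_k) \in V^{\underline{k}}$ with $\phi_k(G[v_1,\ldots,v_k]) = 1$ whose colour set is contained in $S$.

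Next I would set up the inclusion--exclusion. For $S \subseteq [k]$ write $N(S) = \StrEmb(\mathcal{H}_{\phi_k},G_S)$, and let $M(S)$ be the number of tuples $(v_1,\ldots,v_k) \in V^{\underline{k}}$ with $\phi_k(G[v_1,\ldots,v_k]) = 1$ whose colour set is \emph{exactly} $S$. By the previous paragraph $N(S) = \sum_{T \subseteq S} M(T)$, so Möbius inversion over the subset lattice of $[k]$ gives $M([k]) = \sum_{T \subseteq [k]} (-1)^{k-|T|} N(T)$. The left-hand side $M([k])$ is exactly the required output $\ColStrEmb(\mathcal{H}_{\phi_k},G,f)$ of \paramcount{\genprobcol}$(\Phi)$, since having colour set exactly $[k]$ is the same as being colourful.

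Finally I would assemble the algorithm: for each of the $2^k$ subsets $T \subseteq [k]$, construct $G_T$ and query the oracle for \paramcount{\genprob}$(\Phi)$ on $(G_T,k)$ to obtain $N(T)$, then output $\sum_{T \subseteq [k]} (-1)^{k-|T|} N(T)$. Each $G_T$ is built, and the final sum formed, in time polynomial in $|G|$; there are $2^k$ oracle calls, a function of $k$ alone; and every query has parameter $k$, so the bound $g(k) = k$ on the queried parameters holds. Hence this is an fpt Turing reduction. I do not expect a genuine obstacle: the only points needing care are the identity $N(S) = \sum_{T \subseteq S} M(T)$, which is immediate from the definitions, and the claim that the oracle's value on $G_S$ equals $N(S)$, which relies precisely on $\phi_k$ depending only on the subgraph induced by the chosen vertices together with the fact that $G_S$ is an induced subgraph of $G$.
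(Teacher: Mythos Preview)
Your proposal is correct and follows essentially the same approach as the paper: the paper states that the result was proved in \cite{connected} using an inclusion--exclusion method reminiscent of that of Dalmau and Jonsson \cite{dalmau04}, and your argument is precisely this inclusion--exclusion over colour classes.
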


However, we saw in Section \ref{color-coding} above (Propositions \ref{dec:col->uncol} and \ref{app:col->uncol}) that this relationship between the multicolour and uncoloured variants for any given property $\Phi$ is reversed in the cases of approximate counting and decision.

All of these relationships between the complexities of the six different problem variants are summarised in Figure \ref{complexity-summary}: an arrow from problem $A$ to problem $B$ implies that (subject to standard assumptions in parameterised complexity theory), if problem $A$ is efficiently solvable (either it is in FPT, or it admits an FPTRAS, as appropriate), then there must also be an efficient algorithm for problem $B$.  A crossed out arrow from $A$ to $B$ indicates that there is some property $\Phi$ for which variant $A$ has an efficient algorithm but variant $B$ does not.

\begin{figure}
\centering
\captionsetup{singlelinecheck=off}
\caption[complexity implications]
{Summary of the complexity implications between the problems considered
\begin{enumerate}
\item Trivial
\item Proposition \ref{random-dec} (randomised decision algorithm)
\item Proposition \ref{uncol-col} (inclusion-exclusion)
\item Proposition \ref{dec:col->uncol} (color-coding)
\item Proposition \ref{app:col->uncol} (color-coding)
\item Counterexample: \paramdec{Clique or Independent Set} (\cite{arvind02} and Proposition \ref{MCclique+IS-dec})
\item Counterexample: \paramcount{Clique or Independent Set} (Corollary \ref{clique+IS-app} and Proposition \ref{MCclique+IS-dec}, together with Corollary \ref{no-decision})
\item Counterexample: \paramcount{Connected Induced Subgraph} (\cite{connected})
\item Counterexample: \paramcount{Sub}$(\mathcal{C})$ for any class $\mathcal{C}$ of bounded treewidth but unbounded vertex-cover number, e.g.~\paramcount{Path} (\cite{arvind02} and \cite{radu14})
\end{enumerate}}
\includegraphics[width = 0.7 \linewidth]{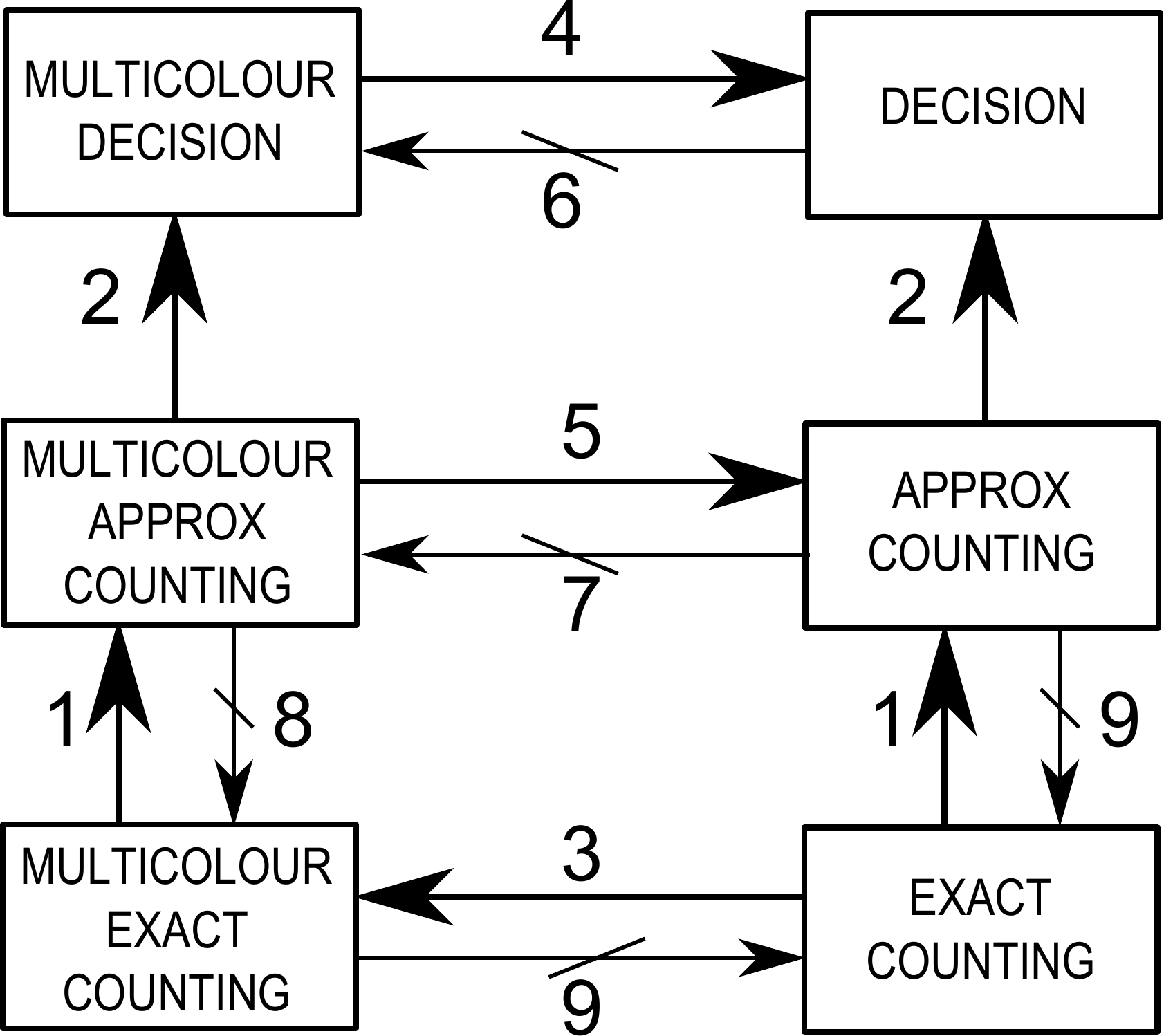}
\label{complexity-summary}
\end{figure}

\section{New results}
\label{new}

In this section, we extend methods based on the Excluded Grid Theorem, discussed in Section \ref{ex-grid} above, to give new hardness results for several of the problem variants.  Specifically, we prove that the following results hold whenever $\min^*(\Phi)$ is a class of graphs having unbounded treewidth.
\begin{enumerate}
\item Both \paramcount{\genprob}$(\Phi)$ and \paramcount{\genprobcol}$(\Phi)$ are \#W[1]-complete (Theorem \ref{exact-hard}).
\item The multicolour decision problem, \paramdec{\genprobcol}$(\Phi)$ is W[1]-complete (Theorem \ref{decision-hard}).
\item Assuming that W[1] is not equal to FPT under randomised parameterised reductions, there is no FPTRAS for \paramcount{\genprobcol}$(\Phi)$ (Corollary \ref{approx-hard}).
\end{enumerate}
Recall that inclusion of these problems in the appropriate classes was discussed in Section \ref{model} above, so it suffices for the first two results to prove the appropriate hardness results.

Our hardness results apply to any property $\Phi$ satisfying the condition that $\min^*(\Phi)$ has unbounded treewidth.  We will later restrict our attention to uniformly monotone properties, allowing us to give dichotomy results for decision and approximate counting in the multicolour setting, when the hardness results are combined with existing positive results discussed above; the dichotomies are stated explicitly in Section \ref{complexity-results} below.

All of these results are based on a common construction, described and analysed in Section \ref{grid-minors}, which exploits the Excluded Grid Theorem (Theorem \ref{excluded-grid}).  Making use of the properties of the construction established in Section \ref{grid-minors}, the proofs of the complexity results given in Section \ref{complexity-results} are reasonably straightforward.

Before embarking on the proof of these results, we consider briefly some examples of properties to which they are applicable, namely properties whose sets of minimal satisfying elements do not have bounded treewidth.  Some obvious examples come from properties that in some way depend on the treewidth of the graph: for example, if we set $\phi_k(H,\pi) = 1$ if and only if $H$ has treewidth at least $\theta(k)$, where $\theta$ is any function such that $\theta(k) \rightarrow \infty$ as $k \rightarrow \infty$, then it is clear that there is no constant that bounds the treewidth of elements of $\min(\Phi)$.  The same is true if, for example, we set $\phi_k(H,\pi) = 1$ if and only if the chromatic number of $H$ is at least $\theta(k)$ (where again $\theta(k) \rightarrow \infty$ as $k \rightarrow \infty$), since any graph with treewidth at most $t$ has chromatic number at most $t+1$ (a greedy colouring with $t+1$ colours can be obtained by considering the bags of the indexing tree in a post-order traversal, with respect to some arbitrarily chosen root).

It is also clear that our results apply when $\phi_k$ is true only if $H$ contains a complete bipartite spanning subgraph.  This category of problems includes the case in which $\phi_{k} (H) = 1$ if and only if $k$ is even and $H$ contains $K_{\frac{k}{2},\frac{k}{2}}$ as a subgraph; the parameterised complexity of this problem from the point of view of decision (in the uncoloured case) was a long-standing open problem, recently resolved by Lin \cite{lin14}.

Slightly less obvious examples of properties $\Phi$ for which $\min(\Phi)$ does not have bounded treewidth arise when $\phi_k$ is true on a $k$-vertex subgraph if and only if certain connectivity conditions are satisfied: while \paramcount{Connected Induced Subgraph}, studied in \cite{connected}, is an example of a property for which all minimal elements do have bounded treewidth (since the minimal elements satisfying $\phi_k$ are precisely all trees on $k$ vertices), if we instead demand stronger connectivity conditions this is no longer the case.  The graph obtained by subdividing every edge of the $k \times k$ grid exactly once (that is, replacing every edge of the grid with a new vertex adjacent to both endpoints of the original edge) has treewidth at least $k$, as it contains the $k \times k$ grid as a minor, and is also a minimally 2-connected graph, in the sense that it is 2-connected but deleting any edge will result in a graph that is no longer 2-connected (as it will contain a vertex of degree one).  Similarly, the graph illustrated in Figure \ref{3-connected} has treewidth at least $k$ and is 3-connected, but deleting any edge will result in a graph that is no longer 3-connected (as such a subgraph will contain at least one vertex of degree at most 2).  Both of these problems were already shown to be hard from the point of view of decision in the uncoloured case in \cite{betzler11}, in which the authors proved that the problem of deciding whether a graph contains an $r$-connected induced subgraph on $k$ vertices is W[1]-complete.  

Combining any of these properties with an additional condition on, say, the number of edges in the subgraph leads to non-monotone properties to which our results apply: a concrete example would be a property $\Phi$ such that $\phi_k(H,\pi)$ is true if and only if $H$ is 2-connected and has an even number of edges.

\begin{figure}[h]
\centering
\includegraphics[width = 0.7 \linewidth]{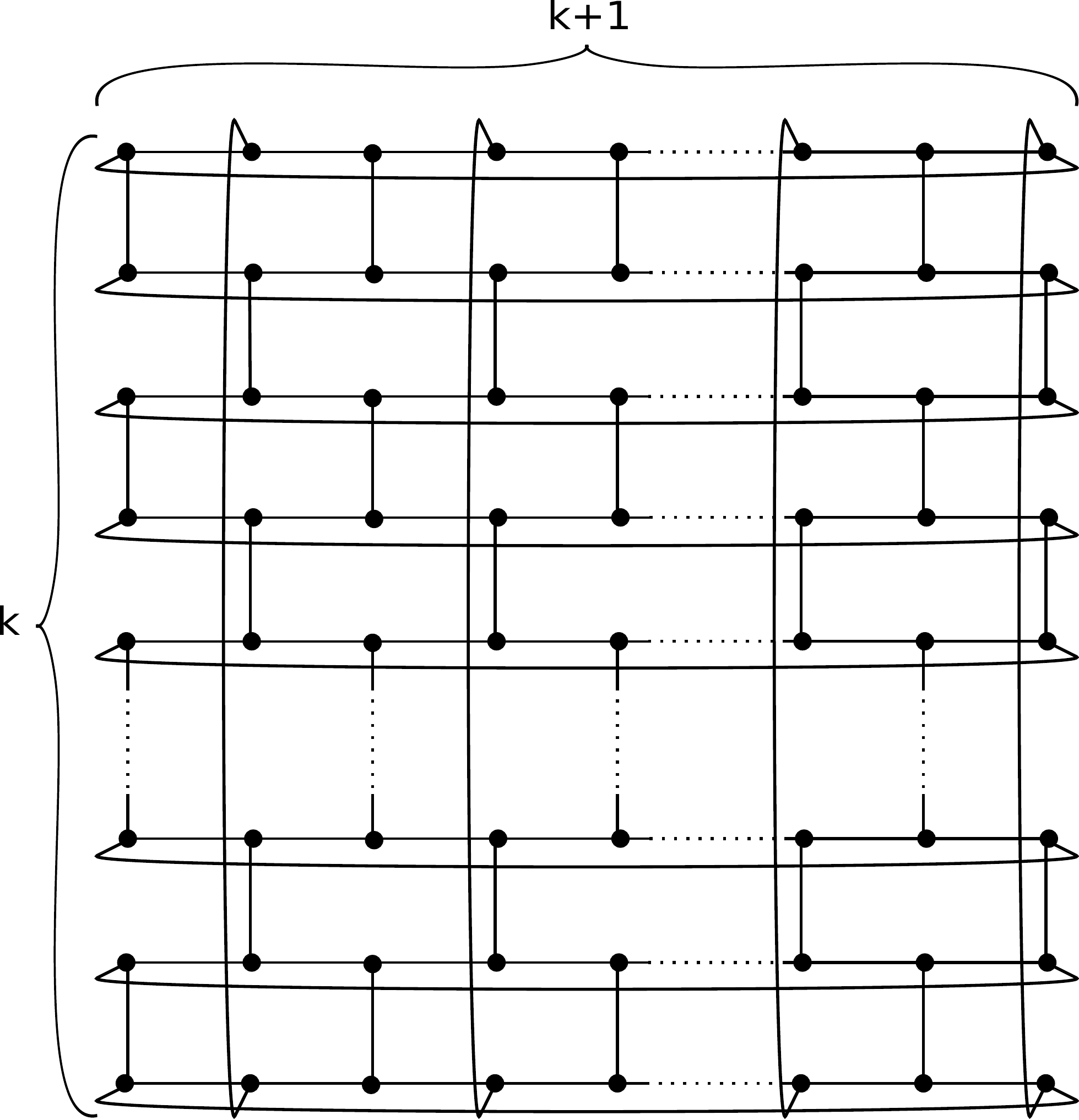}
\caption{A minimally 3-connected graph with treewidth at least $k$.}
\label{3-connected}
\end{figure}

\subsection{Graphs with grid minors}
\label{grid-minors}

In this section we present a construction based on a graph which contains a large grid minor; we define the construction in Section \ref{construction}, then analyse its key properties in Section \ref{analysis}.  This construction will form the basis of the complexity results mentioned above and proved in Section \ref{complexity-results} below.  The goal of the construction is to define, given an arbitrary graph $G$ and a graph $H$ containing a large grid as a minor, a new coloured graph $G_H$ such that the number of colourful copies of $H$ in $G_H$ and the number of $k$-cliques in $G$ are closely related.

Before presenting this construction, we make a simple observation about situations in which we will be able to apply this construction.

\begin{prop}
Let $\Phi$ be a family $(\phi_1,\phi_2,\ldots)$ of functions $\phi_k: \mathcal{L}(k) \rightarrow \{0,1\}$, such that the function mapping $k \mapsto \phi_k$ is computable.  Suppose that $\min^*(\Phi)$ is a class of graphs having unbounded treewidth.  Then, for every $j \in \mathbb{N}$ there exists $k \in \mathbb{N}$ such that some $H \in \min^*(\phi_k)$ contains the $(j \times \binom{j}{2})$ grid as a minor; moreover, the value of this $k$ is bounded by some computable function of $j$.
\label{contains-grid}
\end{prop}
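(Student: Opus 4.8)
The plan is to unwind the two definitions that connect $\min^*(\Phi)$ with treewidth and then feed the unbounded-treewidth hypothesis into the Excluded Grid Theorem. First I would observe that $\min^*(\Phi) = \bigcup_k \min^*(\phi_k)$ is, by hypothesis, a class of unbounded treewidth; hence for every $w \in \mathbb{N}$ there is some $k$ and some $H \in \min^*(\phi_k)$ with $\tw(H) \geq w$. Given $j$, I would set $w = w\bigl(\max(j, \binom{j}{2})\bigr)$, where $w$ is the computable function supplied by the Excluded Grid Theorem (Theorem \ref{excluded-grid}): the $(\ell \times \ell)$ grid is a minor of every graph of treewidth at least $w(\ell)$. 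Applying the hypothesis with this value of $w$ yields $k$ and $H \in \min^*(\phi_k)$ with $\tw(H) \geq w(\ell)$ where $\ell = \max(j, \binom{j}{2})$, so $H$ contains the $(\ell \times \ell)$ grid as a minor. Since the $(j \times \binom{j}{2})$ grid is a subgraph (hence a minor) of the $(\ell \times \ell)$ grid, and the minor relation is transitive, $H$ contains the $(j \times \binom{j}{2})$ grid as a minor, giving the first part of the statement.

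For the ``moreover'' clause I need the value of $k$ to be bounded by a computable function of $j$. This does not follow from the bare statement ``$\min^*(\Phi)$ has unbounded treewidth'' unless I also use that the map $k \mapsto \phi_k$ is computable. The idea is to make the choice of $k$ algorithmic rather than merely existential: given $j$, compute $w(\ell)$ as above, then iterate through $k = 1, 2, 3, \ldots$, and for each $k$ compute $\phi_k$ (possible since $k \mapsto \phi_k$ is computable), extract the finite set $\mathcal{H}_{\phi_k}^*$ of unlabelled graphs on $k$ vertices satisfying $\phi_k$ under some labelling, compute $\min^*(\phi_k)$ (a finite computation: finitely many graphs on $k$ vertices, and checking the subgraph partial order and treewidth are all decidable), and test whether some $H \in \min^*(\phi_k)$ has treewidth $\geq w(\ell)$. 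Because $\min^*(\Phi)$ has unbounded treewidth, this search is guaranteed to terminate; let $k(j)$ be the first $k$ at which it succeeds. Then $k(j)$ is computable as a function of $j$, which is exactly what is claimed.

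The main obstacle is the ``moreover'' part, specifically the subtlety that ``unbounded treewidth'' is a purely existential statement whereas we want a computable bound. The resolution above hinges on termination of the search loop, which is where ``$\min^*(\Phi)$ has unbounded treewidth'' is used in an essential (not merely convenient) way — it guarantees some $k$ with the required high-treewidth element exists, so the loop halts. I should be slightly careful that computing $\min^*(\phi_k)$ is genuinely effective: we need $\phi_k$ given as an actual algorithm deciding membership in $\mathcal{L}(k)$, then $\mathcal{H}_{\phi_k}^*$ is obtained by running $\phi_k$ on all (finitely many) labelled graphs on $k$ vertices, and minimality with respect to the unlabelled-subgraph order plus the treewidth test are all decidable on finite graphs. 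The only genuinely non-constructive ingredient, the function $w$ from the Excluded Grid Theorem, is explicitly asserted to be computable in Theorem \ref{excluded-grid}, so composing it into our bound causes no difficulty. Everything else is a routine chase through the definitions of $\min^*$, $\mathcal{H}_{\phi_k}^*$, and the transitivity of the minor relation.
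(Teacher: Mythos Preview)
Your proposal is correct and follows essentially the same route as the paper: apply the Excluded Grid Theorem with threshold $w(\binom{j}{2})$ (your $\max(j,\binom{j}{2})$ only differs for $j\le 2$), use unbounded treewidth of $\min^*(\Phi)$ to find a suitable $H$, and make the bound on $k$ computable by searching through $k=1,2,\ldots$ using the computability of $k\mapsto\phi_k$. Your treatment of the ``moreover'' clause is more detailed than the paper's one-line parenthetical, but the argument is the same.
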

\begin{proof}
By the Excluded Grid Theorem (Theorem \ref{excluded-grid}), there exists a computable function $w$ such that every graph with treewidth at least $w(\binom{j}{2})$ contains as a minor the $(\binom{j}{2} \times \binom{j}{2})$ grid, and hence the $(j \times \binom{j}{2})$ grid.  The fact that $\min^*(\Phi)$ has unbounded treewidth means that there is a computable function $g:\mathbb{N} \rightarrow \mathbb{N}$ such that, for any $w(\binom{j}{2}) \in \mathbb{N}$, there exists $k \leq g(w(\binom{j}{2}))$ such that some $H \in \min^*(\phi_k)$ has treewidth at least $w(\binom{j}{2})$ (observe that $g$ can be computed by considering $\phi_1,\phi_2,\ldots$ in turn).  Thus, $H \in \min^*(\phi_k)$ contains the $(j \times \binom{j}{2})$ grid as a minor, as required.
\end{proof}

\subsubsection{Construction}
\label{construction}

In this section we describe the construction that will be used to prove our complexity results in Section \ref{complexity-results} below.

Let $G$ be any graph, and $k \in \mathbb{N}$, and let $H$ be a graph which contains the $k \times \binom{k}{2}$ grid as a minor.  (Later, we will be interested in determining the number of $k$-cliques in $G$.)  We then define a new coloured graph $G_H$, and a colouring $f$ of its vertices with $|V(H)|$ colours.  In order to define our construction, it will be necessary to fix an arbitrary total order $\prec$ on $V(G)$.  To assist in the explanation of this construction, an example of a pair of graphs $G$ and $H$, for $k=3$, is illustrated in Figure \ref{constr-example}.

\begin{figure}
\centering
\includegraphics[width = 0.25 \linewidth]{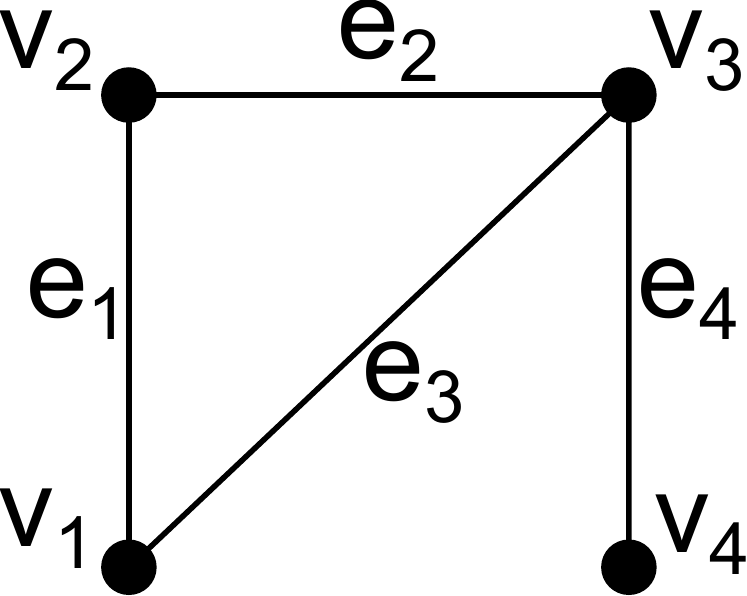} \\
\vspace{1cm}
\includegraphics[width = 0.45 \linewidth]{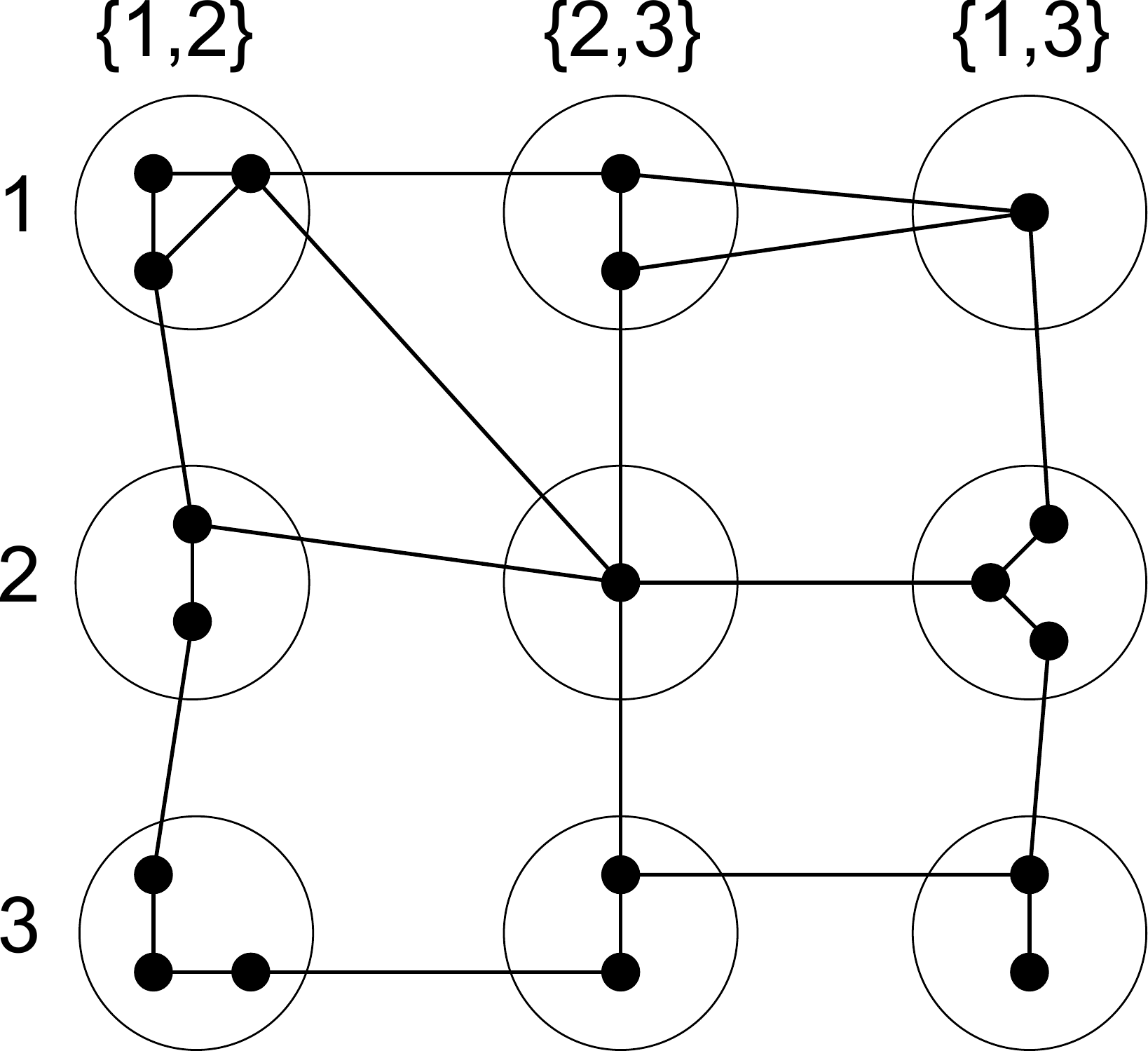}
\caption{An example pair of graphs $G$ (above) and $H$ (below), with $k=3$ (note that in general $H$ will not be based on a square grid, as $k \neq \binom{k}{2}$ for larger $k$).}
\label{constr-example}
\end{figure}

Let $A$ denote the $k \times \binom{k}{2}$ grid.  We will label each of the $k$ rows with a distinct element of the set $\{1,\ldots,k\}$, and each of the $\binom{k}{2}$ columns with a distinct unordered pair of elements from the same set.  We then denote by $a_{(i,\{j,l\})}$ the vertex in row $i$ and column $\{j,l\}$.

Since the grid $A$ is a minor of $H$, there exists a function $m: V(A) \rightarrow \mathcal{P}(V(H))$ such that the image of each vertex in $A$ induces a connected subgraph in $H$, distinct vertices of $A$ map to disjoint subsets of $V(H)$, and for each edge $aa' \in A$ there exist $u \in m(a)$ and $u' \in m(a')$ such that $uu' \in E(H)$.  Note that the union of the image of $m$ is not necessarily the whole of $V(H)$; if not, then we denote by $V_H'$ the vertices of $V(H) \setminus \bigcup_{a \in V(A)} m(a)$.

In our new graph $G_H$, we will have multiple copies of the subgraph $H[m(a)]$ of $H$ for each $a \in V(A)$, each one indexed by a pair consisting of a vertex and an edge from $G$. Specifically, for each $a_{(i,\{j,l\})} \in V(A)$, we will have a copy $H^{(v,e)}_{(i,\{j,l\})}$ of $H[m(a_{(i,\{j,l\})})]$ for each pair $(v,e) \in V(G) \times E(G)$ that satisfies the condition
\begin{equation}
\text{if } i \in \{j,l\} \text{ then } v \text{ is incident with } e.
\label{incidence-condition}
\end{equation}
In the example of Figure \ref{constr-example}, we will therefore have eight copies of $H[m(a_{(1,\{1,2\})})]$ (each of which is a triangle): $H^{(v_1,e_1)}_{(1,\{1,2\})}$, $H^{(v_1,e_3)}_{(1,\{1,2\})}$, $H^{(v_2,e_1)}_{(1,\{1,2\})}$, $H^{(v_2,e_2)}_{(1,\{1,2\})}$, $H^{(v_3,e_2)}_{(1,\{1,2\})}$, $H^{(v_3,e_3)}_{(1,\{1,2\})}$, $H^{(v_3,e_4)}_{(1,\{1,2\})}$, and $H^{(v_4,e_4)}_{(1,\{1,2\})}$.

For each $u \in m(a_{i,\{j,l\}})$, we denote by $u^{(v,e)}$ the corresponding vertex in $H^{(v,e)}_{(i,\{j,l\})}$ (note that each $u \in V(H)$ belongs to subgraphs $H_{(i,\{j,l\})}$ for only a single value of $(i,\{j,l\})$).  For each $(i,\{j,l\}) \in [k] \times [k]^{(2)}$, we denote by $V_{(i,\{j,l\})}$ the union of the following set:
\begin{align*}
\{V\left(H_{(i,\{j,l\})}^{(v,e)}\right): \quad & (v,e) \in V(G) \times E(G), \text{ and } \\ 
				& i \in \{j,l\} \implies v \text{ incident with } e\}.
\end{align*}

We now set 
$$V(G_H) = \bigcup_{(i,\{j,l\}) \in [k] \times [k]^{(2)}} V_{(i,\{j,l\})} \cup V_H'.$$
That is, the vertex-set of $G_H$ is made up of all the vertices in subgraphs $H_{(i,\{j,l\})}^{(v,e)}$, together with any vertices in $V_H'$.  

We now define the colouring $f$ of $V(G_H)$.  First, fix a colouring $\omega$ of $V(H)$ which gives every vertex a different colour from the set $\{1,\ldots,|V(H)|\}$.  For any $u \in V_H'$, we now set $f(u) = \omega(u)$, while for every vertex $u^{(v,e)} \in V(G_H)$ we set $f(u^{(v,e)}) = \omega(u)$.

To complete the definition of $G_H$, we now define its edge-set.  Let $u,w \in V(G_H)$.  Then $uw \in E(G_H)$ if and only if the following conditions are satisfied:
\begin{description}
\item[Condition 1:] $\omega^{-1}(f(u))\omega^{-1}(f(w)) \in E(H)$ (i.e.~we only join vertices if their colours occur at adjacent vertices in $H$), and
\item[Condition 2:] if $u \in H_{(i,\{j,l\})}^{(v,e)}$ and $w \in H_{(i',\{j',l'\})}^{(v',e')}$ then
\begin{enumerate}
\item if $i = i'$ then $v = v'$, and if $i < i'$ then $v \prec v'$, and 
\item if $\{j,l\} = \{j',l'\}$ then $e = e'$.
\end{enumerate}
\end{description}
Note that the second condition implies that there are no edges between $H_{(i,\{j,l\})}^{(v,e)}$ and $H_{(i,\{j,l\})}^{(v',e')}$ for $(v,e) \neq (v',e')$.  In Figures \ref{constr-eg1} and \ref{constr-eg2} we illustrate parts of this construction for the example in Figure \ref{constr-example} above, showing the edges between $V_{(1,\{1,2\})}$ and $V_{(2,\{1,2\})}$ (Figure \ref{constr-eg1}) and between $V_{(2,\{1,2\})}$ and $V_{(2,\{2,3\})}$ (Figure \ref{constr-eg2}); for clarity in the diagram, we label each subgraph $H_{(i,\{j,l\})}^{(v,e)}$ with $(v,e)$ only.

\begin{figure}[h]
\centering
\includegraphics[width = 0.7 \linewidth]{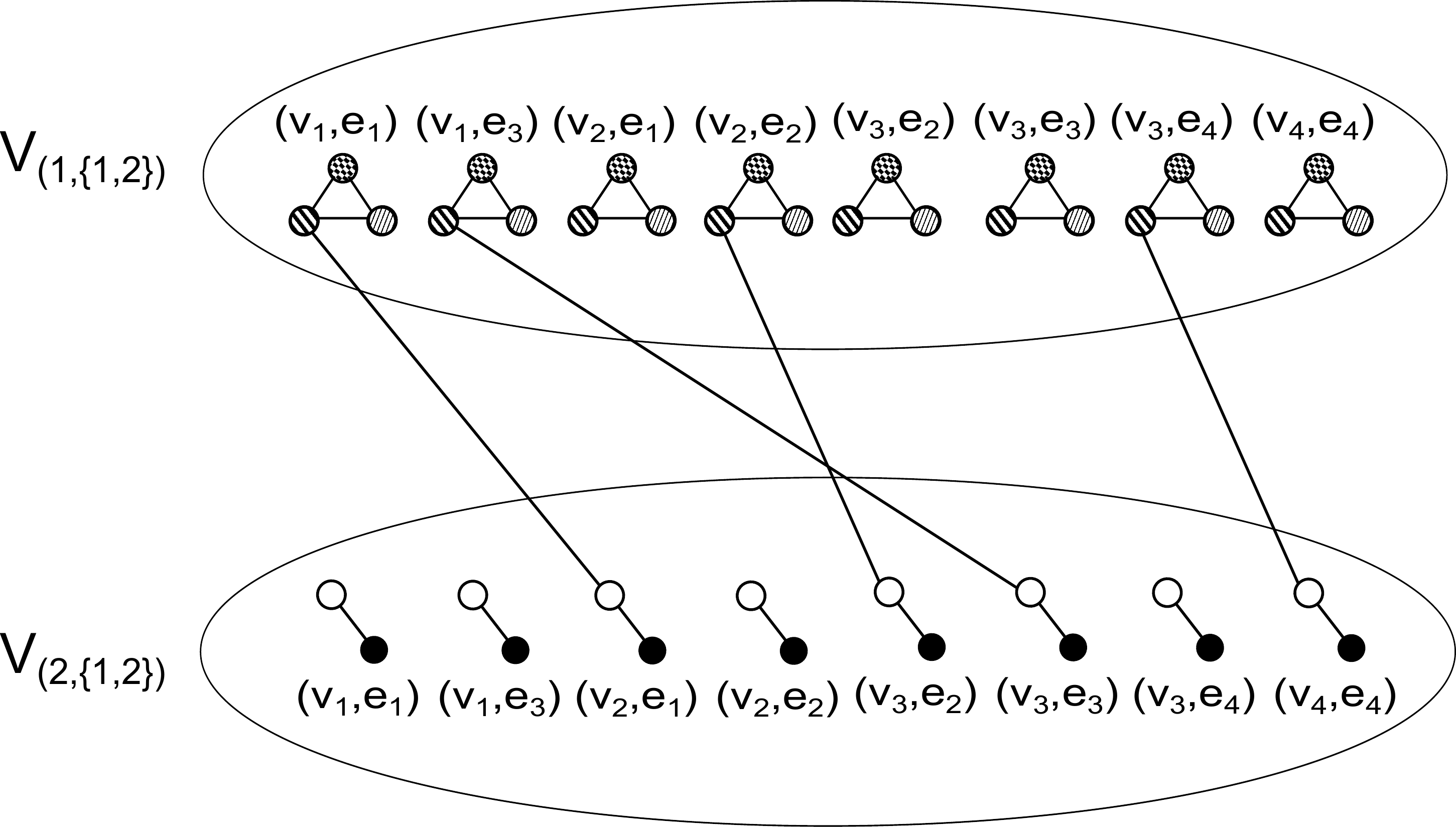}
\caption{Edges between $V_{(1,\{1,2\})}$ and $V_{(2,\{1,2\})}$}
\label{constr-eg1}
\end{figure}

\begin{figure}
\centering
\includegraphics[width = 0.5 \linewidth]{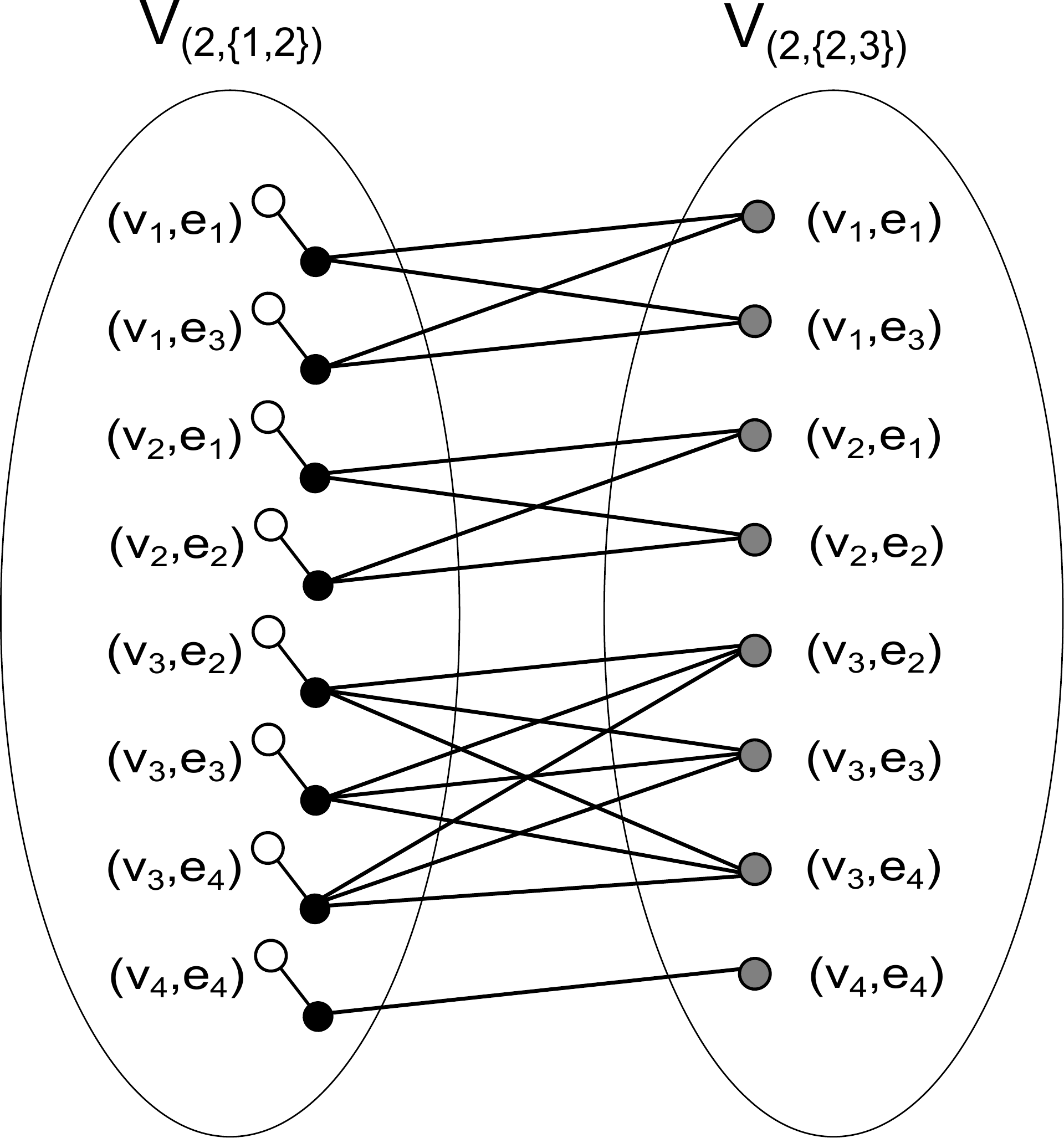}
\caption{Edges between $V_{(2,\{1,2\})}$ and $V_{(2,\{2,3\})}$}
\label{constr-eg2}
\end{figure}

Finally, observe that, given $G$ and $H$, the graph $G_H$ and its colouring $f$ can clearly be constructed in time $h(k) + n^{O(1)}$, where $h$ (a computable function of $k$) represents the time required to find a grid minor in $H$.

\subsubsection{Analysis of the construction}
\label{analysis}

In this section, we prove the important properties of the construction defined above; these properties will form the basis of the proofs of our complexity results in Section \ref{complexity-results}.  We begin by giving a necessary and sufficient condition for a colourful subgraph in $G_H$ to be isomorphic to $H$.

\begin{lma}
Let $Y$ be a colourful subset of $V(G_H)$.  Then $G_H[Y]$ is isomorphic to a subgraph of $H$.  Moreover, $G_H[Y]$ is isomorphic to $H$ if and only if, for every $u \in Y \cap H_{(i,\{j,l\})}^{(v,e)}$ and $w \in Y \cap H_{(i',\{j',l'\})}^{(v',e')}$ such that $\omega^{-1}(f(u))\omega^{-1}(f(w)) \in E(H)$, the following conditions are satisfied:
\begin{enumerate}
\item if $i = i'$ then $v = v'$, and if $i < i'$ then $v \prec v'$, and
\item if $\{j,l\} = \{j',l'\}$ then $e = e'$.
\end{enumerate}
\label{iso-condition}
\end{lma}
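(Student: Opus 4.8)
The plan is to analyse the structure of $G_H[Y]$ directly from the two defining conditions on $E(G_H)$. Since $Y$ is colourful, it contains exactly one vertex of each colour in $\{1,\ldots,|V(H)|\}$, so there is a natural bijection $\psi : Y \to V(H)$ defined by $\psi(y) = \omega^{-1}(f(y))$. The first step is to verify that $\psi$ is always a graph homomorphism whose image is a subgraph of $H$: by Condition 1 in the construction, whenever $uw \in E(G_H)$ we have $\omega^{-1}(f(u))\omega^{-1}(f(w)) \in E(H)$, so $\psi(u)\psi(w) \in E(H)$. This immediately gives that $G_H[Y] \cong \psi(Y) \subseteq H$ (as an injective homomorphism onto an induced-or-spanning subgraph — in fact $\psi$ restricted to $Y$ is an isomorphism onto a subgraph of $H$ because $\psi$ is a bijection between vertex sets and edge-preserving in one direction). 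So the first sentence of the lemma falls out of Condition 1 alone.

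For the ``moreover'' part I would argue that $G_H[Y] \cong H$ if and only if $\psi$ is edge-\emph{reflecting}, i.e.\ every edge of $H$ is realised: for all $u,w \in Y$ with $\psi(u)\psi(w) \in E(H)$ we must have $uw \in E(G_H)$. Since $\psi$ is already a bijection $Y \to V(H)$ that preserves edges, $G_H[Y]$ has at most $|E(H)|$ edges, with equality iff $\psi$ reflects edges, and a bijection reflecting and preserving all edges is precisely an isomorphism. Now I would unwind what ``$uw \in E(G_H)$'' means for such a pair: Condition 1 is automatic because we assumed $\psi(u)\psi(w) = \omega^{-1}(f(u))\omega^{-1}(f(w)) \in E(H)$. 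Therefore $uw \in E(G_H)$ reduces to Condition 2 — exactly items (1) and (2) in the lemma statement. Hence $G_H[Y] \cong H$ iff for every colour-adjacent pair $u,w \in Y$ (meaning $\omega^{-1}(f(u))\omega^{-1}(f(w)) \in E(H)$) Condition 2 holds, which is precisely what is claimed.

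One point I would need to be careful about is the role of the vertices in $V_H'$: these have no index $(v,e)$ and no associated subgraph $H_{(i,\{j,l\})}^{(v,e)}$, so when $u$ or $w$ lies in $V_H'$ the conditions (1)--(2) in the statement are vacuous, and indeed edges incident with $V_H'$ are governed by Condition 1 alone in the construction, so there is nothing extra to check. I would phrase the statement and argument so that the quantification ``for every $u \in Y \cap H_{(i,\{j,l\})}^{(v,e)}$ and $w \in Y \cap H_{(i',\{j',l'\})}^{(v',e')}$'' only ranges over pairs for which both memberships make sense, with the $V_H'$ case handled automatically. A second routine check is that the image $\psi(Y)$ is genuinely an induced subgraph structure compatible with ``isomorphic to a subgraph of $H$'' in the paper's sense (allowing edge deletions), which is immediate since $\psi$ is a bijection of vertex sets preserving edges.

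The main obstacle, such as it is, is purely bookkeeping: making the bijection $\psi$ explicit, observing that colourfulness forces $|V(Y)| = |V(H)|$ so the only way to get $|E(G_H[Y])| = |E(H)|$ is edge-reflection, and then matching ``$uw \in E(G_H)$ for a colour-adjacent pair'' with Condition 2 cleanly — taking care that Condition 2's case split on $i = i'$ versus $i < i'$ (and the implicit symmetric case $i > i'$) is stated in a way that is symmetric in $u,w$. I do not anticipate any genuine difficulty; the lemma is essentially a restatement of the construction's edge rules once one notices that colourfulness pins down the vertex-colour bijection.
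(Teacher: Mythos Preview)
Your proposal is correct and follows essentially the same approach as the paper: both define the bijection $Y \to V(H)$ via $y \mapsto \omega^{-1}(f(y))$, use Condition~1 to show it is edge-preserving (giving the ``subgraph of $H$'' part), and then observe that $G_H[Y] \cong H$ forces this bijection to reflect edges, which reduces to Condition~2 for colour-adjacent pairs. Your explicit handling of the $V_H'$ vertices and the edge-counting justification for why the bijection must be an isomorphism are slightly more careful than the paper's presentation, but the argument is the same.
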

\begin{proof}
Let $\pi: Y \rightarrow V(H)$ be the mapping given by $\pi(u) = \omega^{-1}(f(u))$.  We claim first that $Y$ defines an isomorphism from $G_H[Y]$ to a subgraph of $H$.  To verify this claim, it suffices to check that, whenever $u,w \in Y$ with $uw \in E(G_H)$, we also have $\pi(u)\pi(w) \in E(H)$.  However, by the first condition in the definition of $E(G_H)$, we can only have $uw \in E(G_H)$ if $\omega^{-1}(f(u))\omega^{-1}(f(w)) \in E(H)$, or in other words if $\pi(u)\pi(w) \in E(H)$, as required.

Now, suppose that $Y$ is such that, for every $u \in Y \cap H_{(i,\{j,l\})}^{(v,e)}$ and $w \in Y \cap H_{(i',\{j',l'\})}^{(v',e')}$ with $\omega^{-1}(f(u))\omega^{-1}(f(w)) \in E(H)$, the following conditions are satisfied:
\begin{enumerate}
\item if $i = i'$ then $v = v'$, and if $i < i'$ then $v \prec v'$, and
\item if $\{j,l\} = \{j',l'\}$ then $e = e'$.
\end{enumerate}
We claim that in this case $\pi$ defines an isomorphism from $G_H[Y]$ to $H$.  This claim holds provided that, for all $u,w \in Y$ such that $\pi(u)\pi(w) \in E(H)$ we also have $uw \in E(G_H)$.  The assumption that $\pi(u)\pi(w) \in E(H)$ implies $uw$ satisfies Condition 1 in the definition of $E(G_H)$ above, and the condition that we are assuming is exactly Condition 2 in this definition; thus it follows immediately that in this situation we will have $uw \in E(G_H)$ as required.

Conversely, suppose that the colourful subset $Y \subset V(G_H)$ induces a subgraph isomorphic to $H$; we need to verify that, for every $u \in Y \cap H_{(i,\{j,l\})}^{(v,e)}$ and $w \in Y \cap H_{(i',\{j',l'\})}^{(v',e')}$ such that $\omega^{-1}(f(u))\omega^{-1}(f(w)) \in E(H)$, the conditions in the statement of the lemma are satisfied.  By definition of $\pi$ we know that, for any such $u$ and $w$, $\pi(u)\pi(w) \in E(H)$.  Moreover, since we have already established that $\pi$ defines an isomorphism from $G_H[Y]$ to a subgraph of $H$, it follows in this case that $\pi$ must be an isomorphism from $G_H[Y]$ to $H$; thus the fact that $\pi(u)\pi(w) \in E(H)$ implies that $uw \in E(G_H)$.  It then follows immediately from Condition 2 that 
\begin{enumerate}
\item if $i = i'$ then $v = v'$, and if $i < i'$ then $v \prec v'$, and
\item if $\{j,l\} = \{j',l'\}$ then $e = e'$,
\end{enumerate}
as required.
\end{proof}

Using this characterisation, we can now relate the number of colourful subgraphs in $G_H$ that are isomorphic to $H$ to the number of cliques in $G$.

\begin{lma}
Let $G$, $H$, $G_H$ and $f$ be as described above.  Then
$$\ColSubInd(H,G_H,f) = \Clique_k(G).$$
\label{count-cliques}
\end{lma}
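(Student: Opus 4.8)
The plan is to establish a bijection between colourful copies of $H$ in $G_H$ and $k$-cliques in $G$. The key tool will be Lemma \ref{iso-condition}, which already characterises exactly which colourful subsets $Y \subseteq V(G_H)$ induce a copy of $H$: those for which the ``consistency conditions'' on vertices and edges hold across all pairs whose colours are adjacent in $H$. So the work is to read off from those consistency conditions a well-defined $k$-clique in $G$, and conversely to show every $k$-clique arises exactly once.

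First I would take a colourful $Y$ with $G_H[Y] \cong H$. Since $Y$ is colourful and $\omega$ gives every vertex of $H$ a distinct colour, $Y$ contains exactly one vertex of each colour, hence exactly one vertex from each subgraph $H_{(i,\{j,l\})}^{(v,e)}$ that is ``used'' — more precisely, for each cell $(i,\{j,l\})$ of the grid $A$ and each $u \in m(a_{(i,\{j,l\})})$, there is a unique $(v,e)$ with $u^{(v,e)} \in Y$. I would first argue that this $(v,e)$ depends only on the cell $(i,\{j,l\})$, not on the individual vertex $u \in m(a_{(i,\{j,l\})})$: because $H[m(a_{(i,\{j,l\})})]$ is connected, any two vertices of it are joined by a path of edges of $H$, and applying condition 1 of Lemma \ref{iso-condition} along such a path (with $i=i'$, $\{j,l\}=\{j',l'\}$) forces the same $v$, while condition 2 forces the same $e$. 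So we get well-defined functions assigning to each cell $(i,\{j,l\})$ a vertex $v_{(i,\{j,l\})} \in V(G)$ and an edge $e_{(i,\{j,l\})} \in E(G)$. Next, using the horizontal edges of the grid $A$ (which connect cell $(i,\{j,l\})$ to $(i,\{j',l'\})$ within the same row $i$, giving rise to an edge of $H$ between the corresponding images), condition 2(a) with $i = i'$ forces $v_{(i,\{j,l\})} = v_{(i,\{j',l'\})}$; so the vertex depends only on the row, call it $v_i$. Similarly the vertical edges of $A$ (same column $\{j,l\}$, rows $i$ and $i+1$) together with condition 2(b) force $e_{(i,\{j,l\})}$ to depend only on the column, call it $e_{\{j,l\}}$; and condition 2(a) with $i < i'$ along a column gives $v_i \prec v_{i'}$, so $v_1, \dots, v_k$ are distinct. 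Finally the incidence condition \eqref{incidence-condition} built into which copies $H_{(i,\{j,l\})}^{(v,e)}$ exist at all forces, for the column $\{j,l\}$ and each $i \in \{j,l\}$, that $v_i$ is incident with $e_{\{j,l\}}$; since $e_{\{j,l\}}$ is a single edge incident with both $v_j$ and $v_l$, we conclude $e_{\{j,l\}} = v_j v_l \in E(G)$. Hence $\{v_1, \dots, v_k\}$ is a $k$-clique in $G$.

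For the reverse direction, given a $k$-clique $\{v_1, \dots, v_k\}$ in $G$ with $v_1 \prec \cdots \prec v_k$, I would define $Y$ by selecting, for each cell $(i,\{j,l\})$ and each $u \in m(a_{(i,\{j,l\})})$, the vertex $u^{(v_i, v_j v_l)}$, together with all of $V_H'$; one checks the incidence condition \eqref{incidence-condition} holds so these copies exist, that $Y$ is colourful (one vertex of each colour in $[|V(H)|]$), and that the consistency conditions of Lemma \ref{iso-condition} are satisfied by construction, so $G_H[Y] \cong H$. These two maps are mutually inverse: starting from a clique, building $Y$, and reading back the clique returns the original (by the row/column analysis above, which pins down $v_i$ and $e_{\{j,l\}}$ uniquely from $Y$); and starting from a valid $Y$, reading off the clique and rebuilding gives back $Y$ since every vertex of $Y$ of colour $\omega(u)$ for $u \in m(a_{(i,\{j,l\})})$ must be $u^{(v_i, e_{\{j,l\}})}$. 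Hence $\ColSubInd(H,G_H,f) = \Clique_k(G)$, as required.

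I expect the main obstacle to be the bookkeeping in the forward direction: carefully justifying that the labels $(v,e)$ attached to the vertices of $Y$ are \emph{globally} consistent — constant on each connected branch $m(a)$, then constant along rows for the $v$-coordinate and along columns for the $e$-coordinate — using only the edges of the grid $A$ and the fact that these give rise to genuine edges of $H$ (and hence of $G_H[Y]$, since $\pi$ is an isomorphism onto all of $H$). The subtlety is that Lemma \ref{iso-condition} only constrains pairs $u,w \in Y$ whose colours are \emph{adjacent in $H$}, so one must make sure that enough such adjacent pairs are available — precisely those coming from edges within each $H[m(a)]$ and from the grid edges of $A$ — to propagate consistency across the whole grid; the connectivity of each $H[m(a)]$ and the connectivity of the grid $A$ are exactly what make this propagation go through.
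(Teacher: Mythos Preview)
Your proposal is correct and follows essentially the same argument as the paper's proof: a bijection between colourful copies of $H$ in $G_H$ and $k$-cliques in $G$, using connectivity of each branch $H[m(a)]$ together with the row- and column-edges of the grid to propagate the consistency conditions of Lemma~\ref{iso-condition}, and the incidence condition~\eqref{incidence-condition} to pin down $e_{\{j,l\}} = v_jv_l$. The only differences are cosmetic: the paper constructs the clique-to-copy map first and the copy-to-clique map second (breaking the latter into a sequence of numbered claims), and establishes the bijection by showing both maps are injective rather than mutually inverse; your labelling ``condition 2(a)/2(b)'' should be ``condition 1/2 of Lemma~\ref{iso-condition}'' to match the paper's numbering, but the mathematics is the same.
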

\begin{proof}
We will prove this result by showing that there is a one-to-one correspondence between colourful copies of $H$ in $G_H$ with colouring $f$ and $k$-vertex subsets of $V(G)$ that induce cliques.  We begin by giving an injective map from the set of $k$-vertex subsets inducing cliques in $G$ to the colourful copies of $H$ in $G_H$ with colouring $f$; we will then proceed to give an injective map in the opposite direction.

Given a a subset $X \in V(G)^{(k)}$ that induces a clique in $G$, we denote by $x_1,\ldots,x_k$ the elements of $X$, with $x_1 \prec \ldots \prec x_k$.  We then claim that the mapping $\pi$, where
$$\pi(X) = V_H' \cup \bigcup_{\substack{i,j,l \in [k] \\ j \neq l}} V\left(H_{(i,\{j,l\})}^{(x_i, x_jx_l)}\right)$$
is an injective mapping from $k$-vertex subsets of $V(G)$ that induce cliques in $G$ to colourful copies of $H$ in $G_H$ (with respect to the colouring $f$).

Observe first that this mapping is well-defined: if $\{x_1,\ldots,x_k\}$ induces a clique in $G$, all edges $x_jx_l$ (with $j,l \in [k]$, $j \neq l$) are edges of $G$, and moreover $x_i$ is incident with $x_jx_l$ whenever $i \in \{j,l\}$, so the specified subgraphs $H_{(i,\{j,l\})}^{(x_i, x_jx_l)}$ do exist in $G_H$.  

It is immediate from the definition of $\pi$ that, for $X \neq X'$, we will have $\pi(X) \neq \pi(X')$.  Thus, to show that $\pi$ defines an injective map from the set of $k$-tuples inducing cliques in $G$ to the set of colourful copies of $H$ in $G_H$, it remains only to demonstrate that, whenever $X$ induces a clique in $G$, $\pi(X)$ must induce a colourful copy of $H$ in $G_H$ (with respect to $f$).

We verify first that $\pi(X)$ is indeed colourful with respect to $f$.  First note that all vertices of $V_H'$ have distinct colours, and do not share colours with any vertex in $V(G_H) \setminus V_H'$.  Next observe that, for each $(i,\{j,l\}) \in [k] \times [k]^{(2)}$, there exists a unique pair $(v,e) \in V(G) \times E(G)$ such that $\pi(X) \cap V_{(i,\{j,l\})} = V\left(H_{(i,\{j,l\})}^{(v,e)}\right)$.  Thus, $\pi(X)$ contains exactly one vertex $u^{(v,e)}$ (for some $(v,e) \in V(G) \times E(G)$) for each $u \in V(H) \setminus V_H'$, and so by definition of $f$ we see that $\pi(X)$ is indeed colourful with respect to $f$. 

It remains to verify that $\pi(X)$ induces a copy of $H$.  As we know that $\pi(X)$ is colourful, it suffices by Lemma \ref{iso-condition} to verify that, for every pair of vertices $u,w \in \pi(X) \setminus V_H'$ such that $u \in H_{(i,\{j,l\})}^{(v,e)}$, $w \in H_{(i',\{j',l'\})}^{(v',e')}$ and $\omega^{-1}(f(u))\omega^{-1}(f(w)) \in E(H)$, the following conditions are satisfied:
\begin{enumerate}
\item if $i = i'$ then $v = v'$, and if $i < i'$ then $v \prec v'$, and
\item if $\{j,l\} = \{j',l'\}$ then $e = e'$.
\end{enumerate}
However, these two conditions follow immediately from the definition of $\pi$, so $\pi(X)$ does indeed induce a colourful copy of $H$, as required.

We now proceed to define a mapping in the opposite direction, that is an injective mapping from colourful copies of $H$ in $G_H$ (with colouring $f_H$) to $k$-vertex subsets of $V(G)$ that induce cliques in $G$.  Suppose that the colourful subset $Y \subset V(G_H)$ induces a copy of $H$.  In order to define our mapping, we prove a series of claims.  We begin by arguing that the colourful subset $Y$ can be used to define a mapping from $[k] \times [k]^{(2)}$ to pairs in $V(G) \times E(G)$.

\begin{lblclaim}
For each $(i,\{j,l\}) \in [k] \times [k]^{(2)}$, there is a unique pair $(v,e) \in V(G) \times E(G)$ such that $Y \cap V_{(i,\{j,l\})} = V\left(H_{(i,\{j,l\})}^{(v,e)}\right)$.
\label{claim-unique-pair}
\end{lblclaim}
\begin{proof}[Proof of Claim \ref{claim-unique-pair}]
It follows from Lemma \ref{iso-condition} and the connectedness of $H[m(u)]$ for each $u \in V(H)$ that, for each $(i,\{j,l\}) \in [k] \times [k]^{(2)}$, there exists a unique pair $(v,e) \in V(G) \times E(G)$ such that $Y \cap V_{(i,\{j,l\})} \subseteq V\left(H_{(i,\{j,l\})}^{(v,e)}\right)$.  Thus, as $Y$ is colourful, we must in fact have $Y \cap V_{(i,\{j,l\})} = V\left(H_{(i,\{j,l\})}^{(v,e)}\right)$ for this $(v,e)$, as required (otherwise we would miss some colour). 
\renewcommand{\qedsymbol}{$\square$ (Claim \ref{claim-unique-pair})}
\end{proof}

Hence we can define a mapping $\sigma_Y: [k] \times [k]^{(2)} \rightarrow V(G) \times E(G)$ by setting $\sigma_Y(i,\{j,l\})$ to be this unique pair $(v,e)$ such that $Y \cap V_{(i,\{j,l\})} = V\left(H_{(i,\{j,l\})}^{(v,e)}\right)$; we denote by $\sigma_Y^1(i,\{j,l\})$ and $\sigma_Y^2(i,\{j,l\})$ respectively the first and second elements of this pair (so $\sigma_Y^1(i,\{j,l\}) \in V(G)$, and $\sigma_Y^2(i,\{j,l\}) \in E(G)$).  

We now show that the value of $\sigma_Y^1$ depends only on its first argument, and $\sigma_Y^2$ only on its second.

\begin{lblclaim}
For each $i \in [k]$ and every $\{j,l\},\{j',l'\} \in [k]^{(2)}$, we have $\sigma_Y^1((i,\{j,l\})) = \sigma_Y^1((i,\{j',l'\}))$.  Similarly, for each $\{j,l\} \in [k]^{(2)}$ and every $i,i' \in [k]$, we have $\sigma_Y^2((i,\{j,l\})) = \sigma_Y^2((i',\{j,l\}))$.
\label{claim-sigma-args}
\end{lblclaim}
\begin{proof}[Proof of Claim \ref{claim-sigma-args}]
To prove this claim we suppose, for a contradiction, there exist $\{j,l\}$ and $\{j',l'\}$ such that $\sigma_Y^1((i,\{j,l\})) \neq \sigma_Y^1((i,\{j',l'\}))$; we may assume without loss of generality that $a_{(i,\{j,l\})}$ and $a_{(i,\{j',l'\})}$ are adjacent in the grid $A$.  Now, by definition of the minor map $m$, there exist vertices $z,z' \in V(H)$ such that $z \in m(a_{(i,\{j,l\})})$, $z' \in m(a_{(i,\{j',l'\})})$ and $zz' \in E(H)$.  Let $y$ and $y'$ be the vertices of $Y$ whose colours match $z$ and $z'$ respectively, that is $y = f|_Y^{-1}\omega(z)$ and $y' = f|_Y^{-1}\omega(z')$ (as $Y$ is colourful, the restriction of $f$ to $Y$ is a bijection), so $\omega^{-1}(f(y)) \omega^{-1}(f(y')) \in E(H)$.  Note that we must have $y \in H_{(i,\{j,l\})}^{\sigma_Y((i,\{j,l\}))}$ and $y' \in H_{(i,\{j',l'\})}^{\sigma_Y((i,\{j',l'\}))}$.  By our initial assumption, $\sigma_Y^1((i,\{j,l\})) \neq \sigma_Y^1((i,\{j',l'\}))$, but by Lemma \ref{iso-condition} this then means that $Y$ cannot induce a copy of $H$ in $G_H$, giving a contradiction.  Thus we know that, for each $i \in [k]$, we must have $\sigma_Y^1((i,\{j,l\})) = \sigma_Y^1((i,\{j',l'\}))$ for all $\{j,l\},\{j',l'\} \in [k]^{(2)}$.  An analogous argument can be used to show that, for each $\{j,l\} \in [k]^{(2)}$, we must have $\sigma_Y^2((i,\{j,l\})) = \sigma_Y^2((i',\{j,l\}))$ for every $i,i' \in [k]$.
\renewcommand{\qedsymbol}{$\square$ (Claim \ref{claim-sigma-args})}
\end{proof}

Hence we can define the mapping $\tau_Y^1: [k] \rightarrow V(G)$ by setting $\tau_Y^1(i)$ to be the unique vertex $v \in V(G)$ such that, for any value of $\{j,l\} \in [k]^{(2)}$, $\sigma_Y^1((i,\{j,l\})) = v$, and the mapping $\tau_Y^2: [k]^{(2)} \rightarrow E(G)$ by setting $\tau_Y^2(\{j,l\})$ to be the unique edge $e \in E(G)$ such that, for any value of $i \in [k]$, $\sigma_Y^2((i,\{j,l\})) = e$.

Next, we demonstrate that the mapping $\tau_Y^1$ is injective.

\begin{lblclaim}
For $i \neq i'$, we have $\tau_Y^1(i) \neq \tau_Y^1(i')$.
\label{claim-tau-inj}
\end{lblclaim}
\begin{proof}[Proof of Claim \ref{claim-tau-inj}]
We argue that, for $1 \leq i \leq k-1$, we must have $\tau_Y^1(i) \prec \tau_Y^1(i+1)$, implying that $\tau_Y^1(1) \prec \tau_Y^1(2) \prec \cdots \prec \tau_Y^1(k)$ and in particular that no two of these vertices are the same.  As in the proof of Claim \ref{claim-sigma-args}, we observe that by definition of the minor map $m$, for each $i \in \{1,\ldots,k-1\}$, there exist $z,z' \in V(H)$ such that $z \in m(a_{(i,\{1,2\})})$, $z' \in m(a_{(i+1,\{1,2\})})$ and $zz' \in E(H)$.  As before, we set $y = f|_Y^{-1}\omega(z)$ and $y' = f|_Y^{-1}\omega(z')$, and observe that $\omega^{-1}(f(y))\omega^{-1}(f(y')) \in E(H)$.   Note also, by definition of $\tau_Y^1$ and $\tau_Y^2$, that we have $y \in H_{(i,\{1,2\})}^{(\tau_Y^1(i),\tau_Y^2(\{1,2\}))}$ and $y' \in H_{(i+1,\{1,2\})}^{(\tau_Y^1(i+1),\tau_Y^2(\{1,2\}))}$.  Since $i < i+1$, it now follows from Lemma \ref{iso-condition} that $\tau_Y^1(i) \prec \tau_Y^1(i+1)$, as required.
\renewcommand{\qedsymbol}{$\square$ (Claim \ref{claim-tau-inj})}
\end{proof}

We now demonstrate a relationship between the mappings $\tau_Y^1$ and $\tau_Y^2$.

\begin{lblclaim}
Let $j,l \in [k]$ with $j \neq l$.  Then $\tau_Y^2(\{j,l\}) = \tau_Y^1(j)\tau_Y^2(l)$.
\label{claim-tau1,2}
\end{lblclaim}
\begin{proof}[Proof of Claim \ref{claim-tau1,2}]
Set $e = \tau_Y^2(\{j,l\})$.  Then, by definition, $\sigma_Y(j,\{j,l\}) = (\tau_Y^1(j),e)$ and $\sigma_Y(l,\{j,l\}) = (\tau_Y^1(l),e)$.  Thus, both subgraphs $H_{(j,\{j,l\})}^{(\tau_Y^1(j),e)}$ and $H_{(l,\{j,l\})}^{(\tau_Y^1(l),e)}$ must exist in $G_H$; by definition of $V(G_H)$, since $j,l \in \{j,l\}$, this implies that both $\tau_Y^1(j)$ and $\tau_Y^1(l)$ must be incident with $e$.  By Claim \ref{claim-tau-inj}, we know that $\tau_Y^1(j) \neq \tau_Y^1(l)$, so this is only possible if in fact $e = \tau_Y^1(j) \tau_Y^1(l)$, as required.
\renewcommand{\qedsymbol}{$\square$ (Claim \ref{claim-tau1,2})}
\end{proof}

Using the facts proved so far, we can now give a mapping from colourful subsets $Y \subset V(G_H)$ with $G_H[Y] \cong H$ to $k$-tuples of $V(G)$ that induce cliques.

\begin{lblclaim}
$V_Y = \{\tau_Y^1(i): 1 \leq i \leq k\}$ induces a $k$-clique in $G$.
\label{claim-clique}
\end{lblclaim}
\begin{proof}[Proof of Claim \ref{claim-clique}]
We already know from Claim \ref{claim-tau-inj} that, for $i \neq i'$, $\tau_Y^1(i) \neq \tau_Y^1(i')$ and hence that $|V_Y| = k$; thus it suffices to verify that, for each $1 \leq i < i' \leq k$, $\tau_Y^1(i)\tau_Y^1(i') \in E(G_H)$.  But we know from Claim \ref{claim-tau1,2} that $\tau_Y^1(i)\tau_Y^1(i') = \tau_Y^2(\{i,i'\}) \in E(G_H)$, completing the proof of the claim.
\renewcommand{\qedsymbol}{$\square$ (Claim \ref{claim-clique})}
\end{proof}

We have now given a well-defined mapping $Y \mapsto (\tau_Y^1(1),\ldots,\tau_Y^1(k))$ from colourful subsets $Y$ of $V(G_H)$ with $G_H[Y] \cong H$ to $k$-tuples of $V(G)$ that induce cliques, so in order to complete the proof of the lemma it remains only to argue that this mapping is injective.

\begin{lblclaim}
The mapping $Y \mapsto (\tau_Y^1(1),\ldots,\tau_Y^1(k))$ is injective.
\label{claim-map-inj}
\end{lblclaim}
\begin{proof}[Proof of Claim \ref{claim-map-inj}]
Suppose that $Y$ and $Y'$ are distinct colourful subsets of $G_H$ with respect to the colouring $f$, such that $G_H[Y],G_H[Y'] \cong H$.  By the reasoning above, we know that there is a function $\tau_{Y}^1: [k] \rightarrow V(G)$ such that, for each $(i,\{j,l\}) \in [k] \times [k]^{(2)}$, 
$$Y \cap V_{(i,\{j,l\})} = V\left(H_{(i,\{j,l\})}^{(\tau_{Y}^1(i),\tau_{Y}^1(j)\tau_{Y}^1(l))}\right)$$
and 
$$Y' \cap V_{(i,\{j,l\})} = V\left(H_{(i,\{j,l\})}^{(\tau_{Y'}^1(i),\tau_{Y'}^1(j)\tau_{Y'}^1(l))}\right).$$  
Since $Y$ and $Y'$ are both colourful, we know that
\begin{align*}
Y & = V_H' \cup \bigcup_{(i,\{j,l\}) \in [k] \times [k]^{(2)}} \left(Y \cap V_{(i,\{j,l\})}\right) \\
  & = V_H' \cup \bigcup_{(i,\{j,l\}) \in [k] \times [k]^{(2)}} V\left(H_{(i,\{j,l\})}^{(\tau_{Y}^1(i),\tau_{Y}^1(j)\tau_{Y}^1(l))}\right),
\end{align*}
and similarly
\begin{align*}
Y' & = V_H' \cup \bigcup_{(i,\{j,l\}) \in [k] \times [k]^{(2)}} \left(Y' \cap V_{(i,\{j,l\})}\right) \\
   & = V_H' \cup \bigcup_{(i,\{j,l\}) \in [k] \times [k]^{(2)}} V\left(H_{(i,\{j,l\})}^{(\tau_{Y'}^1(i),\tau_{Y'}^1(j)\tau_{Y'}^1(l))}\right).
\end{align*}
Thus, if $\tau_Y^1 = \tau_{Y'}^1$, we would have $Y=Y'$, a contradiction to our initial assumption.  Hence $\tau_Y^1 \neq \tau_{Y'}^1$, implying immediately that $(\tau_Y(1),\ldots,\tau_Y^1(k)) \neq (\tau_{Y'}(1),\ldots,\tau_{Y'}^1(k))$, as required.
\renewcommand{\qedsymbol}{$\square$ (Claim \ref{claim-map-inj})}
\end{proof}
This claim completes the proof that there is a one-to-one correspondence between colourful subsets $Y \subset V(G_H)$ with $G_H[Y] \cong H$ and $k$-tuples of vertices that induce cliques in $G$, and hence completes the proof that
$$\ColSubInd(H,G_H,f) = \Clique_k(G),$$
as required.
\end{proof}

Before proving our complexity results in the next section, we give a simple fact about the number of colourful copies of graphs from some collection $\mathcal{H}$ in $G_H$, provided that $\mathcal{H}$ satisfies certain conditions.

\begin{lma}
Let $\mathcal{H}$ be a collection of labelled graphs on $k$ vertices such that $(H,\pi) \in \mathcal{H}$ and there is no proper subgraph $H'$ of $H$ such that, for some labelling $\pi'$, $(H',\pi') \in \mathcal{H}$.  Then
$$\ColStrEmb(\mathcal{H},G_H,f) = \ColStrEmb(\mathcal{H}^H,G_H,f).$$
\label{H-minimal}
\end{lma}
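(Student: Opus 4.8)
The plan is to prove set equality for the collections of injective maps being counted, rather than to manipulate the counts directly. Write $k=|V(H)|$; since $f$ colours $V(G_H)$ with exactly $|V(H)|$ colours, a colourful subset of $V(G_H)$ has precisely $k$ vertices, consistent with $\mathcal{H}$ being a collection of labelled graphs on $k$ vertices. One inclusion is immediate: $\mathcal{H}^H\subseteq\mathcal{H}$ by definition, so any injective $\theta:[k]\to V(G_H)$ witnessed (in the sense of the definition of $\ColStrEmb$) by some $(H',\pi')\in\mathcal{H}^H$ is also witnessed by an element of $\mathcal{H}$, giving $\ColStrEmb(\mathcal{H}^H,G_H,f)\le\ColStrEmb(\mathcal{H},G_H,f)$. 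Everything then reduces to the reverse inequality.

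For the reverse inequality I would fix an injective $\theta:[k]\to V(G_H)$ such that $\theta([k])$ is colourful under $f$ and there is some $(H',\pi')\in\mathcal{H}$ with $\theta(i)\theta(j)\in E(G_H)\iff\pi'(i)\pi'(j)\in E(H')$ for all $i,j\in[k]$, and aim to show that in fact $H'\cong H$, so that $(H',\pi')\in\mathcal{H}^H$ and $\theta$ is already counted by $\ColStrEmb(\mathcal{H}^H,G_H,f)$. The first step is to observe that $\psi:=\theta\circ(\pi')^{-1}:V(H')\to V(G_H)$ is a bijection onto $\theta([k])$ (using that $\pi'$ is a bijection and $\theta$ is injective with image of size $k=|V(H')|$), and that the displayed biconditional says precisely that $\psi$ preserves adjacency and non-adjacency; hence $\psi$ is an isomorphism from $H'$ to $G_H[\theta([k])]$. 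The second step is to invoke the first part of Lemma \ref{iso-condition}: since $\theta([k])$ is colourful, $G_H[\theta([k])]$ is isomorphic to a subgraph of $H$, and as $|\theta([k])|=k=|V(H)|$ it is in fact isomorphic to a spanning subgraph of $H$. Combining the two steps, $H'$ is isomorphic to a subgraph of $H$.

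The final step uses the minimality hypothesis on $\mathcal{H}$. If $H'\not\cong H$, then $H'$ is a subgraph of $H$ that is not isomorphic to $H$, i.e.\ a proper subgraph of $H$, yet $(H',\pi')\in\mathcal{H}$ --- contradicting the assumption that no proper subgraph of $H$ carries a labelling making it a member of $\mathcal{H}$. Hence $H'\cong H$, so $(H',\pi')\in\mathcal{H}^H$, as required. Applying this to every admissible $\theta$ shows that every map counted by $\ColStrEmb(\mathcal{H},G_H,f)$ is counted by $\ColStrEmb(\mathcal{H}^H,G_H,f)$, which together with the trivial inclusion yields the claimed equality. I do not anticipate a genuine obstacle; the only points requiring a little care are the identification $G_H[\theta([k])]\cong H'$ coming from the strong-embedding biconditional, and verifying that a $|V(H)|$-vertex subgraph of $H$ not isomorphic to $H$ really is a ``proper subgraph'' in the paper's sense, so that the minimality hypothesis applies.
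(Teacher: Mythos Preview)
Your proposal is correct and follows essentially the same approach as the paper: invoke Lemma~\ref{iso-condition} to see that every colourful subset of $V(G_H)$ induces a subgraph of $H$, then use the minimality hypothesis to rule out proper subgraphs. The paper's proof is a two-sentence sketch of exactly this argument; your version simply unpacks the details (the trivial inclusion, the identification $H'\cong G_H[\theta([k])]$ via the strong-embedding biconditional, and the check that ``proper subgraph'' applies), and these details are all handled correctly.
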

\begin{proof}
We know from Lemma \ref{iso-condition} that any colourful subset in $G_H$ induces a subgraph of $H$; thus, by our assumption that $\mathcal{H}$ does not contain any proper subgraphs of $H$, it follows immediately that the number of colourful labelled copies of graphs from $\mathcal{H}$ in $G_H$ (with respect to the colouring $f$) is exactly equal to the number of colourful labelled copies of graphs from $\mathcal{H}^H$ in $G_H$ (with respect to the colouring $f$).
\end{proof}

\subsection{Complexity results}
\label{complexity-results}

In this section, we use the construction defined in Section \ref{construction}, and in particular its properties established in Section \ref{analysis}, to prove our new complexity results.  In several of the proofs below we will also need a result from \cite{bddlayers}:

\begin{lma}
Let $\mathcal{H}$ be a collection of labelled graphs, and $(H,\pi) \in \mathcal{H}$ a labelled $k$-vertex graph.  Set 
\begin{align*}
\alpha_H = |\{\sigma : \quad & \sigma \text{ a permutation on $[k]$, $\exists (H,\pi') \in \mathcal{H}^H$ such that} \\
				    		 &  \text{$\pi' \circ \sigma^{-1} \circ \pi^{-1}$ defines an automorphism on $H$}\}|.
\end{align*}
Then, for any graph $G$, 
$$\StrEmb(\mathcal{H}^H,G) = \alpha_H \cdot \SubInd(H,G),$$
Moreover, if $G$ is equipped with a $k$-colouring $f$, then
$$\ColStrEmb(\mathcal{H}^H,G,f) = \alpha_H \cdot \ColSubInd(H,G,f).$$
\label{emb->subg}
\end{lma}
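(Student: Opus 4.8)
The plan is to prove both identities simultaneously by a double-counting argument that partitions the strong embeddings according to their image. Since every member of $\mathcal{H}^H$ has underlying graph isomorphic to $H$, any injective $\theta:[k]\to V(G)$ that realises some $(H',\pi')\in\mathcal{H}^H$ has $G[\theta([k])]\cong H'\cong H$; conversely a subset $U$ with $G[U]\cong H$ contributes to $\SubInd(H,G)$ exactly once. So I would first write
\[
\StrEmb(\mathcal{H}^H,G)=\sum_{\substack{U\in V(G)^{(k)}\\ G[U]\cong H}} N(U),
\]
where $N(U)$ denotes the number of bijections $\theta:[k]\to U$ for which the labelled graph $(G[U],\theta)$ is isomorphic, as a labelled graph, to some member of $\mathcal{H}^H$. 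The whole proof then reduces to showing $N(U)=\alpha_H$ for every such $U$, independently of $U$.

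To evaluate $N(U)$, I would fix $U$ with $G[U]\cong H$ together with one reference bijection $\mu:[k]\to U$ realising $(H,\pi)$ itself — such a $\mu$ exists, since composing $\pi$ with any isomorphism $H\to G[U]$ produces one, and $(H,\pi)\in\mathcal{H}^H$. Every bijection $\theta:[k]\to U$ is then $\theta=\mu\circ\gamma$ for a unique permutation $\gamma$ of $[k]$. Unwinding the definition of a strong embedding, and using that $\pi\circ\mu^{-1}:U\to V(H)$ is a graph isomorphism $G[U]\to H$, one checks that $\theta=\mu\circ\gamma$ realises $(H',\pi')\in\mathcal{H}^H$ precisely when $\pi'\circ\gamma^{-1}\circ\pi^{-1}$ is a graph isomorphism from $H$ to $H'$. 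After identifying each underlying graph occurring in $\mathcal{H}^H$ with $H$ via a fixed isomorphism (legitimate since all of them are isomorphic to $H$, and this is the identification implicit in the statement), this is exactly the condition defining the permutations counted by $\alpha_H$. As $\theta\mapsto\gamma$ is a bijection between the bijections $[k]\to U$ and the permutations of $[k]$, it follows that $N(U)=\alpha_H$, and summing over all $U$ with $G[U]\cong H$ gives $\StrEmb(\mathcal{H}^H,G)=\alpha_H\cdot\SubInd(H,G)$. As a sanity check, when $\mathcal{H}^H=\{(H,\pi)\}$ this recovers $\alpha_H=\aut(H)$ and the known identity $\StrEmb(H,G)=\aut(H)\cdot\SubInd(H,G)$.

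For the coloured version, I would observe that $N(U)$ does not involve the colouring $f$ at all: restricting the strong embeddings to those whose image is colourful under $f$ simply restricts the outer sum to the $U$ with $G[U]\cong H$ that are colourful, of which there are $\ColSubInd(H,G,f)$, yielding $\ColStrEmb(\mathcal{H}^H,G,f)=\alpha_H\cdot\ColSubInd(H,G,f)$. The only genuinely delicate point is the bookkeeping in the middle paragraph: getting the composition $\pi'\circ\gamma^{-1}\circ\pi^{-1}$ into exactly the form appearing in the definition of $\alpha_H$, and being careful about the identification of the (possibly distinct but pairwise isomorphic) underlying graphs of $\mathcal{H}^H$ with $H$ so that ``defines an automorphism on $H$'' is literally meaningful. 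Everything else is routine combinatorics; indeed this is precisely the statement already established in \cite{bddlayers}, which could alternatively just be cited.
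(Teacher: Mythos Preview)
Your argument is correct. The paper does not actually prove this lemma at all: it is quoted verbatim as a result from \cite{bddlayers} and simply cited, which you yourself note in your final sentence. So there is no ``paper's own proof'' to compare against here; what you have supplied is a clean direct proof that the paper omits.

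The double-counting structure (group strong embeddings by their image $U$, then show each $U$ with $G[U]\cong H$ contributes exactly $\alpha_H$ embeddings via the change of variables $\theta=\mu\circ\gamma$) is the natural one, and your computation that $\theta$ realises $(H',\pi')$ if and only if $\pi'\circ\gamma^{-1}\circ\pi^{-1}$ is an isomorphism $H\to H'$ is right. You also correctly isolate the one genuinely delicate issue: the statement of the lemma writes ``$(H,\pi')\in\mathcal{H}^H$'' and ``automorphism on $H$'' even though $\mathcal{H}^H$ may contain labelled graphs whose underlying graph is only isomorphic to $H$, so an identification is being made implicitly. Flagging this and handling it by fixing isomorphisms is the right move; without that identification the formula for $\alpha_H$ would not literally match $N(U)$.
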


Note that the value of $\alpha_H$, as defined in the statement of this lemma, can be computed from $H$ and $\mathcal{H}$ in time bounded only by some computable function of $k$.

\subsubsection*{Exact counting}

We begin with the hardness result for exact counting when $\min^*(\Phi)$ does not have bounded treewidth; this result applies in both the multicolour and uncoloured settings.  Note that we do not need to assume that $\Phi$ is monotone.

\begin{thm}
Let $\Phi$ be a family $(\phi_1,\phi_2,\ldots)$ of functions $\phi_k: \mathcal{L}(k) \rightarrow \{0,1\}$, such that the function mapping $k \mapsto \phi_k$ is computable.  Suppose that $\min^*(\Phi)$ has unbounded treewidth.  Then \paramcount{\genprob}$(\Phi)$ and \paramcount{\genprobcol}$(\Phi)$ are \#W[1]-complete.
\label{exact-hard}
\end{thm}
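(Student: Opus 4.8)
The plan is to reduce from \paramcount{Clique}, which is \#W[1]-complete by \cite{flum04}, using the construction from Section \ref{construction} together with its analysis in Section \ref{analysis}. Membership in \#W[1] has already been noted (Proposition \ref{in-W} and the remark following it), so only hardness needs to be proved. First I would prove hardness for the multicolour problem \paramcount{\genprobcol}$(\Phi)$, and then invoke Proposition \ref{uncol-col} (the inclusion--exclusion reduction \paramcount{\genprobcol}$(\Phi)$ \leqfptT \paramcount{\genprob}$(\Phi)$) to transfer hardness to the uncoloured version. Note the reduction will be Turing rather than parsimonious: we will need oracle calls for all relevant values of the parameter and some arithmetic to disentangle the count.

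For the multicolour hardness: given an instance $(G,k)$ of \paramcount{Clique}, apply Proposition \ref{contains-grid} to find $k' \leq g(k)$ (computable) and a graph $H \in \min^*(\phi_{k'})$ that contains the $(k \times \binom{k}{2})$ grid as a minor; this step requires searching through $\phi_1,\phi_2,\ldots$ and computing a grid minor in $H$, all in time bounded by a computable function of $k$. Then build $G_H$ and the colouring $f$ as in Section \ref{construction}, in time $h(k)+n^{O(1)}$. By Lemma \ref{count-cliques}, $\ColSubInd(H,G_H,f) = \Clique_k(G)$. Since $H \in \min^*(\phi_{k'})$, it is minimal among the unlabelled graphs satisfying $\phi_{k'}$; combined with Lemma \ref{H-minimal} (applied to $\mathcal{H} = \mathcal{H}_{\phi_{k'}}$) and Lemma \ref{emb->subg}, we get
\begin{align*}
\ColStrEmb(\mathcal{H}_{\phi_{k'}},G_H,f) &= \ColStrEmb(\mathcal{H}_{\phi_{k'}}^H,G_H,f) = \alpha_H \cdot \ColSubInd(H,G_H,f) = \alpha_H \cdot \Clique_k(G),
\end{align*}
where $\alpha_H$ is computable from $H$ and $\mathcal{H}_{\phi_{k'}}$ in time depending only on $k'$. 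A single oracle call to \paramcount{\genprobcol}$(\Phi)$ on $(G_H,k',f)$ therefore returns $\alpha_H \cdot \Clique_k(G)$; dividing by the (nonzero, precomputed) constant $\alpha_H$ recovers $\Clique_k(G)$. The parameter $k'$ of the query is bounded by a computable function of $k$, so this is a valid fpt Turing reduction, establishing \#W[1]-hardness of \paramcount{\genprobcol}$(\Phi)$.

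The main subtlety — and the place I would be most careful — is the application of Lemma \ref{H-minimal}: one must check that $\mathcal{H}_{\phi_{k'}}$ genuinely contains no proper subgraph of $H$. This is exactly what $H \in \min^*(\phi_{k'})$ guarantees, since $\min^*(\phi_{k'})$ consists of the minimal \emph{unlabelled} graphs in $\mathcal{H}_{\phi_{k'}}^*$, so no proper (unlabelled) subgraph of $H$ can appear in $\mathcal{H}_{\phi_{k'}}$ under any labelling. The colourful subgraphs of $G_H$ are all isomorphic to subgraphs of $H$ by Lemma \ref{iso-condition}, so only copies of $H$ itself can contribute to $\ColStrEmb(\mathcal{H}_{\phi_{k'}},G_H,f)$, and these are governed precisely by Lemma \ref{count-cliques}. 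A secondary point to state cleanly is that $\Phi$ need not be monotone here: the argument only uses the definition of $\min^*$, not monotonicity. Finally, to obtain the uncoloured result, I would simply observe that \paramcount{\genprobcol}$(\Phi)$ \leqfptT \paramcount{\genprob}$(\Phi)$ by Proposition \ref{uncol-col}, so \paramcount{\genprob}$(\Phi)$ is \#W[1]-hard as well, and both problems lie in \#W[1], giving \#W[1]-completeness.
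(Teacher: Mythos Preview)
Your proposal is correct and follows essentially the same approach as the paper's proof: reduce \paramcount{Clique} to \paramcount{\genprobcol}$(\Phi)$ via the grid-minor construction, using Proposition \ref{contains-grid} to locate a suitable $H \in \min^*(\phi_{k'})$, then chain Lemmas \ref{count-cliques}, \ref{emb->subg} and \ref{H-minimal} to recover $\Clique_k(G)$ from a single oracle call (divided by $\alpha_H$), and finally transfer to the uncoloured problem via Proposition \ref{uncol-col}. Your explicit justification that $H \in \min^*(\phi_{k'})$ is precisely what is needed to apply Lemma \ref{H-minimal} is exactly the point the paper uses, and your remark that monotonicity is not required is also consistent with the paper's treatment.
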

\begin{proof}
We prove the result by means of an fpt-Turing reduction from the \#W[1]-complete problem \paramcount{Clique}.  Note that, by Lemma \ref{uncol-col}, it suffices to give a reduction from \paramcount{Clique} to \paramcount{\genprobcol}$(\Phi)$.

Let the graph $G$ be the input to an instance of \paramcount{Clique} with parameter $k$.  By Proposition \ref{contains-grid}, we know that there is $k'$ (bounded by some computable function of $k$) such that some $H \in \min^*(\phi_k)$ contains the $(k \times \binom{k}{2})$ grid as a minor and, moreover, that there is no proper subgraph $H'$ of $H$ which satisfies $\phi_{k'}$ (with any labelling).  Let $\mathcal{H}$ be the set of all $k'$-vertex labelled graphs satisfying $\phi_{k'}$.

Recall that the graph $G_H$ and colouring $f$ can be constructed in time $h(k) + n^{O(1)}$, where $h$ is a computable function.  Given the graph $G_H$, the colouring $f$ and an oracle to \paramcount{\genprobcol}$(\Phi)$, it is now straightforward to compute the number of $k$-cliques in $G$, since
\begin{align*}
\Clique_k(G,f) & = \ColSubInd(H,G_H,f) \\
				& \qquad \qquad \qquad \qquad \mbox{by Lemma \ref{count-cliques}} \\
				& = \frac{1}{\alpha_H} \ColStrEmb(\mathcal{H}^H,G_H,f) \\
				& \qquad \qquad \qquad \qquad \mbox{by Lemma \ref{emb->subg}} \\
				& = \frac{1}{\alpha_H} \ColStrEmb(\mathcal{H},G_H,f) \\
				& \qquad \qquad \qquad \qquad \mbox{by Lemma \ref{H-minimal}},
\end{align*}
where
\begin{align*}
\alpha_H = |\{\sigma : \quad & \sigma \text{ a permutation on $[k]$, $\exists (H,\pi') \in \mathcal{H}^H$ such that} \\
				    		 &  \text{$\pi' \circ \sigma^{-1} \circ \pi^{-1}$ defines an automorphism on $H$}\}|.
\end{align*}
Recall that $\alpha_H$ can be computed from $H$ and $\mathcal{H}$ in time depending only on a computable function of $k'$.  Note also that the value of the parameter in the oracle call used to evaluate $\ColStrEmb(\mathcal{H},G_H,f)$ is at most $g(w(\binom{k}{2}))$, which is a computable function of $k$ only.  This therefore gives an fpt-Turing reduction from \paramcount{Clique} to \paramcount{\genprobcol}$(\Phi)$, as required.
\end{proof}

\subsubsection*{Decision and approximate counting in the multicolour setting}

We now turn our attention to the decision and approximate counting questions.  We begin by giving a sufficient condition for \paramdec{\genprobcol}$(\Phi)$ to be W[1]-complete, which does not make any assumptions about the monotonicity of $\Phi$.  For the definition of fpt-reductions for parameterised decision problems, we again refer the reader to \cite{flumgrohe}.

\begin{thm}
Let $\Phi$ be a family $(\phi_1,\phi_2,\ldots)$ of functions $\phi_k: \mathcal{L}(k) \rightarrow \{0,1\}$, such that the function mapping $k \mapsto \phi_k$ is computable, and suppose that $\min^*(\Phi)$ has unbounded treewidth.  Then \paramdec{\genprobcol}$(\Phi)$ is W[1]-complete.
\label{decision-hard}
\end{thm}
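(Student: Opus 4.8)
The plan is to give an fpt-reduction from $\paramdec{Clique}$, which is W[1]-complete by \cite{downey95}; since membership of $\paramdec{\genprobcol}(\Phi)$ in W[1] was already observed in Section \ref{model}, only hardness needs to be established. Let $(G,k)$ be an instance of $\paramdec{Clique}$. First I would apply Proposition \ref{contains-grid} to obtain a value $k'$, bounded by a computable function of $k$, together with a graph $H \in \min^*(\phi_{k'})$ that contains the $(k \times \binom{k}{2})$ grid as a minor. Because $H$ is \emph{minimal} among unlabelled graphs that satisfy $\phi_{k'}$ under some labelling, the set $\mathcal{H}_{\phi_{k'}}$ of \emph{all} labelled $k'$-vertex graphs satisfying $\phi_{k'}$ contains no labelled graph whose underlying graph is a proper subgraph of $H$. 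Fix any $(H,\pi) \in \mathcal{H}_{\phi_{k'}}$ (one exists since $H \in \min^*(\phi_{k'})$), and build the coloured graph $G_H$ with its colouring $f$ exactly as in Section \ref{construction}.

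For correctness I would chain together the properties established in Section \ref{analysis} together with Lemma \ref{emb->subg}, in the same way as in the proof of Theorem \ref{exact-hard}. Since $\mathcal{H}_{\phi_{k'}}$ contains no proper subgraph of $H$, Lemma \ref{H-minimal} gives $\ColStrEmb(\mathcal{H}_{\phi_{k'}},G_H,f) = \ColStrEmb(\mathcal{H}_{\phi_{k'}}^H,G_H,f)$, which by Lemma \ref{emb->subg} equals $\alpha_H \cdot \ColSubInd(H,G_H,f)$, which by Lemma \ref{count-cliques} equals $\alpha_H \cdot \Clique_k(G)$, where $\alpha_H$ is the quantity defined in Lemma \ref{emb->subg}. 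As $\alpha_H \geq 1$ always (the identity permutation, paired with $\pi' = \pi$, witnesses this), we conclude $\ColStrEmb(\mathcal{H}_{\phi_{k'}},G_H,f) > 0$ if and only if $\Clique_k(G) > 0$. But $\ColStrEmb(\mathcal{H}_{\phi_{k'}},G_H,f) > 0$ says precisely that there is a colourful tuple $(v_1,\ldots,v_{k'}) \in V(G_H)^{\underline{k'}}$ with $\phi_{k'}(G_H[v_1,\ldots,v_{k'}]) = 1$, i.e.\ that $(G_H,k',f)$ is a yes-instance of $\paramdec{\genprobcol}(\Phi)$; and $\Clique_k(G) > 0$ says $G$ has a $k$-clique. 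Hence $(G,k) \mapsto (G_H,k',f)$ is a correct many-one reduction.

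To see that it is an fpt-reduction: $G_H$ and $f$ are constructible in time $h(k) + n^{O(1)}$ (with $h$ computable) as noted at the end of Section \ref{construction}, and for the decision version $\alpha_H$ need not even be computed; the new parameter $k'$ is bounded by a computable function of $k$ by Proposition \ref{contains-grid}. Combined with the W[1]-membership already noted, this yields W[1]-completeness. The step requiring the most care is the bookkeeping around minimality: one must use that $H$ is minimal in $\min^*(\phi_{k'})$ (every proper subgraph fails $\phi_{k'}$ under \emph{every} labelling), rather than merely minimal as a labelled subgraph in $\min(\phi_{k'})$, so that Lemma \ref{H-minimal} genuinely applies to the full satisfier set $\mathcal{H}_{\phi_{k'}}$, and so that Lemma \ref{iso-condition} can be invoked to conclude that any colourful tuple of $G_H$ satisfying $\phi_{k'}$ must in fact span a copy of $H$; everything else is a routine assembly of the construction of Section \ref{grid-minors} with the lemmas proved there.
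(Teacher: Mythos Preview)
Your proposal is correct and follows essentially the same approach as the paper: reduce from \paramdec{Clique}, use Proposition~\ref{contains-grid} to extract $H\in\min^*(\phi_{k'})$ containing the grid minor, build $G_H$ with colouring $f$, and argue that a colourful witness for $\phi_{k'}$ in $G_H$ exists iff $G$ has a $k$-clique. The only difference is presentational: the paper's proof cites Lemma~\ref{count-cliques} directly and leaves the equivalence between ``$(G_H,k',f)$ is a yes-instance'' and ``there is a colourful copy of $H$'' implicit (relying on Lemma~\ref{iso-condition} plus minimality of $H$), whereas you spell this out via the chain $\ColStrEmb(\mathcal{H}_{\phi_{k'}},G_H,f)=\alpha_H\cdot\Clique_k(G)$ in the style of the proof of Theorem~\ref{exact-hard}; both arguments are equivalent.
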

\begin{proof}
We prove W[1]-completeness by means of an fpt-reduction from \paramdec{Clique}.  Let $G$  be the input in an instance of \paramdec{Clique} with parameter $k$.  By assumption, $\min^*(\Phi)$ has unbounded treewidth, so by Proposition \ref{contains-grid} there exists $k' \in \mathbb{N}$ (where $k'$ is at most some computable function of $k$) such that some $H \in \min^*(\phi_{k'})$ contains the $(k \times \binom{k}{2})$ grid as a minor.

Now, we can construct the graph $G_H$ and colouring $f$ in time $h(k) + n^{O(1)}$, where $h$ is a computable function.  Moreover, we know from Lemma \ref{count-cliques} that 
$$ \ColSubInd(H,G_H,f) = \Clique_k(G),$$
so in particular the number of colourful subgraphs of $G_H$ (with respect to the colouring $f$) that are isomorphic to $H$ is non-zero if and only if the number of $k$-cliques in $G$ is non-zero.  Since the parameter, $k'$, in the instance of \paramdec{\genprobcol}$(\Phi)$ is bounded by a computable function of $k$, this gives a fpt-reduction from \paramdec{Clique} to \paramdec{\genprobcol}$(\Phi)$, as required, completing the proof of the theorem.
\end{proof}

Together with Corollary \ref{bddtw-mondec}, this immediately implies a dichotomy for the special case in which $\Phi$ is a uniformly monotone property (recall from the definition of a uniformly monotone property $\Phi$ that the class of labelled graphs $\min(\Phi)$ has bounded treewidth if and only if the class of unlabelled graphs $\min^*(\Phi)$ has bounded treewidth).

\begin{cor}
Let $\Phi$ be a family $(\phi_1,\phi_2,\ldots)$ of functions $\phi_k: \{0,1\}^{\binom{k}{2}} \rightarrow \{0,1\}$, such that the function mapping $k \mapsto \phi_k$ is computable, and suppose further that $\Phi$ is a uniformly monotone property.  Then \paramdec{\genprobcol}$(\Phi)$ is in FPT if there exists an integer $t$ such that all graphs in $\min(\Phi)$ have treewidth at most $t$; otherwise \paramdec{\genprobcol}$(\Phi)$ is W[1]-complete.
\label{decision-dichot}
\end{cor}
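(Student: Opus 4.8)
The plan is to obtain the dichotomy directly from the two results already in hand: the tractable direction from Corollary~\ref{bddtw-mondec}, and the intractable direction from Theorem~\ref{decision-hard}. The only extra ingredient is the observation — built into the definition of a uniformly monotone property — that $\min^*(\phi_k) = \{H : \exists \pi,\ (H,\pi) \in \min(\phi_k)\}$ for each $k$, so that a labelled graph in $\min(\Phi)$ and its underlying unlabelled graph have the same treewidth; consequently $\min(\Phi)$ is a class of graphs of bounded treewidth if and only if $\min^*(\Phi)$ is.

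First I would dispose of the tractable case. Suppose there is an integer $t$ such that every graph in $\min(\Phi)$ has treewidth at most $t$. Since a uniformly monotone property is in particular monotone, Corollary~\ref{bddtw-mondec} applies and gives that \paramdec{\genprobcol}$(\Phi)$ is in FPT. (Membership of \paramdec{\genprobcol}$(\Phi)$ in W[1] for arbitrary $\Phi$ was noted in Section~\ref{model}, which is what makes the assertion of W[1]-\emph{completeness} in the other case meaningful.)

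Next I would treat the hard case. Suppose no such $t$ exists, i.e.\ $\min(\Phi)$ has unbounded treewidth. By the equivalence noted above, $\min^*(\Phi)$ then also has unbounded treewidth, so Theorem~\ref{decision-hard} applies verbatim and yields that \paramdec{\genprobcol}$(\Phi)$ is W[1]-complete. As the two cases are exhaustive and mutually exclusive, the dichotomy follows.

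I do not expect any real obstacle here: all the substantive work has already been carried out in Theorem~\ref{decision-hard} (via the grid-minor construction of Section~\ref{grid-minors} and the analysis culminating in Lemma~\ref{count-cliques}) and in the cited Corollary~\ref{bddtw-mondec}. The only point needing even minor care is verifying that uniform monotonicity is precisely what licenses transferring a bounded-treewidth hypothesis between $\min(\Phi)$ and $\min^*(\Phi)$, and this is immediate from the definitions.
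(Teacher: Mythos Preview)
Your proposal is correct and matches the paper's own argument exactly: the paper simply states that the dichotomy follows from Theorem~\ref{decision-hard} together with Corollary~\ref{bddtw-mondec}, invoking precisely the observation that for a uniformly monotone property $\min(\Phi)$ has bounded treewidth if and only if $\min^*(\Phi)$ does.
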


Subject to the assumption that W[1] is not equal to FPT under randomised parameterised reductions, Theorem \ref{decision-hard} also implies an inapproximability result for the counting problem in the multicolour setting.

\begin{cor}
Assume that W[1] is not equal to FPT under randomised parameterised reductions, and let $\Phi$ be a family $(\phi_1,\phi_2,\ldots)$ of functions $\phi_k: \mathcal{L}(k) \rightarrow \{0,1\}$ such that the function $k \mapsto \phi_k$ is computable.  Then, if $\min^*(\Phi)$ has unbounded treewidth, there is no FPTRAS for \paramcount{\genprobcol}$(\Phi)$.
\label{approx-hard}
\end{cor}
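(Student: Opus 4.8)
The plan is to derive this corollary directly from Theorem \ref{decision-hard} together with the machinery already developed in Section \ref{relationships}. By Theorem \ref{decision-hard}, the hypothesis that $\min^*(\Phi)$ has unbounded treewidth implies that \paramdec{\genprobcol}$(\Phi)$ is W[1]-complete. Corollary \ref{no-decision} states exactly that, under the assumption that W[1] is not equal to FPT under randomised parameterised reductions, W[1]-completeness of \paramdec{\genprobcol}$(\Phi)$ rules out the existence of an FPTRAS for \paramcount{\genprobcol}$(\Phi)$. So the proof is simply to chain these two invocations together, and the entire substantive content has already been established.

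For completeness I would spell out why Corollary \ref{no-decision} applies here, since all the pieces are at hand. If \paramcount{\genprobcol}$(\Phi)$ admitted an FPTRAS, then by the multicolour version of Proposition \ref{random-dec} there would be a randomised fpt-algorithm for \paramdec{\genprobcol}$(\Phi)$ that returns YES with probability greater than $1/2$ on yes-instances and never errs on no-instances. Precomposing this with the deterministic, answer-preserving fpt-reduction from \paramdec{Clique} to \paramdec{\genprobcol}$(\Phi)$ constructed in the proof of Theorem \ref{decision-hard} yields a randomised fpt-algorithm solving \paramdec{Clique} with one-sided error (equivalently, a randomised (fpt, many-one) reduction from \paramdec{Clique} to a trivially solvable problem). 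Since \paramdec{Clique} is W[1]-complete, this contradicts the assumption that W[1] $\neq$ FPT under randomised parameterised reductions, and the corollary follows.

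There is no genuine obstacle: all the technical work lives in Theorem \ref{decision-hard} (hence in the grid-minor construction of Section \ref{grid-minors}) and in Proposition \ref{random-dec}. The only point meriting a line of care is that the reduction of Theorem \ref{decision-hard} is a many-one fpt-reduction preserving yes/no instances exactly, so composing it with the one-sided-error randomised decision procedure leaves the error one-sided, which is precisely what the definition of a randomised fpt reduction demands; this is immediate from the construction, and the parameter bound is a computable function of $k$ for the same reason as in Theorem \ref{decision-hard}.
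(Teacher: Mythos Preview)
Your proposal is correct and follows essentially the same route as the paper: invoke the W[1]-completeness of \paramdec{\genprobcol}$(\Phi)$ from Theorem \ref{decision-hard} and then apply Corollary \ref{no-decision}. Your citation of Theorem \ref{decision-hard} is in fact slightly more apt than the paper's own reference to the dichotomy corollary (which assumes uniform monotonicity, not needed here), and your extra paragraph unpacking Corollary \ref{no-decision} is sound but unnecessary for the argument.
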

\begin{proof}
This follows immediately from Theorem \ref{decision-dichot} above, together with Lemma \ref{no-decision}.
\end{proof}

In the case that $\Phi$ is a uniformly monotone property, we also obtain a dichotomy result for the approximability of the counting problem.

\begin{cor}
Assume that W[1] is not equal to FPT under randomised parameterised reductions.  Let $\Phi$ be a family $(\phi_1,\phi_2,\ldots)$ of functions $\phi_k: \mathcal{L}(k) \rightarrow \{0,1\}$ such that the function $k \mapsto \phi_k$ is computable, and suppose further that $\Phi$ is a uniformly monotone property.  Then there is an FPTRAS for \paramcount{\genprobcol}$(\Phi)$ if and only if $\min(\Phi)$ has bounded treewidth.
\end{cor}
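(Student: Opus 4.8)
The plan is to prove the two implications separately; both are short, since the real work has already been done in Theorem \ref{tw-fptras} and Corollary \ref{approx-hard}, and what remains is a translation between the labelled notion of minimality ($\min(\Phi)$) and the unlabelled one ($\min^*(\Phi)$) that is available precisely because $\Phi$ is uniformly monotone.

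First I would dispose of the "if" direction. Suppose $\min(\Phi)$ has bounded treewidth. A uniformly monotone property is in particular monotone, so Theorem \ref{tw-fptras} applies directly and delivers an FPTRAS for \paramcount{\genprobcol}$(\Phi)$. No further argument is needed here.

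For the converse I would argue the contrapositive: assume $\min(\Phi)$ does \emph{not} have bounded treewidth, and deduce that no FPTRAS exists. The single substantive step is to move from $\min(\Phi)$ to $\min^*(\Phi)$. Since $\Phi$ is uniformly monotone we have $\min^*(\phi_k) = \{H : \exists \pi \text{ with } (H,\pi) \in \min(\phi_k)\}$ for every $k$, and the treewidth of a labelled graph $(H,\pi)$ equals the treewidth of its underlying graph $H$ (a tree decomposition does not depend on the labelling). Hence $\min(\Phi)$ has bounded treewidth if and only if $\min^*(\Phi)$ does — this is exactly the observation recorded when uniformly monotone properties were introduced. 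Our assumption therefore gives that $\min^*(\Phi)$ has unbounded treewidth, and Corollary \ref{approx-hard} (using the standing hypothesis that W[1] is not equal to FPT under randomised parameterised reductions) immediately yields that there is no FPTRAS for \paramcount{\genprobcol}$(\Phi)$, completing the dichotomy.

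I do not expect a genuine obstacle in this argument; the only place that requires care is the explicit use of uniform monotonicity in the equivalence ``$\min(\Phi)$ bounded treewidth $\iff$ $\min^*(\Phi)$ bounded treewidth''. For a general monotone property this can fail — as Figure \ref{non-uniform} shows, $\min^*(\phi_k)$ may contain proper subgraphs of underlying graphs of elements of $\min(\phi_k)$, so the two classes need not coincide and the reduction to Corollary \ref{approx-hard} would no longer be justified. Flagging that this is the one point where the hypothesis is invoked is, I think, all that is required to make the proof rigorous.
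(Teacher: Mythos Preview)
Your proposal is correct and follows essentially the same approach as the paper: the ``if'' direction is Theorem \ref{tw-fptras}, and the ``only if'' direction is Corollary \ref{approx-hard}, with the passage from $\min(\Phi)$ to $\min^*(\Phi)$ justified by the uniform monotonicity of $\Phi$. The paper's proof is more terse and leaves this translation implicit (relying on the observation already made where uniformly monotone properties are introduced), whereas you spell it out explicitly; otherwise the arguments are identical.
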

\begin{proof}
The existence of an FPTRAS is precisely Theorem \ref{tw-fptras}, and the reverse direction is Theorem \ref{approx-hard} above.
\end{proof}

\section{Future directions}
\label{future}

There has been significant recent progress in the area of parameterised subgraph counting, but there remain many fundamental 
open questions, a number of which have already been touched on above.  In this section we concentrate on uniformly monotone properties, which formed the main focus of this paper, and identify the key open questions.

We begin by considering the situation for exact counting.  The only problems known to admit fpt-algorithms for exact counting are the ``trivial'' problems where the property can be decided, for any given $k$-vertex labelled subgraph, by examining edges incident with only a constant-sized subset of the vertices; in the case of uniformly monotone properties, this corresponds precisely to those properties $\Phi$ such that $\min(\Phi)$ is a class of graphs with bounded vertex-cover number.  It is straightforward to adapt existing algorithms (such as the one given in \cite{radu14} to solve \paramcount{Sub}$(\mathcal{H})$ when $\mathcal{H}$ is a class of graphs of bounded vertex-cover number) to give polynomial-time algorithms for such trivial properties.

However, the best general hardness result for uniformly monotone problems is Theorem \ref{exact-hard}, which shows that if $\min(\Phi)$ is a class of graphs of unbounded treewidth then \paramcount{\genprob}$(\Phi)$ is \#W[1]-complete.  It is known that this hardness result is not the best possible: \paramcount{Connected Induced Subgraph} is an example of a uniformly monotone property such that all minimal elements have bounded treewidth, but which is nevertheless known to be \#W[1]-complete \cite{connected}.

The recent result of Curticapean and Marx \cite{radu14} is a significant breakthrough towards closing this gap, giving a complexity dichotomy for the problem \paramcount{Sub}$(\mathcal{C})$ (showing in fact that, unless FPT $\neq$ \#W[1], the trivial problems are the only ones to admit fpt-algorithms in this setting).  However, the techniques used seem to be very specific to the situation in which there is a unique minimal element satisfying the property $\Phi$, and so do not transfer easily to the more general case of uniformly monotone properties.  Nevertheless, we conjecture that the same dichotomy holds for uniformly monotone properties:

\begin{conj}
Suppose that $\Phi$ is a uniformly monotone property, and that there is no constant $c$ such that all graphs in $\min(\Phi)$ have vertex-cover number at most $c$.  Then \paramcount{\genprob}$(\Phi)$ is \#W[1]-complete.
\end{conj}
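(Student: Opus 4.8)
The plan is as follows. Membership of \paramcount{\genprob}$(\Phi)$ in \#W[1] is already given by Proposition \ref{in-W}, so the task is to establish \#W[1]-hardness, and by Lemma \ref{uncol-col} it suffices to prove hardness of the multicolour problem \paramcount{\genprobcol}$(\Phi)$. I would split the argument according to the treewidth of $\min(\Phi)$ (equivalently, since $\Phi$ is uniformly monotone, of $\min^*(\Phi)$). If $\min^*(\Phi)$ has unbounded treewidth, the conclusion is exactly Theorem \ref{exact-hard} and nothing further is needed. The substance of the conjecture therefore lies in the remaining case: $\min^*(\Phi)$ has bounded treewidth $t$ but unbounded vertex-cover number.

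The key structural observation for this case is that a graph of vertex-cover number $\tau$ contains a matching of size at least $\tau/2$ (the endpoints of a maximum matching form a vertex cover), so unbounded vertex-cover number forces $\min^*(\Phi)$ to contain graphs with arbitrarily large matchings. Hence, for every $j$ there is a $k$ — bounded by a computable function of $j$, obtained by enumerating $\phi_1,\phi_2,\dots$ — and some $H \in \min^*(\phi_k)$ containing a matching on $2j$ vertices; this is the analogue, in the bounded-treewidth regime, of Proposition \ref{contains-grid}. The aim would then be to reduce the \#W[1]-complete problem \paramcount{Matching} (Curticapean \cite{radu13}, or its bipartite strengthening in \cite{radu14}) to \paramcount{\genprobcol}$(\Phi)$, by building, from an input graph $G$ in which we wish to count $j$-edge matchings, a coloured host graph $G_H$ playing the role of the construction of Section \ref{construction}: copies of the pieces of a fixed decomposition of $H$ along its matching are duplicated once per vertex/edge of $G$, with inter-piece edges constrained by colour and by a consistency condition so that colourful copies of $H$ in $G_H$ correspond bijectively to $j$-edge matchings of $G$. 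The two containment directions and the colourfulness bookkeeping should mirror Lemmas \ref{iso-condition}, \ref{count-cliques} and \ref{H-minimal}, with ``matching'' in place of ``clique''.

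The genuine difficulty — and the reason this remains a conjecture — is exactly the obstruction that rules out the Excluded-Grid approach here: for a fixed $k$ the set $\min^*(\phi_k)$ may contain many non-isomorphic minimal graphs, several of which could contain large matchings, so an oracle for \paramcount{\genprobcol}$(\Phi)$ on $G_H$ returns a sum of contributions over \emph{all} of $\mathcal{H}_{\phi_k}$, and unlike in Theorem \ref{exact-hard} we cannot arrange that $H$ is the unique minimal graph embedding in $G_H$. Disentangling these contributions seems to require the interpolation and matrix-inversion machinery of \cite{radu14}: one would attach gadgets of varying sizes to $G_H$, obtain a linear system whose unknowns are the numbers of (colour-preserving) matchings of each relevant size, and prove that the coefficient matrix — assembled from the automorphism and embedding data of the finitely many minimal graphs on $k$ vertices, and from the Ramsey-type dichotomy of \cite{radu14} controlling which of a large clique, large biclique, or large induced matching actually appears — is non-singular. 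Establishing this non-singularity in the required generality is, I expect, the main obstacle; it is precisely the point where the argument of \cite{radu14} exploits features special to the single-minimal-graph problem \paramcount{Sub}$(\mathcal{C})$. If it can be pushed through, the reduction combined with the easy adaptation of the bounded-vertex-cover algorithm of \cite{radu14} yields the full dichotomy.
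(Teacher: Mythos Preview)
The statement you were given is a \emph{conjecture}, not a theorem: the paper offers no proof of it. It appears in Section~\ref{future} (``Future directions'') precisely as an open problem, with the paper's own commentary identifying the obstacle as the fact that the Curticapean--Marx techniques ``seem to be very specific to the situation in which there is a unique minimal element satisfying the property $\Phi$, and so do not transfer easily to the more general case of uniformly monotone properties.''

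Your proposal is therefore not a proof to be compared against the paper's, but a sketch of a possible attack on an open problem, and you are explicit about this. Your analysis is accurate and aligns closely with the paper's own framing: you correctly isolate the already-settled unbounded-treewidth case (Theorem~\ref{exact-hard}), correctly observe that unbounded vertex-cover number with bounded treewidth forces arbitrarily large matchings (so a reduction from \paramcount{Matching} is the natural target), and correctly pinpoint the obstruction --- disentangling contributions from multiple non-isomorphic minimal graphs in $\min^*(\phi_k)$ --- as the reason the \cite{radu14} machinery does not transfer directly. This is exactly the difficulty the paper flags. One small comment: your reduction sketch via ``copies of the pieces of a fixed decomposition of $H$ along its matching'' is vaguer than the grid-minor construction it is meant to parallel, and it is not clear that the analogue of Lemma~\ref{iso-condition} (every colourful subset induces a subgraph of $H$) would hold, since a matching imposes far weaker structural constraints than a grid minor; this is a second, more concrete, manifestation of the same obstacle. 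In short, there is nothing to grade against --- the paper and you agree that the statement is open for the reason you give.
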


Considering now the situation for decision problems, we were able to give a dichotomy result for the multicolour version (Theorem \ref{decision-dichot}), showing that \paramdec{\genprobcol}$(\Phi)$ is W[1]-complete if $\min(\Phi)$ is a class of graphs having unbounded treewidth, and belongs to FPT otherwise.  However, there remains a gap when we consider the uncoloured version: we know (Corollary \ref{bddtw-mondec}) that \paramdec{\genprob}$(\Phi)$ is in FPT if $\min(\Phi)$ has bounded treewidth, but the corresponding hardness result does not extend to the uncoloured version.  Flum and Grohe \cite{flumgrohe} conjectured that \paramdec{Sub}$(\mathcal{C})$ is W[1]-complete whenever $\mathcal{C}$ is a class of graphs having unbounded treewidth, mentioning in particular the case in which $\mathcal{C} = \{K_{k,k}: k \in \mathbb{N}\}$ (the set of all complete bipartite graphs with $k$ vertices in each vertex-class); this specific problem was recently shown to be W[1]-complete by Lin \cite{lin14}, but the general conjecture remains open.  We extend this conjecture to encompass all uniformly monotone properties:

\begin{conj}
Suppose that $\Phi$ is a uniformly monotone property, and that there is no constant $c$ such that all graphs in $\min(\Phi)$ have treewidth at most $c$.  Then \paramdec{\genprob}$(\Phi)$ is W[1]-complete.
\end{conj}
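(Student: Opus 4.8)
The plan is to follow the strategy already used for the multicolour case in Theorem~\ref{decision-hard}: reduce from \paramdec{Clique} via a graph that contains a large grid-like minor, but replace the colouring — which in the construction of Section~\ref{construction} is exactly what forces an embedding of $H$ to pick one vertex from each vertex-class — by a purely structural mechanism. As in Proposition~\ref{contains-grid}, given the parameter $k$ of the \paramdec{Clique} instance one selects $k'$ (bounded by a computable function of $k$) and some $H \in \min^*(\phi_{k'})$ containing a large grid as a minor, and then builds an \emph{uncoloured} graph $G'$ whose subgraphs isomorphic to $H$ are in bijection with the $k$-cliques of $G$. Since $\Phi$ is uniformly monotone, \paramdec{\genprob}$(\Phi)$ on $(G',k')$ amounts to asking whether $G'$ contains some element of $\min^*(\phi_{k'})$ as a subgraph, so it would suffice to arrange that $G'$ contains no member of $\min^*(\phi_{k'})$ at all unless $G$ has a $k$-clique; membership of \paramdec{\genprob}$(\Phi)$ in W[1] is already known, so only hardness needs to be established.

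The first technical ingredient I would use is to pass from grid minors to \emph{wall} (hexagonal-grid) topological minors: by the Excluded Grid Theorem (Theorem~\ref{excluded-grid}) together with the well-known fact that a large grid minor yields a large wall topological minor, $H$ contains a subdivided wall as a subgraph, and a wall is subcubic and $2$-connected, so such a model has only a bounded-by-$k$ number of branch vertices. Rigidity of subdivided walls — the scarcity of high-degree vertices combined with $2$-connectivity — is the standard substitute for colours; this is exactly the device Grohe used for \paramdec{Hom}$(\mathcal{C},-)$ in \cite{grohe07}, where the ``core has unbounded treewidth'' hypothesis guarantees that no vertex-class of the encoded grid can be skipped. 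Concretely, one would build $G'$ by taking, for each branch vertex of the wall model in $H$, many copies of the corresponding piece of $H$ indexed (as in Section~\ref{construction}) by pairs from $V(G) \times E(G)$, joining them according to the edges of $H$ subject to consistency conditions analogous to Conditions~1 and~2 of Section~\ref{construction}, and then proving analogues of Lemma~\ref{iso-condition} and Lemma~\ref{count-cliques}: any subgraph of $G'$ isomorphic to $H$ must be ``aligned'' and hence yields a $k$-clique in $G$, and conversely. W[1]-completeness of \paramdec{Clique} then finishes the argument.

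The main obstacle — and the reason this remains a conjecture rather than a theorem — is the step ensuring that $G'$ contains no copy of any $H' \in \min^*(\phi_{k'})$ unless $G$ has the required clique. In the multicolour construction this is immediate from Lemma~\ref{H-minimal}, because a colourful vertex subset of $G_H$ can only induce a subgraph of $H$; once the colouring is removed, $G'$ may contain ``rogue'' copies of $H$ (or of other minimal graphs on $k'$ vertices) that ignore the intended block structure, for instance by folding several vertex-classes into a region of a single bag $m(a)$ that happens to be large, or by exploiting accidental automorphisms of $H$. Ruling these out for an \emph{arbitrary} large-treewidth $H$ — whose bags $m(a)$ and leftover set $V_H'$ may be of unbounded size and unconstrained structure — is precisely the content of the long-standing conjecture of Flum and Grohe \cite{flumgrohe} that \paramdec{Sub}$(\mathcal{C})$ is W[1]-complete for every class $\mathcal{C}$ of unbounded treewidth, which is the special case of the present conjecture where each $\min^*(\phi_k)$ has at most one element. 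I would therefore expect partial progress to come first for restricted families — $\min^*(\Phi)$ consisting of subdivided walls or grids, of $2$- or $3$-connected graphs as in Figure~\ref{3-connected}, or of bounded-degree graphs — generalising Lin's recent resolution of the $K_{k,k}$ case \cite{lin14}; a secondary difficulty, specific to the move from \paramdec{Sub}$(\mathcal{C})$ to general uniformly monotone $\Phi$, is the disjunctive nature of the problem (many minimal graphs per $k$), which seems to demand that the reduction simultaneously suppress copies of every competing minimal graph rather than a single target.
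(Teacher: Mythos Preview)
The statement you are addressing is a \emph{conjecture} in the paper, not a theorem: the paper provides no proof and explicitly presents it as an open problem in Section~\ref{future}, extending the Flum--Grohe conjecture on \paramdec{Sub}$(\mathcal{C})$. There is therefore no ``paper's own proof'' to compare against.

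Your write-up correctly recognises this: you sketch a plausible attack (topological wall minors in place of grid minors, rigidity of subcubic $2$-connected graphs as a substitute for colours, a Grohe-style alignment argument) and then identify precisely the point at which it breaks down --- ruling out ``rogue'' copies of $H$, or of other elements of $\min^*(\phi_{k'})$, once the colouring is removed. That diagnosis matches the paper's own discussion (Section~\ref{ex-grid} and Section~\ref{future}) of why the multicolour hardness result of Theorem~\ref{decision-hard} does not transfer to the uncoloured setting. So as a research proposal your text is sensible and well-informed, but it is not a proof and does not claim to be one; the genuine gap you name --- controlling uncoloured embeddings of an arbitrary large-treewidth $H$ into the gadget graph --- is exactly the open problem.
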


Observe that, by Lemma \ref{no-decision}, and subject to the assumption that W[1] is not equal to FPT under randomised parameterised reductions, this would also imply a dichotomy for the existence of an FPTRAS for uniformly monotone properties.  

It was noted in Section \ref{relationships} that recent developments in the theory of parameterised random complexity classes \cite{montoya13,chauhan14} are based on a more restricted notion of parameterised randomisation than is allowed in the definition of an FPTRAS.  An interesting general direction for future research in parameterised counting complexity, therefore, would be to develop an alternative notion of approximability for parameterised counting problems that places corresponding restrictions on the permitted use of randomness;  this would allow more precise conclusions to be drawn about the relationship between the approximability of counting problems and the complexity of the corresponding decision problems.  Having developed such a framework for approximability, a natural first question to consider would be whether the parameterised subgraph counting problems admitting an FPTRAS also satisfy this stronger notion of approximability.

\section*{Acknowledgements}

I am very grateful to Mark Jerrum for his helpful comments on several previous drafts of this paper.

\section*{References}

\end{document}